\documentclass[11pt]{article}

\usepackage[letterpaper,margin=1in]{geometry}
\usepackage[T1]{fontenc}
\usepackage[american]{babel}
\usepackage{lmodern}
\usepackage{microtype}

\usepackage{amsmath}
\usepackage{amssymb}
\usepackage{amsthm}
\usepackage{mathtools}
\usepackage{bm}

\usepackage{enumitem}

\usepackage{graphicx}
\usepackage{booktabs}
\usepackage{array}
\usepackage{tabularx}
\usepackage{placeins}

\graphicspath{{figures/}}

\usepackage[numbers,sort&compress]{natbib}
\usepackage[hidelinks]{hyperref}

\usepackage{authblk}

\theoremstyle{plain}
\newtheorem{theorem}{Theorem}
\newtheorem{proposition}{Proposition}
\newtheorem{corollary}{Corollary}

\newtheorem{assumption}{Assumption}

\theoremstyle{definition}
\newtheorem{definition}{Definition}

\theoremstyle{remark}

\title{Space--Time Information Interchangeability in Dynamical Systems:\\
Conditions and Bounds for Replacing Spatial Sensors with Temporal Histories}

\author[1]{Maryam Reza}
\author[2]{Farbod Faraji}

\affil[1]{Independent Researcher}
\affil[2]{Department of Computing, Imperial College London, London, UK}

\date{}

\begin{document}

\maketitle

\begin{abstract}
Many dynamical-state reconstruction problems seek to infer a high-dimensional spatial state from measurements at only a few locations. Because the governing dynamics couple evolution across space and time, temporal measurements can contain information about state components beyond the sensor locations. This work develops a theoretical framework comparing the information content of a temporal measurement history with that of a specified instantaneous spatial sensor array. The temporal history and spatial reference array are represented by prior-whitened information operators that account for prior variability, sensor geometry, and measurement noise while preserving the directional distribution of uncertainty reduction. A general space--time information-exchange theorem gives necessary and sufficient conditions for retaining a prescribed fraction of the spatial-reference information in every state direction. It answers three practical questions. First, how many temporal measurements and how much total information are required? Second, is the prescribed fraction attainable, or does a limiting information ceiling rule out every finite history? Third, when attainable, what history length is sufficient? Explicit history-length bounds are derived for contractive, spectrally separated unitary, and simultaneously diagonalizable dissipative linear dynamics, and for stochastic linear estimation under Gaussian process and measurement noise. For nonlinear systems, global delay-map injectivity is combined with a uniform local Fisher-information condition. Numerical experiments test directional exchange lengths, information ceilings, and finite-memory bounds, and assess nonlinear delay-map distinguishability and local Fisher-information predictions. The framework provides an explicit basis for determining when temporal measurements can reduce spatial sensor coverage for full-state reconstruction.
\end{abstract}

\noindent\textbf{Keywords:}
space--time sensor exchange; dynamical sampling; observability; information operators;
sparse sensing; state reconstruction; Bayesian estimation; Fisher information.

\section{Introduction}
\label{sec:introduction}

Reconstruction of a spatial state distribution from sparse measurements arises throughout plasma physics, fluid dynamics, climate modeling, and structural monitoring. After spatial discretization, the state may contain thousands or millions of variables, whereas an experimental system often provides measurements at only a small number of locations. An instantaneous inverse problem is therefore strongly underdetermined unless additional physical or statistical structure is introduced.

Dynamical evolution provides one such source of structure, making sparse temporal measurements informative about a spatially distributed state. In systems governed by partial differential equations, variations in space and time are coupled through processes such as advection, diffusion, wave propagation, and nonlinear interactions. These processes transport, disperse, deform, and reorganize spatial structures as the system evolves. A fixed sensor therefore does not repeatedly observe the same local information; instead, its temporal record reflects the passage and transformation of structures originating from different parts of the domain, as well as their coupling to other dynamically active quantities. Characteristic temporal signatures can contain information about spatial patterns, regions, and variables that are not measured directly. A sufficiently informative measurement history may therefore provide constraints comparable to those obtained from a larger simultaneous spatial sensor array. Recent recurrent-decoder data-driven architectures, such as Shallow Recurrent Decoder (SHRED), build on these ideas by learning a mapping from temporal histories to high-dimensional spatial states \cite{Williams2024SHRED,Reza2024SHRED,Faraji2025SHRED,Tomasetto2025SHRED}.

In SHRED, a recurrent encoder processes a finite history from a small number of sensors and encodes the information carried by their temporal evolution into a latent representation, which a decoder maps to the full spatial field. Its empirical success \cite{Reza2024SHRED,Faraji2025SHRED} demonstrates that temporal evolution can expose spatial information to fixed sensors, but reconstruction accuracy alone does not establish when this substitution is possible. A history may constrain some state directions well while leaving others poorly observed, and disagreement may arise either because the measurements lack sufficient information or because the decoder fails to extract information that is present. This motivates an architecture-independent theory that determines how much of the information supplied by a dense instantaneous array is retained by a sparse temporal history, whether the desired level is attainable, and how long the history must be.

This principle is closely connected to several established mathematical frameworks. Dynamical sampling studies state recovery from measurements obtained by repeatedly applying an evolution operator and sampling the resulting trajectory \cite{Aldroubi2013,Aldroubi2015,Aldroubi2017,AldroubiSurvey2026}. In this setting, temporal observations generate an orbit of effective sensing vectors, and stable recovery is characterized through frame conditions. Recent developments address noisy graph signals and sensor placement \cite{AldroubiGraph2024}, nonuniform sampling \cite{ZhangNonuniform2017}, and the design of dynamical frames \cite{AguileraOptimal2026}. Delay-coordinate reconstruction provides a complementary nonlinear perspective: Takens-type theorems establish generic conditions under which histories of partial observations embed a finite-dimensional attractor \cite{Takens1981,Sauer1991}. Stable embedding results also examine the preservation of distances and local geometry under the delay map \cite{YapRozell2011,Eftekhari2018,PanDuraisamy2020}. Observability theory supplies related rank and differential conditions for determining whether distinct states can be distinguished from their output trajectories \cite{HermannKrener1977}. Bayesian estimation theory provides the corresponding statistical description of uncertainty. Gaussian information matrices and posterior covariance formulas relate sensor geometry and measurement noise to the uncertainty remaining after observations are incorporated \cite[Sec.~10.6, pp.~325--328]{Kay1993}. Kalman filtering extends this formulation to stochastic dynamical systems \cite[Sec.~3.1, pp.~46--55]{AndersonMoore1979}. Additionally, Riccati convergence and exponential forgetting describe how rapidly the contribution of increasingly remote observations to current-state estimation decreases \cite{HagerHorowitz1976,Bougerol1993,Kozdoba2019}.

These theories establish important conditions for state recovery from measurements distributed across space and time. A question that is not directly resolved by any one framework, however, is how a temporal sensing design compares with a specified simultaneous spatial reference array. Building on these foundations, the present study quantifies the trade-off between spatial sensor coverage and temporal measurement history. Specifically, it addresses the following question:

\emph{For a specified target state, a sparse array of temporal measurements, and a denser spatial array of instantaneous measurements, what conditions guarantee that the temporal measurement history retains a prescribed fraction of the information provided by the spatial array, and what history length is required to achieve this level of information equivalence?}

The number of measurements alone is not sufficient to determine whether the target state can be identified and recovered stably. State identifiability depends on whether the observation map distinguishes all admissible system states. Stable recovery further requires this map to be sufficiently well conditioned, so that small measurement noise does not produce large reconstruction errors. In a Bayesian formulation, the comparison also depends on the prior distribution of the target state, because information gained along state-space directions with large prior variability may be more consequential than information gained along directions that are already tightly constrained.

The spatial and temporal sensing arrangements are therefore compared through a direction-resolved information operator, which quantifies the reduction in prior uncertainty achieved by the measurements along each state-space direction---that is, along a particular state variable, spatial pattern, dynamical mode, or linear combination of these---while accounting for both prior variability and measurement noise. Unlike sensor count, matrix rank, or scalar mutual information alone, this operator preserves the directional distribution of information and therefore reveals which aspects of the state are well observed and which remain weakly constrained.

Let $K_{\mathcal{T}}(L)$ represent the information about the target state contained in a temporal measurement history of length $L$, and let $K_{\mathcal{S}}$ represent the information supplied by the simultaneous spatial reference measurement array. The two sensing arrangements are compared through the operator inequality
\[
K_{\mathcal{T}}(L) \succeq \alpha K_{\mathcal{S}},
\qquad
\alpha \in (0,1].
\]

This inequality requires the temporal history to provide at least the fraction $\alpha$ of the spatial-reference information for every possible variation of the state. Equivalently, for every direction $v$,
\[
v^{\mathsf{T}} K_{\mathcal{T}}(L) v
\geq
\alpha v^{\mathsf{T}} K_{\mathcal{S}} v.
\]

When this condition holds, it yields corresponding bounds on posterior uncertainty and the expected error of the reconstructed state. The central objective is therefore to determine when this information criterion is satisfied and the minimum history length $L$ required to satisfy it. Figure~\ref{fig:space-time-sensing} schematically illustrates the two sensing strategies.

The main contributions are enumerated as follows: \textbf{(i)} a common information-operator formulation is developed for target-state estimation, in which an unknown target state is inferred from temporal measurement histories; \textbf{(ii)} exact space--time information equivalence is characterized using whitened sensing Gramians and an isometry between the corresponding measurement subspaces; \textbf{(iii)} a general exchange theorem establishes directional necessary and sufficient conditions, lower bounds based on matrix rank and total information budget, a limiting information-ceiling test, and conditions guaranteeing that the prescribed information level is reached within a finite history length; \textbf{(iv)} explicit bounds on the required history length are derived for contractive, spectrally separated unitary, and simultaneously diagonalizable dissipative linear dynamics, as well as for finite-memory state estimation in stochastic linear systems through Riccati convergence; \textbf{(v)} the framework is extended to nonlinear systems by combining global delay-map injectivity, which ensures unique state identification from measurement histories, with local Fisher-information geometry, which characterizes sensitivity to nearby state variations in the presence of noise; and \textbf{(vi)} analytical examples and numerical experiments verify the theoretical results at scalar, directional, and full-matrix levels.

To our knowledge, existing approaches do not provide a single reference-based framework that directly compares a temporal measurement history with a specified simultaneous spatial sensor array, determines whether a prescribed directional information level is attainable, and bounds the history length required to attain it.

\begin{figure}[t]
    \centering
    \includegraphics[width=\textwidth]{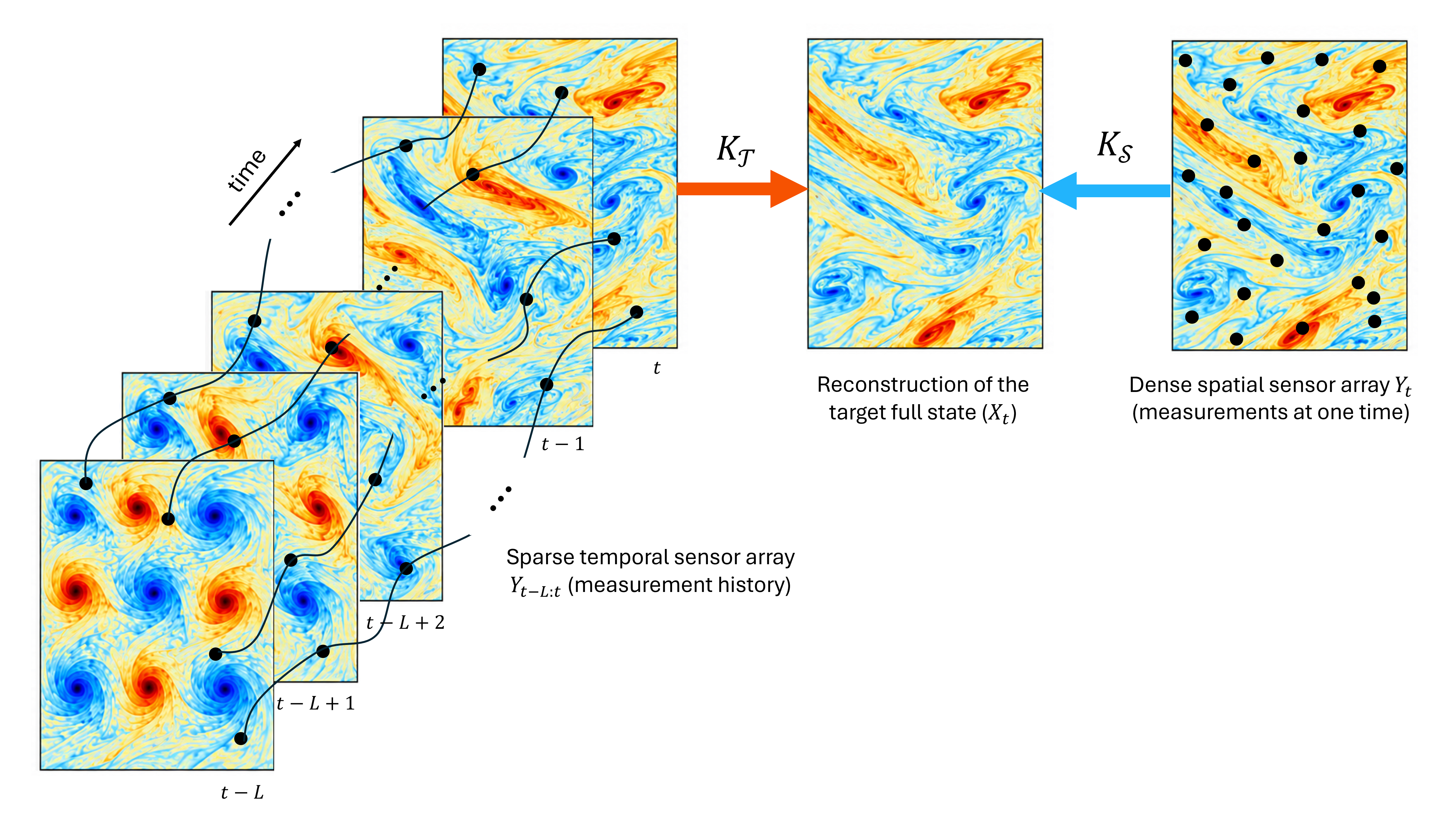}
    \caption{Schematic comparison of simultaneous spatial sensing and sparser temporal sensing for reconstruction of the same target state.}
    \label{fig:space-time-sensing}
\end{figure}

\section{Conceptual and mathematical background}
\label{sec:background}

This section introduces the principal concepts and mathematical tools used in the analysis and explains their connection to the physical problem of reconstructing a high-dimensional state from sparse measurements. It establishes the notation and intuition required for the theoretical results that follow.

\paragraph{State, target, and sensing design.}
The \emph{state} collects all variables needed to describe the system at a chosen time. A spatially discretized system such as a plasma or fluid model may contain local values of density, velocity, pressure, or electromagnetic fields at every grid point. After discretization or projection onto $r$ basis functions, the target state is represented by a vector $X \in \mathbb{R}^{r}$. The target may be the complete spatial values on a numerical grid, a reduced set of modal coefficients, or a selected collection of physically relevant quantities. When a reduced representation is used, its truncation error remains separate from the sensing and reconstruction errors considered below.

A sensing design is a mathematical description of how the state is mapped to observations. In a direct linear measurement model, the observations depend linearly on the state. A simultaneous spatial array is represented by
\begin{equation}
Y_{\mathcal{S}}
=
H_{\mathcal{S}}X+\eta_{\mathcal{S}},
\qquad
\eta_{\mathcal{S}}\sim\mathcal{N}(0,R_{\mathcal{S}}),
\label{eq:spatial-model}
\end{equation}
where $X$ is the target state to be reconstructed, $Y_{\mathcal{S}}$ is the vector of simultaneous sensor measurements, $H_{\mathcal{S}}$ is the spatial observation matrix that specifies what each sensor measures, and $\eta_{\mathcal{S}}$ represents zero-mean Gaussian measurement noise with covariance $R_{\mathcal{S}}$. Each row of $H_{\mathcal{S}}$ describes one measurement. For a point sensor, it may simply select the value of $X$ at one location. For a line-of-sight or integral measurement, it combines values from several locations according to an observation kernel. The covariance matrix $R_{\mathcal{S}}$ describes the magnitude of the measurement errors and their correlations across sensors.

A temporal design uses a smaller physical sensor matrix $C$ repeatedly as the state evolves. For deterministic discrete dynamics $X_{k+1}=AX_k$, the corresponding noiseless measurements at successive times are
\[
Y_{0}=CX_{0},
\qquad
Y_{1}=CAX_{0},
\qquad
Y_{2}=CA^{2}X_{0},
\qquad
\ldots
\]
and can be written together as
\begin{equation}
Y_{\mathcal{T},L}
=
\mathcal{O}_{L}X_{0}+\eta_{\mathcal{T}},
\qquad
\mathcal{O}_{L}
=
\begin{bmatrix}
C\\
CA\\
\vdots\\
CA^{L-1}
\end{bmatrix}.
\label{eq:temporal-stack}
\end{equation}
Here, $\eta_{\mathcal{T}}$ denotes the stacked temporal measurement noise, with covariance $R_{\mathcal{T}}$. The first block, $C$, describes how the sensors measure the initial state. The subsequent blocks, $CA,CA^{2},\ldots$, describe how the same sensors measure the state after one, two, and further steps of dynamical evolution. Although the physical sensors remain fixed, the dynamics can make them observe different combinations of the initial state over time. Thus, $m$ sensors sampled at $L$ times provide $mL$ measurement equations, called effective sensing directions, without adding physical sensors.

\paragraph{Identifiability, observability, and stable recovery.}
A state is identifiable from a measurement history when two distinct admissible states cannot produce the same noiseless observations. In linear systems, this property is called \emph{observability}. For the temporal measurement history in Eq.~\ref{eq:temporal-stack}, observability requires
\begin{equation}
\operatorname{rank}(\mathcal{O}_{L})=r,
\label{eq:observability-rank}
\end{equation}
where $\mathcal{O}_{L}$ is the observability matrix and $r$ is the dimension of the target state. Full rank means that every component or pattern of variation in the state influences the measurement history, so the state is uniquely recoverable in an ideal noise-free model.

Practical reconstruction also depends on stability. If two distinct states generate nearly identical histories, even a small amount of noise can produce a large state-estimation error. The noise-whitened observability matrix is $R_{\mathcal{T}}^{-1/2}\mathcal{O}_{L}$, where multiplication by $R_{\mathcal{T}}^{-1/2}$ rescales and decorrelates the measurements so that the transformed measurement noise has identity covariance. Its singular values then quantify how strongly different state variations appear in the measurements relative to the measurement noise. Stable recovery requires the frame inequality
\begin{equation}
a\lVert x\rVert_{2}^{2}
\leq
\left\lVert R_{\mathcal{T}}^{-1/2}\mathcal{O}_{L}x\right\rVert_{2}^{2}
\leq
b\lVert x\rVert_{2}^{2},
\qquad
0<a\leq b<\infty,
\label{eq:frame-ineq}
\end{equation}
to hold for every state variation $x$, such as the difference between two possible states. The lower frame bound $a$ measures the weakest observable state variation, with larger values indicating greater resistance to noise. The upper frame bound $b$ limits how strongly state variations can be amplified in the measurements. Dynamical sampling studies the conditions under which sensing vectors generated by repeated application of an evolution operator form such stable frames \cite{Aldroubi2017,AldroubiSurvey2026}.

\paragraph{Prior uncertainty, posterior uncertainty, and mutual information.}
A Bayesian formulation represents the target state as a random variable,
\begin{equation}
X\sim\mathcal{N}(\mu,P),
\qquad
P\succ0.
\label{eq:prior-model}
\end{equation}
The prior mean $\mu$ represents the expected state before the new measurements are used, while the covariance $P$ describes the magnitude and correlation of plausible deviations from that mean. Large variance in a particular direction indicates substantial prior uncertainty in that combination of state variables.

After observations $Y$ are incorporated, the remaining uncertainty is described by the posterior covariance
\begin{equation}
P_{\mathrm{post}}
=
\operatorname{Cov}(X\mid Y).
\label{eq:posterior-covariance}
\end{equation}

Mutual information $I(X;Y)$ measures the total amount of uncertainty about $X$ removed by observing $Y$. For jointly Gaussian variables,
\begin{equation}
I(X;Y)
=
\frac{1}{2}
\log
\frac{\det P}{\det P_{\mathrm{post}}}.
\label{eq:gaussian-mutual-information}
\end{equation}

Because mutual information summarizes the complete covariance change by a single scalar value, it does not identify how that reduction is distributed across different state directions. Direction-resolved comparisons of sensing arrangements therefore require a matrix-valued information measure.

\paragraph{The information operator and prior whitening.}
The information operator used in this paper measures the information supplied by the observations after accounting for the prior variability of the state. Prior whitening maps the centered state $X-\mu$ to
\begin{equation}
Z
=
P^{-1/2}(X-\mu),
\label{eq:prior-whitening}
\end{equation}
so that $\operatorname{Cov}(Z)=I$. Thus, in whitened coordinates, every unit variation represents one prior standard deviation, and different directions are uncorrelated. This allows information about variables with different physical units and prior variabilities to be compared on a common scale. In these coordinates, differences between sensing designs reflect the observations and measurement noise rather than unequal prior variances.

For a linear measurement model $Y=HX+\eta$ with $\eta\sim\mathcal{N}(0,R)$, the prior-whitened information operator is
\begin{equation}
K
=
P^{1/2}H^{\mathsf{T}}R^{-1}HP^{1/2}.
\label{eq:information-operator}
\end{equation}
Here, $P^{1/2}$ denotes the symmetric positive-definite square root of $P$. For a unit vector $z$ in whitened coordinates, the quadratic form $z^{\mathsf{T}}Kz$ measures the signal-to-noise-weighted information supplied about that direction. Small eigenvalues of $K$ identify combinations of the state that remain weakly constrained.

\paragraph{Positive-semidefinite ordering and directional exchange.}
For symmetric matrices $K_{1}$ and $K_{2}$, $K_{1}\succeq K_{2}$ means
\begin{equation}
x^{\mathsf{T}}K_{1}x
\geq
x^{\mathsf{T}}K_{2}x
\qquad
\text{for every }x.
\label{eq:loewner-order}
\end{equation}
This relation is known as the Loewner or positive-semidefinite order. It compares two operators in every direction simultaneously. The condition
\begin{equation}
K_{\mathcal{T}}(L)
\succeq
\alpha K_{\mathcal{S}},
\label{eq:directional-exchange}
\end{equation}
introduced previously in Section~\ref{sec:introduction} therefore states that the temporal history supplies at least the fraction $\alpha$ of the spatial-reference information in every state direction measured by the reference. The largest factor for which Eq.~\ref{eq:directional-exchange} holds is the directional substitution factor $\alpha_{L}$. Accordingly:

\begin{itemize}
    \item $\alpha_{L}<1$: the temporal history supplies less information than the spatial reference in its weakest relative direction.

    \item $\alpha_{L}=1$: the temporal information equals the spatial reference in its weakest relative direction and equals or exceeds it in all other directions.

    \item $\alpha_{L}>1$: the temporal history is more informative than the spatial reference even in its weakest relative direction.
\end{itemize}

The matrix inequality in Eq.~\ref{eq:directional-exchange} is the central definition of partial space--time exchange used throughout the paper.

\paragraph{Filtering covariance in stochastic linear systems.}
For the stochastic linear system
\begin{equation}
\begin{aligned}
X_{k+1} &= AX_{k}+W_{k},
&\qquad
W_{k} &\sim \mathcal{N}(0,Q),\\
Y_{k} &= CX_{k}+V_{k},
&\qquad
V_{k} &\sim \mathcal{N}(0,R),
\end{aligned}
\label{eq:stochastic-linear-model}
\end{equation}
where $\{W_k\}$ and $\{V_k\}$ are independent, temporally white Gaussian process- and measurement-noise sequences with covariances $Q$ and $R$, respectively. The filtering covariance
\begin{equation}
P_{k\mid k}
=
\operatorname{Cov}\!\left(X_{k}\mid Y_{0:k}\right),
\label{eq:filtering-covariance}
\end{equation}
describes the uncertainty remaining in the estimate of $X_{k}$ after all observations up to time $k$ have been incorporated. Its prediction and measurement-update steps are
\begin{subequations}
\label{eq:riccati-recursion}
\begin{align}
P_{k+1\mid k}
&=
AP_{k\mid k}A^{\mathsf{T}}+Q,
\label{eq:riccati-prediction}
\\
P_{k+1\mid k+1}
&=
P_{k+1\mid k}
\nonumber\\
&\quad
-
P_{k+1\mid k}C^{\mathsf{T}}
\left(
CP_{k+1\mid k}C^{\mathsf{T}}+R
\right)^{-1}
CP_{k+1\mid k}.
\label{eq:riccati-update}
\end{align}
\end{subequations}

Together, Eqs.~\ref{eq:riccati-prediction} and~\ref{eq:riccati-update} form the discrete Riccati recursion. Under suitable stabilizability and detectability conditions, the filtering covariance converges to a limiting value, and the influence of increasingly remote observations on the current estimate decays.

A standard sufficient condition for convergence to a limiting filtering covariance is that $(A,Q^{1/2})$ is stabilizable and $(A,C)$ is detectable \cite{HagerHorowitz1976,Bougerol1993}, where $Q^{1/2}$ denotes a factor of the process-noise covariance satisfying
\begin{equation}
Q^{1/2}\left(Q^{1/2}\right)^{\mathsf{T}}
=
Q.
\label{eq:process-noise-factor}
\end{equation}

Stabilizability requires every unstable or non-decaying mode of $A$ to lie in the controllable subspace generated by $Q^{1/2}$, while detectability requires every such mode to influence the measurements. Equivalently, for every eigenvalue $\lambda$ of $A$ with $|\lambda|\geq1$,
\begin{subequations}
\label{eq:stabilizability-detectability}
\begin{align}
\operatorname{rank}
\begin{bmatrix}
\lambda I-A & Q^{1/2}
\end{bmatrix}
&=r,
\label{eq:stabilizability-rank}
\\
\operatorname{rank}
\begin{bmatrix}
\lambda I-A\\
C
\end{bmatrix}
&=r.
\label{eq:detectability-rank}
\end{align}
\end{subequations}
Thus, unstable or non-decaying modes cannot remain uncontrollable through the process-noise directions or invisible to the sensors.

\paragraph{Nonlinear observability: delay-map injectivity and Fisher information.}
In nonlinear systems, the dynamically admissible states often occupy a lower-dimensional set $\mathcal{M}$ within the ambient state space. This set may represent an invariant manifold, an attractor, or a regime-restricted collection of physically admissible states. For dynamics
$F:\mathcal{M}\rightarrow\mathcal{M}$
and a temporal observation function
$h:\mathcal{M}\rightarrow\mathbb{R}^{m}$,
the length-$L$ delay map is
\begin{equation}
\Phi_{\mathcal{T},L}(x)
=
\bigl(
h(x),
h(Fx),
\ldots,
h(F^{L-1}x)
\bigr).
\label{eq:nonlinear-delay-map}
\end{equation}

Injectivity of $\Phi_{\mathcal{T},L}$ means that distinct admissible states generate distinct noiseless histories and therefore provides global identifiability. An embedding is stronger: it is injective and preserves the local differential structure of $\mathcal{M}$. Takens-type results give generic conditions under which sufficiently many delayed observations form such an embedding, although these results do not by themselves quantify robustness to measurement noise \cite{Takens1981,Sauer1991}.

Local robustness is determined by the differential of the observation map. If $\mathcal{M}$ is a differentiable manifold, the relevant derivative is the restriction
\begin{equation}
D\Phi_{\mathcal{T},L}(x)
\big|_{T_{x}\mathcal{M}},
\label{eq:tangent-delay-jacobian}
\end{equation}
because $T_{x}\mathcal{M}$ contains the locally admissible state perturbations. With additive Gaussian noise of covariance $R_{\mathcal{T},L}$, the corresponding Fisher-information form on the tangent space is
\begin{equation}
\mathfrak{g}_{\mathcal{T},L}(x)
=
D\Phi_{\mathcal{T},L}(x)^{\mathsf{T}}
R_{\mathcal{T},L}^{-1}
D\Phi_{\mathcal{T},L}(x),
\label{eq:temporal-fisher-information}
\end{equation}
where the derivatives in Eq.~\ref{eq:temporal-fisher-information} are understood as restricted to $T_x\mathcal{M}$ as in Eq.~\ref{eq:tangent-delay-jacobian}. Positive definiteness on $T_{x}\mathcal{M}$ gives local differential observability, while a uniform lower bound over $x\in\mathcal{M}$ provides protection against locally ill-conditioned regions.

Global injectivity and local Fisher information therefore address complementary aspects of nonlinear reconstruction. Injectivity rules out widely separated states that produce identical noiseless histories, whereas Fisher information controls the local sensitivity of the observations to small perturbations in the presence of noise. Under the usual differentiability, regularity, and local-unbiasedness assumptions, the inverse Fisher-information matrix gives a local Cram\'er--Rao lower bound. It does not, by itself, guarantee global reconstruction or characterize finite-noise errors far from the local asymptotic regime.

The core notation introduced in this section is summarized in Table~\ref{tab:notation}.

\begin{table}[t]
\centering
\caption{Core quantities used throughout the analysis.}
\label{tab:notation}
\begin{tabularx}{\linewidth}{@{}lX@{}}
\toprule
Symbol & Meaning \\
\midrule
$X\in\mathbb{R}^{r}$
& Target $r$-dimensional state, either full or reduced \\

$H_{\mathcal{S}}$
& Simultaneous spatial observation matrix \\

$C$
& Sparse physical sensor matrix used through time \\

$L$
& Number of temporal samples in the history \\

$P$
& Prior covariance of the target state \\

$R_{\mathcal{S}},R_{\mathcal{T}}$
& Spatial and temporal measurement-noise covariances \\

$K_{\mathcal{S}},K_{\mathcal{T}}(L)$
& Prior-whitened information operators \\

$\alpha$
& Required fraction of the spatial-reference information \\

$\alpha_{L}$
& Directional substitution factor attained with a temporal history of length $L$ \\

$L_{\alpha}$
& Minimum history length satisfying
$K_{\mathcal{T}}(L)\succeq\alpha K_{\mathcal{S}}$ \\
\bottomrule
\end{tabularx}
\end{table}

\section{Information-operator foundations for sensing equivalence}
\label{sec:information-operators}

This section establishes a common mathematical basis for comparing spatial and temporal sensing designs. The information operator converts the posterior uncertainty produced by each design into the same prior-normalized coordinates, allowing exact and partial sensing equivalence to be defined direction by direction.

Let $X\sim\mathcal{N}(\mu,P)$ with $P\succ0$, and let $Y_a$ denote the
observations produced by design
$a\in\{\mathcal{S},\mathcal{T}\}$,
corresponding to the spatial or temporal sensing arrangement. Assume that
$(X,Y_a)$ is jointly Gaussian. Specializing the generic posterior covariance
in Eq.~\ref{eq:posterior-covariance} to design $a$, define
\begin{equation}
P_a
:=
\operatorname{Cov}(X\mid Y_a)
\succ0.
\label{eq:design-posterior-covariance}
\end{equation}

The posterior uncertainty is first normalized by the prior covariance, and the corresponding information operator is then defined by
\begin{subequations}
\label{eq:normalized-information-definitions}
\begin{align}
\Pi_a
&=
P^{-1/2}P_aP^{-1/2},
\label{eq:normalized-posterior-covariance}
\\
K_a
&=
\Pi_a^{-1}-I.
\label{eq:design-information-operator}
\end{align}
\end{subequations}
Because conditioning on observations cannot increase covariance in the jointly Gaussian setting, $\Pi_a\preceq I$, and therefore $K_a\succeq0$. Thus, $K_a$ represents the information gained from design $a$, expressed relative to the prior uncertainty.

As stated in Section~\ref{sec:background}, for a direct linear-Gaussian experiment
$Y_a=H_aX+\eta_a$
with measurement noise $\eta_a\sim\mathcal{N}(0,R_a)$ independent of $X$, this general definition reduces to
\begin{equation}
K_a
=
P^{1/2}H_a^{\mathsf{T}}R_a^{-1}H_aP^{1/2}.
\label{eq:direct-design-information-operator}
\end{equation}

\begin{proposition}[Posterior and mutual information]
\label{prop:posterior-mi}
For any jointly Gaussian target and observation pair,
\begin{subequations}
\label{eq:posterior-mi-identities}
\begin{align}
P_a
&=
P^{1/2}(I+K_a)^{-1}P^{1/2},
\label{eq:posterior-from-information}
\\
I(X;Y_a)
&=
\frac{1}{2}\log\det(I+K_a).
\label{eq:mutual-information-from-operator}
\end{align}
\end{subequations}
\end{proposition}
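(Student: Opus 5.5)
Both identities follow by inverting the definitions in Eq.~\ref{eq:normalized-information-definitions} and substituting into the Gaussian mutual-information formula Eq.~\ref{eq:gaussian-mutual-information}; no further structure of the design is needed.

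\textbf{Invertibility.} Since $P_a\succ0$ by Eq.~\ref{eq:design-posterior-covariance} and $P^{-1/2}$ is symmetric and invertible, the congruence $\Pi_a=P^{-1/2}P_aP^{-1/2}$ is symmetric positive definite. Hence $\Pi_a^{-1}$ exists and $K_a=\Pi_a^{-1}-I$ is well defined and symmetric. Rearranging gives $I+K_a=\Pi_a^{-1}\succ0$, so $\Pi_a=(I+K_a)^{-1}$.

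\textbf{Identity~\eqref{eq:posterior-from-information}.} Multiply $\Pi_a=(I+K_a)^{-1}$ on both sides by $P^{1/2}$. This undoes the whitening and yields $P_a=P^{1/2}(I+K_a)^{-1}P^{1/2}$.

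\textbf{Identity~\eqref{eq:mutual-information-from-operator}.} Take determinants in the expression just obtained:
\[
\det P_a=\det(P^{1/2})^2\det(I+K_a)^{-1}=\det P\,/\det(I+K_a).
\]
Substituting this into Eq.~\ref{eq:gaussian-mutual-information}, which applies because $(X,Y_a)$ is jointly Gaussian, gives
\[
\det P/\det P_a=\det(I+K_a), \qquad I(X;Y_a)=\tfrac12\log\det(I+K_a).
\]

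\textbf{Points to check.}
\begin{itemize}
\item \emph{Conditional covariance.} In the Gaussian case $\operatorname{Cov}(X\mid Y_a)$ is deterministic (independent of the realized $Y_a$), so $P_a$ is a well-defined fixed matrix. This is what makes the formula Eq.~\ref{eq:gaussian-mutual-information} valid.
\item \emph{Consistency with the linear model.} For the linear model one can check against Eq.~\ref{eq:direct-design-information-operator}. The information form of the posterior is $P_a^{-1}=P^{-1}+H_a^{\mathsf T}R_a^{-1}H_a$. Applying the congruence by $P^{1/2}$ gives
\[
\Pi_a^{-1}=P^{1/2}P_a^{-1}P^{1/2}=I+P^{1/2}H_a^{\mathsf T}R_a^{-1}H_aP^{1/2},
\]
which recovers Eq.~\ref{eq:direct-design-information-operator}. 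This check is not needed for the proposition itself.
\end{itemize}

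The only potential obstacle is ensuring $P_a\succ0$. The paper assumes this; without it $K_a$ would be undefined, which corresponds to infinite information.
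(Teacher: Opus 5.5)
Your proposal is correct and follows essentially the same route as the paper: rearrange $K_a=\Pi_a^{-1}-I$ to recover $P_a=P^{1/2}(I+K_a)^{-1}P^{1/2}$, then substitute into the Gaussian formula $I(X;Y_a)=\tfrac12\log\det P-\tfrac12\log\det P_a$. Your extra invertibility check and your consistency check with the linear-Gaussian model are both sound, but they are not needed beyond the paper's standing assumption $P_a\succ0$.
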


\begin{proof}
The first identity follows by rearranging
Eq.~\ref{eq:design-information-operator} and using
Eq.~\ref{eq:normalized-posterior-covariance}. The Gaussian entropy
formula gives
$I(X;Y_a)=\frac{1}{2}\log\det P-\frac{1}{2}\log\det P_a$,
and substitution of Eq.~\ref{eq:posterior-from-information} yields
Eq.~\ref{eq:mutual-information-from-operator}.
\end{proof}

The following definition formalizes the direction-resolved comparison of the information supplied by the spatial and temporal sensing designs. Exact equivalence requires the two designs to produce the same information structure. Partial exchange requires the temporal history to retain a prescribed fraction of the spatial information in every direction.

\begin{definition}[Exact and one-sided information exchange]
\label{def:information-exchange}
The designs are exactly information-equivalent if
\begin{equation}
K_{\mathcal{T}}
=
K_{\mathcal{S}}.
\label{eq:exact-information-equivalence}
\end{equation}
For $\alpha\in(0,1]$, the temporal design is an $\alpha$-substitute for
the spatial design if
\begin{equation}
K_{\mathcal{T}}
\succeq
\alpha K_{\mathcal{S}}.
\label{eq:alpha-substitute}
\end{equation}
The largest admissible fraction is
\begin{equation}
\alpha_{\mathrm{op}}(\mathcal{T}\mid\mathcal{S})
=
\sup
\left\{
\beta\geq0:
K_{\mathcal{T}}\succeq\beta K_{\mathcal{S}}
\right\}.
\label{eq:operator-substitution-factor}
\end{equation}
\end{definition}

The supremum in Eq.~\ref{eq:operator-substitution-factor} may exceed
one; the restriction $\alpha\in(0,1]$ applies to the prescribed fraction
used to define an $\alpha$-substitute. When the temporal design is the
length-$L$ measurement history considered in Section~\ref{sec:background},
denote it by $\mathcal{T}_L$. The directional substitution factor
$\alpha_L$ introduced there is therefore
\begin{equation}
\alpha_L
=
\alpha_{\mathrm{op}}(\mathcal{T}_L\mid\mathcal{S}).
\label{eq:alpha-L-substitution-factor}
\end{equation}

For direct linear experiments, define the whitened sensing matrix
\begin{equation}
B_a
=
R_a^{-1/2}H_aP^{1/2},
\label{eq:whitened-sensing-matrix}
\end{equation}
where $R_a^{-1/2}$ denotes the inverse of the symmetric
positive-definite square root of the measurement-noise covariance
$R_a$. The matrix $B_a$ maps perturbations in prior-whitened state
coordinates to noise-whitened measurement coordinates. By
Eq.~\ref{eq:direct-design-information-operator},
\begin{equation}
K_a
=
B_a^{\mathsf{T}}B_a.
\label{eq:information-gram}
\end{equation}
Thus, $K_a$ is the Gram matrix of the columns of $B_a$, the whitened
sensing Gramian associated with design $a$.

\begin{theorem}[Exact Gram and measurement-subspace characterization]
\label{thm:gram-characterization}
For two direct linear-Gaussian sensing designs, the following are
equivalent:
\begin{enumerate}[label=(\roman*)]
    \item
    the designs are exactly information-equivalent,
    $K_{\mathcal{T}}=K_{\mathcal{S}}$;

    \item
    the whitened sensing Gram matrices are equal,
    \[
    B_{\mathcal{T}}^{\mathsf{T}}B_{\mathcal{T}}
    =
    B_{\mathcal{S}}^{\mathsf{T}}B_{\mathcal{S}};
    \]

    \item
    there exists a partial isometry $U$ from the spatial measurement
    space to the temporal measurement space whose restriction to
    $\operatorname{range}(B_{\mathcal{S}})$ is an isometry, such that
    \begin{equation}
    B_{\mathcal{T}}
    =
    UB_{\mathcal{S}}.
    \label{eq:partial-isometry-relation}
    \end{equation}
\end{enumerate}

The partial isometry $U$ shows that the two designs may record different
measurement vectors while preserving the same inner products between
all observable state perturbations. Exact exchange is therefore a
statement about the geometry of the inferred state rather than a
one-to-one correspondence between individual sensors.
\end{theorem}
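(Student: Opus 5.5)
The plan is to prove the cycle (i)$\Leftrightarrow$(ii), (iii)$\Rightarrow$(ii), and (ii)$\Rightarrow$(iii). Only the last implication has any content.

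\textbf{(i)$\Leftrightarrow$(ii).} This is immediate from Eq.~\ref{eq:information-gram}. For direct linear-Gaussian designs, Eq.~\ref{eq:direct-design-information-operator} gives $K_a=B_a^{\mathsf{T}}B_a$ for $a\in\{\mathcal{S},\mathcal{T}\}$. Equality of the information operators is therefore literally equality of the whitened Gram matrices.

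\textbf{(iii)$\Rightarrow$(ii).} Suppose $B_{\mathcal{T}}=UB_{\mathcal{S}}$, where $U$ is isometric on $\operatorname{range}(B_{\mathcal{S}})$. Then for every $z$ the vector $B_{\mathcal{S}}z$ lies in that range, so
\[
\|B_{\mathcal{T}}z\|_2^2=\|UB_{\mathcal{S}}z\|_2^2=\|B_{\mathcal{S}}z\|_2^2 .
\]
Polarization gives $z^{\mathsf{T}}B_{\mathcal{T}}^{\mathsf{T}}B_{\mathcal{T}}w=z^{\mathsf{T}}B_{\mathcal{S}}^{\mathsf{T}}B_{\mathcal{S}}w$ for all $z,w$, which is (ii). Equivalently, $B_{\mathcal{S}}^{\mathsf{T}}U^{\mathsf{T}}UB_{\mathcal{S}}=B_{\mathcal{S}}^{\mathsf{T}}B_{\mathcal{S}}$, because $U^{\mathsf{T}}U$ is the orthogonal projector onto a subspace containing $\operatorname{range}(B_{\mathcal{S}})$.

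\textbf{(ii)$\Rightarrow$(iii).} This is the main step, and I would construct $U$ explicitly.

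First define $U_0$ on $\operatorname{range}(B_{\mathcal{S}})$ by $U_0(B_{\mathcal{S}}z):=B_{\mathcal{T}}z$. To see that this is well defined, suppose $B_{\mathcal{S}}z=B_{\mathcal{S}}z'$. Then
\[
\|B_{\mathcal{T}}(z-z')\|_2^2=(z-z')^{\mathsf{T}}K_{\mathcal{S}}(z-z')=\|B_{\mathcal{S}}(z-z')\|_2^2=0,
\]
so $B_{\mathcal{T}}z=B_{\mathcal{T}}z'$. Equivalently, equal Gram matrices give $\ker B_{\mathcal{S}}=\ker B_{\mathcal{T}}$. The map $U_0$ is linear by construction. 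By the same Gram identity it preserves inner products, so it is an isometry from $\operatorname{range}(B_{\mathcal{S}})$ onto $\operatorname{range}(B_{\mathcal{T}})$.

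Next extend $U_0$ to $U$ on the whole spatial measurement space $\mathbb{R}^{m_{\mathcal{S}}}$ by setting $U=0$ on $\operatorname{range}(B_{\mathcal{S}})^{\perp}$. The result is a partial isometry with initial space $\operatorname{range}(B_{\mathcal{S}})$ and final space $\operatorname{range}(B_{\mathcal{T}})$. It satisfies $UB_{\mathcal{S}}=B_{\mathcal{T}}$ by construction.

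As a concrete alternative, I could write $U=B_{\mathcal{T}}B_{\mathcal{S}}^{+}$ using the Moore--Penrose pseudoinverse. Then $UB_{\mathcal{S}}=B_{\mathcal{T}}\Pi_{\operatorname{range}(B_{\mathcal{S}}^{\mathsf{T}})}=B_{\mathcal{T}}$, since $\ker B_{\mathcal{T}}=\ker B_{\mathcal{S}}$.

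The main obstacle is only bookkeeping. The two measurement spaces have different dimensions ($m_{\mathcal{S}}$ versus $m_{\mathcal{T}}L$), so a genuine unitary need not exist, and this is why the statement asks only for a partial isometry. The point to verify carefully is well-definedness via the kernel equality; everything else is linear algebra.

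I would close by noting the interpretation given in the statement. $U$ acts only on measurement coordinates and preserves the inner products $\langle B_a z,B_a w\rangle$ between images of state perturbations. Exact exchange is therefore a property of state geometry, not a sensor-by-sensor correspondence.
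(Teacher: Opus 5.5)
Your proposal is correct and follows the paper's proof essentially step for step. In both, (i)$\Leftrightarrow$(ii) is read off from $K_a=B_a^{\mathsf{T}}B_a$; (ii)$\Rightarrow$(iii) defines $U(B_{\mathcal{S}}x)=B_{\mathcal{T}}x$ on $\operatorname{range}(B_{\mathcal{S}})$, checks well-definedness and inner-product preservation via the Gram identity, and extends by zero on the orthogonal complement; and (iii)$\Rightarrow$(ii) uses that $U^{\mathsf{T}}U$ acts as the identity on $\operatorname{range}(B_{\mathcal{S}})$. Your pseudoinverse formula $U=B_{\mathcal{T}}B_{\mathcal{S}}^{+}$ is simply an explicit expression for the same map.
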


\begin{proof}
The equivalence of (i) and (ii) follows directly from
Eq.~\ref{eq:information-gram}. Assume (ii) and define $U$ on
$\operatorname{range}(B_{\mathcal{S}})$ by
$U(B_{\mathcal{S}}x)=B_{\mathcal{T}}x$. This is well defined because
$B_{\mathcal{S}}x=0$ implies, by (ii),
$\lVert B_{\mathcal{T}}x\rVert_2^2
=
x^{\mathsf{T}}B_{\mathcal{T}}^{\mathsf{T}}B_{\mathcal{T}}x
=
x^{\mathsf{T}}B_{\mathcal{S}}^{\mathsf{T}}B_{\mathcal{S}}x
=0$,
and hence $B_{\mathcal{T}}x=0$. Moreover,
\begin{equation}
\begin{aligned}
\left\langle B_{\mathcal{S}}x,B_{\mathcal{S}}z\right\rangle
&=
x^{\mathsf{T}}
B_{\mathcal{S}}^{\mathsf{T}}B_{\mathcal{S}}z
\\
&=
x^{\mathsf{T}}
B_{\mathcal{T}}^{\mathsf{T}}B_{\mathcal{T}}z
\\
&=
\left\langle B_{\mathcal{T}}x,B_{\mathcal{T}}z\right\rangle,
\end{aligned}
\label{eq:measurement-inner-product-preservation}
\end{equation}
so $U$ is an isometry on
$\operatorname{range}(B_{\mathcal{S}})$. Extending it by zero on the
orthogonal complement gives a partial isometry satisfying (iii).

Conversely, under (iii), $U^{\mathsf{T}}U$ acts as the identity on
$\operatorname{range}(B_{\mathcal{S}})$. Therefore,
\begin{equation}
B_{\mathcal{T}}^{\mathsf{T}}B_{\mathcal{T}}
=
B_{\mathcal{S}}^{\mathsf{T}}
U^{\mathsf{T}}U
B_{\mathcal{S}}
=
B_{\mathcal{S}}^{\mathsf{T}}B_{\mathcal{S}},
\label{eq:partial-isometry-gram}
\end{equation}
which is precisely the equality of the spatial and temporal Gram
matrices in (ii).
\end{proof}

\begin{theorem}[Exact fixed-prior Bayesian equivalence]
\label{thm:bayes-equivalence}
If $K_{\mathcal{T}}=K_{\mathcal{S}}$ for two jointly Gaussian experiments
with the same prior, then they have identical posterior covariance,
identical mutual information, identical minimum mean-square error, and
identical minimum Bayes risk for every integrable loss under that fixed
prior.
\end{theorem}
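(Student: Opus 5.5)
The plan is to reduce everything to the posterior law. Under joint Gaussianity, the conditional law of $X$ given $Y_a$ is Gaussian, $X\mid Y_a\sim\mathcal{N}(m_a(Y_a),P_a)$. The posterior mean $m_a(Y_a)=\mathbb{E}[X\mid Y_a]$ is an affine function of $Y_a$, and $P_a$ is deterministic, i.e.\ it does not depend on the realized $Y_a$.

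The first three claims follow quickly from Proposition~\ref{prop:posterior-mi}.
\begin{enumerate}[label=(\alph*)]
    \item Eq.~\ref{eq:posterior-from-information} expresses $P_a$ purely through the common prior $P$ and $K_a$. Hence $K_{\mathcal{T}}=K_{\mathcal{S}}$ gives $P_{\mathcal{T}}=P_{\mathcal{S}}$.
    \item Eq.~\ref{eq:mutual-information-from-operator} likewise gives $I(X;Y_{\mathcal{T}})=I(X;Y_{\mathcal{S}})$.
    \item The MMSE estimator is $m_a(Y_a)$, with error covariance $\mathbb{E}\operatorname{Cov}(X\mid Y_a)=P_a$. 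The minimum mean-square error is therefore $\operatorname{tr}P_a$, which coincides for the two designs. The same holds for any weighted quadratic error $\operatorname{tr}(WP_a)$.
\end{enumerate}

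For the general Bayes-risk statement, I would first show that the joint law of $(X,m_a(Y_a))$ is the same for both designs. Write $X=m_a+e_a$, where $e_a=X-m_a$. By Gaussian orthogonality, $e_a$ is uncorrelated with $Y_a$, and hence, being jointly Gaussian with it, independent of $Y_a$ and of $m_a$, with $e_a\sim\mathcal{N}(0,P_a)$. The law of total covariance gives $m_a\sim\mathcal{N}(\mu,P-P_a)$. Thus $(m_a,e_a)$ has the law $\mathcal{N}(\mu,P-P_a)\otimes\mathcal{N}(0,P_a)$, which depends only on $(\mu,P,K_a)$. Consequently the pair $(X,m_a)$ has the same joint distribution for $a=\mathcal{S}$ and $a=\mathcal{T}$.

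Next, I would argue that $m_a$ is a sufficient statistic for $X$ within experiment $a$. Conditionally on $Y_a$, the law of $X$ is $\mathcal{N}(m_a,P_a)$, which depends on $Y_a$ only through $m_a$. Hence for any integrable loss $\ell(X,d)$ and any decision rule $\delta(Y_a)$, there is a rule based on $m_a$ that does at least as well. One way to obtain it is to randomize using the conditional law of $Y_a$ given $m_a$. When $\ell$ is convex in $d$, one can instead average $\delta$ over that conditional law. Either way, the minimum Bayes risk equals
\[
\inf_{\tilde\delta}\;\mathbb{E}\bigl[\ell\bigl(X,\tilde\delta(m_a)\bigr)\bigr],
\]
where the infimum runs over (possibly randomized) rules $\tilde\delta$ of $m_a$ alone. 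This quantity is a functional of the joint law of $(X,m_a)$ only. By the previous step, that joint law is the same for both designs, so the two minimum Bayes risks coincide.

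The main obstacle is this sufficiency reduction for an arbitrary integrable loss. Measurability of the infimum and the possible need for randomized rules require care. My first attempt would be the pointwise formulation. The minimum Bayes risk equals $\mathbb{E}[\rho_a(m_a)]$, where
\[
\rho_a(m)=\inf_d\int \ell(x,d)\,\mathcal{N}(x;m,P_a)\,dx .
\]
This holds under standard measurable-selection conditions, or with $\epsilon$-optimal rules when no exact minimizer exists. Since $P_{\mathcal{T}}=P_{\mathcal{S}}$ forces $\rho_{\mathcal{T}}=\rho_{\mathcal{S}}$, and $m_{\mathcal{T}}$ and $m_{\mathcal{S}}$ have the same law, the two expectations agree. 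Degenerate cases, where $P-P_a$ is singular so that $m_a$ is confined to an affine subspace, need no separate treatment, because the argument uses only equality of laws.
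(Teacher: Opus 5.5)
Your proposal is correct and follows essentially the same route as the paper: Proposition~\ref{prop:posterior-mi} gives equal posterior covariance, mutual information, and MMSE. Equality of the joint Gaussian law of $(X,m_a)$, with $m_a=\mathbb{E}[X\mid Y_a]$, mean $\mu$ and $\operatorname{Cov}(m_a)=\operatorname{Cov}(X,m_a)=P-P_a$, then makes the posterior expected loss identically distributed under both designs and hence gives equal minimum Bayes risk. Your sufficiency reduction and the pointwise $\rho_a(m_a)$ formulation simply make explicit the measurability and optimization step that the paper's one-line conclusion leaves implicit.
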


\begin{proof}
Equality of $K_{\mathcal{T}}$ and $K_{\mathcal{S}}$ gives
$P_{\mathcal{T}}=P_{\mathcal{S}}$ and equal mutual information by
Proposition~\ref{prop:posterior-mi}. The common posterior covariance also
gives the same minimum mean-square error. For each
$a\in\{\mathcal{S},\mathcal{T}\}$, the posterior distribution is
$\mathcal{N}(M_a,P_a)$, where
$M_a=\mathbb{E}[X\mid Y_a]$. Since $\mathbb{E}[M_a]=\mu$ and
$\operatorname{Cov}(M_a)=\operatorname{Cov}(X,M_a)=P-P_a$, the equality
$P_{\mathcal{T}}=P_{\mathcal{S}}$ implies that
$(X,M_{\mathcal{T}})$ and $(X,M_{\mathcal{S}})$ have the same joint
Gaussian distribution. The posterior expected loss is therefore
distributed identically under the spatial and temporal experiments, so
its optimum has the same Bayes risk.
\end{proof}

The equivalence in Theorem~\ref{thm:bayes-equivalence} is defined for
the specified prior $P$. A prior-independent ordering of statistical
experiments is provided by Blackwell's comparison theory
\cite{Blackwell1953}; the fixed-prior formulation used here is tailored
to reconstruction of a prescribed ensemble of dynamical states.

Mutual information remains useful for summarizing total uncertainty
reduction, although equal totals can arise from different directional
distributions. The following elementary example illustrates why the
operator comparison is stronger.

\begin{proposition}[Scalar mutual information is insufficient]
\label{prop:scalar-mutual-information}
For $r\geq2$, there exist positive-semidefinite $K_1,K_2$ with
\begin{equation}
\log\det(I+K_1)
=
\log\det(I+K_2),
\label{eq:equal-scalar-mutual-information}
\end{equation}
but neither $K_1\succeq K_2$ nor $K_2\succeq K_1$.
\end{proposition}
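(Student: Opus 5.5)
The plan is an explicit construction with diagonal matrices in the first two coordinates, padded by zeros in dimensions $r\geq 3$. Take
\[
K_1=\operatorname{diag}(a,0,0,\ldots,0),\qquad K_2=\operatorname{diag}(0,a,0,\ldots,0),
\]
with any $a>0$. Both matrices are positive semidefinite. Since $I+K_1$ and $I+K_2$ are diagonal with entries $1+a,1,\ldots,1$ in permuted order, their determinants are both $1+a$. Hence Eq.~\ref{eq:equal-scalar-mutual-information} holds, and by Proposition~\ref{prop:posterior-mi} the two designs have equal mutual information.

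Next I will show that neither ordering holds, by testing the Loewner order of Eq.~\ref{eq:loewner-order} on the standard basis vectors. With $e_1$,
\[
e_1^{\mathsf{T}}(K_1-K_2)e_1=a>0,
\]
so $K_2\succeq K_1$ fails. With $e_2$,
\[
e_2^{\mathsf{T}}(K_1-K_2)e_2=-a<0,
\]
so $K_1\succeq K_2$ fails. Equivalently, $K_1-K_2=\operatorname{diag}(a,-a,0,\ldots)$ is indefinite.

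There is no real obstacle; the only thing to get right is ensuring the example does not depend on $r=2$, which the zero padding handles. If a less degenerate example is preferred, for instance with both matrices positive definite, I would instead use $\operatorname{diag}(a,b)$ and $\operatorname{diag}(b,a)$ with $a\neq b$. Then $\det(I+K_1)=\det(I+K_2)=(1+a)(1+b)$, and $K_1-K_2=\operatorname{diag}(a-b,\,b-a)$ is again indefinite.

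A short closing remark would interpret the result: equal scalar information can be distributed over orthogonal directions. This is why the operator comparison of Definition~\ref{def:information-exchange} is strictly finer than comparing mutual information.
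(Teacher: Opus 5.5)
Your construction is correct and follows the same route as the paper: an explicit diagonal counterexample, padded with zeros for $r\geq 3$, in which $\det(I+K_1)=\det(I+K_2)$ is checked directly and $K_1-K_2$ is shown to have diagonal entries of opposite sign. The only difference is the choice of pair: the paper uses $K_1=\operatorname{diag}((1+b)^2-1,0,\ldots)$ and $K_2=\operatorname{diag}(b,b,0,\ldots)$, which contrasts information concentrated in one direction with the same total spread over two, whereas your pair places identical spectra along orthogonal directions. Both serve equally well.
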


\begin{proof}
Take
$K_1=\operatorname{diag}((1+b)^2-1,0,0,\ldots)$ and
$K_2=\operatorname{diag}(b,b,0,\ldots)$ with $b>0$, with any remaining
diagonal entries equal to zero when $r>2$. Both matrices are
positive semidefinite. Moreover,
$\det(I+K_1)=(1+b)^2=\det(I+K_2)$, while $K_1-K_2$ has first and second
diagonal entries $b+b^2>0$ and $-b<0$, respectively. Hence neither
$K_1\succeq K_2$ nor $K_2\succeq K_1$.
\end{proof}

This formulation establishes that exact sensing equivalence is
characterized by equality of the information operators, or equivalently, for direct linear-Gaussian designs, by equality of their prior- and noise-whitened Gram matrices. Under a fixed prior, this equality implies identical posterior covariance and optimal Bayesian reconstruction performance, even when the measurements themselves differ. Equal mutual information alone does not provide the same guarantee because it does not preserve the directional distribution of information. The following sections build on this foundation.

\section{General space--time exchange theorem}
\label{sec:general-theorem}

The space--time information exchange analysis is organized around three
quantities: the substitution factor $\alpha_L$, which gives the weakest
temporal-to-spatial information ratio across state-space directions; the
limiting temporal information operator $K_{\infty}$; and the convergence
rate of $K_{\mathcal{T}}(L)$ toward this limit. Together, these yield
directional necessary and sufficient conditions, rank- and trace-based
lower bounds, information-ceiling impossibility criteria, and
finite-history guarantees for the exchange length.

Histories are assumed to be nested, so the observations in a window of
length $L$ are contained in a window of length $L+1$. In a jointly
Gaussian problem, adding an observation can only reduce posterior
covariance. Consequently, the temporal information operator grows
monotonically with $L$:
\begin{equation}
K_{\mathcal{T}}(L+1)
\succeq
K_{\mathcal{T}}(L).
\label{eq:monotone-K}
\end{equation}

Define the operator exchange length as the smallest temporal history
length $L$ that supplies at least the fraction $\alpha$ of the
spatial-reference information in every direction:
\begin{equation}
L_{\alpha}(\mathcal{T}\mid\mathcal{S})
=
\inf
\left\{
L:
K_{\mathcal{T}}(L)
\succeq
\alpha K_{\mathcal{S}}
\right\}.
\label{eq:exchange-length}
\end{equation}
When the spatial and temporal designs are fixed, we write this quantity
simply as $L_{\alpha}$. If no finite $L$ satisfies the inequality, then
$L_{\alpha}=\infty$.

\begin{theorem}[General space--time exchange theorem]
\label{thm:unified-exchange}
Let $K_{\mathcal{S}}\succeq0$, let
$K_L:=K_{\mathcal{T}}(L)\succeq0$ be nondecreasing, and suppose that
$K_L$ converges in operator norm as $L\rightarrow\infty$, with limiting
value
$K_{\infty}=\lim_{L\rightarrow\infty}K_L$.
Here, $K_{\infty}$ represents the maximum information obtainable from an
arbitrarily long temporal history. Fix $\alpha\in(0,1]$. If
$K_{\mathcal{S}}=0$, adopt the convention $\alpha_L=+\infty$.

\begin{enumerate}[label=(\roman*)]

\item
\textbf{\textup{Directional substitution factor.}}
The exact substitution factor is
\begin{equation}
\alpha_L
=
\inf_{x:\,x^{\mathsf{T}}K_{\mathcal{S}}x>0}
\frac{x^{\mathsf{T}}K_Lx}
{x^{\mathsf{T}}K_{\mathcal{S}}x},
\label{eq:rayleigh-factor}
\end{equation}
and $L_{\alpha}<\infty$ exactly when $\alpha_L\geq\alpha$ for some
finite $L$.

{\normalfont
The quotient compares temporal and spatial information along the state
variation $x$. Taking the infimum identifies the weakest direction, so
$\alpha_L$ is the largest information fraction guaranteed simultaneously
across all directions represented by the spatial reference.
\par}

\item
\textbf{\textup{Rank and measurement-count requirements.}}
If
$K_L\succeq\alpha K_{\mathcal{S}}$, then
\begin{equation}
\operatorname{rank}(K_L)
\geq
\operatorname{rank}(K_{\mathcal{S}}).
\label{eq:rank-lower}
\end{equation}
If $K_L$ is generated by $mL$ scalar measurements, then necessarily
\begin{equation}
mL
\geq
\operatorname{rank}(K_{\mathcal{S}}).
\label{eq:count-lower}
\end{equation}

{\normalfont
Therefore, the temporal history must span at least as many informative
state directions as the spatial reference. This gives a necessary lower
bound on the total number of scalar temporal observations, although
satisfying this count alone does not guarantee exchange.
\par}

\item
\textbf{\textup{Information-ceiling impossibility condition.}}
If there exists a state variation $v$ such that
\begin{equation}
v^{\mathsf{T}}K_{\infty}v
<
\alpha v^{\mathsf{T}}K_{\mathcal{S}}v,
\label{eq:no-go-direction}
\end{equation}
then $L_{\alpha}=\infty$.

{\normalfont
In this case, the temporal sensing design remains below the required
exchange level for every history length. Even the complete temporal
history contains insufficient information in direction $v$.
\par}

\item
\textbf{\textup{Finite-history guarantee.}}
Suppose that the limiting temporal information exceeds the required
spatial-reference level by a uniform positive margin,
\begin{equation}
\Delta_{\infty}
:=
K_{\infty}
-
\alpha K_{\mathcal{S}}
\succeq
\gamma I,
\qquad
\gamma>0,
\label{eq:gap}
\end{equation}
and suppose the finite-history error satisfies
\begin{equation}
\left\lVert
K_{\infty}-K_L
\right\rVert_2
\leq
\phi(L),
\qquad
\phi(L)\downarrow0.
\label{eq:envelope}
\end{equation}
Then every $L$ satisfying $\phi(L)\leq\gamma$ guarantees
\begin{equation}
K_L
\succeq
\alpha K_{\mathcal{S}}.
\label{eq:finite-guarantee}
\end{equation}

{\normalfont
Here, $\gamma$ measures the smallest directional margin by which the
limiting temporal information exceeds the required level, while
$\phi(L)$ measures how far the finite-history operator remains from its
limit. Exchange is guaranteed once the remaining approximation error is
smaller than this margin.
\par}

\item
\textbf{\textup{Total information-budget requirement.}}
If
\begin{equation}
K_L
=
\sum_{k=1}^{L}\Delta K_k,
\qquad
\Delta K_k\succeq0,
\qquad
\operatorname{tr}(\Delta K_k)\leq\tau,
\qquad
\tau>0,
\label{eq:information-increment-bound}
\end{equation}
then substitution requires
\begin{equation}
L
\geq
\frac{
\alpha\,\operatorname{tr}(K_{\mathcal{S}})
}{
\tau
}.
\label{eq:trace-lower}
\end{equation}

{\normalfont
This gives a second necessary lower bound. If each additional temporal
sample contributes at most $\tau$ units of total information,
sufficiently many samples are required to accumulate the total
information demanded by the spatial reference.
\par}

\end{enumerate}
\end{theorem}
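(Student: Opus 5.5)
The plan is to treat the five parts separately. Each follows from elementary properties of the Loewner order in Eq.~\ref{eq:loewner-order}, together with the monotonicity in Eq.~\ref{eq:monotone-K}.

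\textbf{Parts (i) and (iii).} For (i), I will show that $K_L\succeq\beta K_{\mathcal{S}}$ holds if and only if $x^{\mathsf{T}}K_Lx\geq\beta x^{\mathsf{T}}K_{\mathcal{S}}x$ for all $x$. Directions with $x^{\mathsf{T}}K_{\mathcal{S}}x=0$ impose no constraint, since $K_L\succeq0$. Hence the admissible set is $\{\beta\geq0:\beta\leq x^{\mathsf{T}}K_Lx/x^{\mathsf{T}}K_{\mathcal{S}}x \text{ for all } x \text{ with } x^{\mathsf{T}}K_{\mathcal{S}}x>0\}$. Its supremum is the Rayleigh-type infimum in Eq.~\ref{eq:rayleigh-factor}, and the supremum is attained, so $K_L\succeq\alpha_LK_{\mathcal{S}}$.

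Consequently, $K_L\succeq\alpha K_{\mathcal{S}}$ holds iff $\alpha_L\geq\alpha$. The statement about $L_\alpha$ then follows directly from Eq.~\ref{eq:exchange-length}.

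For (iii), monotonicity together with convergence gives $K_L\preceq K_\infty$ for every $L$. I will justify this by passing to the limit in $x^{\mathsf{T}}K_Lx\leq x^{\mathsf{T}}K_{L'}x$ as $L'\to\infty$. Therefore $v^{\mathsf{T}}K_Lv\leq v^{\mathsf{T}}K_\infty v<\alpha v^{\mathsf{T}}K_{\mathcal{S}}v$ for every $L$, so no finite $L$ qualifies and $L_\alpha=\infty$.

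\textbf{Part (ii).} I will compare kernels. If $K_Lx=0$, then $0=x^{\mathsf{T}}K_Lx\geq\alpha x^{\mathsf{T}}K_{\mathcal{S}}x\geq0$, so $K_{\mathcal{S}}^{1/2}x=0$ and hence $K_{\mathcal{S}}x=0$. Thus $\ker K_L\subseteq\ker K_{\mathcal{S}}$, and rank--nullity yields Eq.~\ref{eq:rank-lower}.

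If $K_L$ arises from $mL$ scalar measurements, then $K_L=B^{\mathsf{T}}B$ with $B$ having $mL$ rows, as in Eq.~\ref{eq:information-gram}. So $\operatorname{rank}K_L\leq mL$, which gives Eq.~\ref{eq:count-lower}.

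The step needing care is the general (non-direct) case. There I would note that $K_L$ is then defined through the posterior covariance. For jointly Gaussian $(X,Y)$ with $Y\in\mathbb{R}^{mL}$, one has $P-P_L=\operatorname{Cov}(X,Y)\operatorname{Cov}(Y)^{+}\operatorname{Cov}(Y,X)$, which has rank at most $mL$. Then
\[
K_L=\Pi_L^{-1}-I=\Pi_L^{-1}(I-\Pi_L),
\]
and $I-\Pi_L=P^{-1/2}(P-P_L)P^{-1/2}$, so $\operatorname{rank}K_L\leq mL$ as well.

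\textbf{Part (iv).} From Eq.~\ref{eq:envelope}, $K_\infty-K_L\preceq\phi(L)I$. So for any $L$ with $\phi(L)\leq\gamma$,
\[
K_L-\alpha K_{\mathcal{S}}=\Delta_\infty-(K_\infty-K_L)\succeq\gamma I-\phi(L)I\succeq0.
\]

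\textbf{Part (v).} Since the trace is monotone under the Loewner order, $K_L\succeq\alpha K_{\mathcal{S}}$ implies $\alpha\operatorname{tr}K_{\mathcal{S}}\leq\operatorname{tr}K_L$. By linearity and the increment bound in Eq.~\ref{eq:information-increment-bound}, $\operatorname{tr}K_L=\sum_k\operatorname{tr}\Delta K_k\leq L\tau$. Rearranging gives Eq.~\ref{eq:trace-lower}.

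None of these steps is deep. The main obstacle I expect is the rank bound in (ii) when the information operator is defined abstractly via Eq.~\ref{eq:design-information-operator}. This is handled by the rank identity for $P-P_L$ above.
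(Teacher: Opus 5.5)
Your proposal is correct and follows the paper's proof essentially part by part: the Rayleigh-quotient characterization for (i), the kernel inclusion $\ker K_L\subseteq\ker K_{\mathcal{S}}$ for (ii), monotone passage to the limit for (iii), the shift by $\gamma-\lVert K_\infty-K_L\rVert_2$ for (iv), and taking traces for (v). Your additional argument that $\operatorname{rank}K_L\le mL$ also holds for posterior-defined (non-direct) operators, via the rank of $P-P_L$, fills in a step that the paper only asserts.
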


\begin{proof}
For (i), for any $\beta\geq0$, $K_L\succeq\beta K_{\mathcal{S}}$ is equivalent to
$x^{\mathsf{T}}K_Lx
\geq
\beta x^{\mathsf{T}}K_{\mathcal{S}}x$
for every $x$, and the largest such $\beta$ is the generalized Rayleigh infimum in Eq.~\ref{eq:rayleigh-factor}.

For (ii), if $x\in\ker(K_L)$, then
$0=x^{\mathsf{T}}K_Lx
\geq
\alpha x^{\mathsf{T}}K_{\mathcal{S}}x$,
so $x\in\ker(K_{\mathcal{S}})$. Hence
$\ker(K_L)\subseteq\ker(K_{\mathcal{S}})$ and the rank inequality
follows. A matrix generated by $mL$ scalar measurements has rank at most
$mL$.

For (iii), monotonicity in Eq.~\ref{eq:monotone-K} and convergence imply
$v^{\mathsf{T}}K_Lv
\leq
v^{\mathsf{T}}K_{\infty}v$
for all $L$, contradicting substitution in direction $v$.

For (iv),
\[
K_L-\alpha K_{\mathcal{S}}
=
\Delta_{\infty}
-
(K_{\infty}-K_L)
\succeq
\left[
\gamma
-
\left\lVert K_{\infty}-K_L\right\rVert_2
\right]I,
\]
which is positive semidefinite when $\phi(L)\leq\gamma$.

Finally, taking traces in
$K_L\succeq\alpha K_{\mathcal{S}}$
and using the increment bound in
Eq.~\ref{eq:information-increment-bound} proves (v).
\end{proof}

The theorem separates the exchange problem into three interpretable
components. The generalized Rayleigh quotient identifies the weakest
direction relative to the spatial reference. The limiting operator
$K_{\infty}$ determines whether the desired exchange level is attainable
at all. Once a positive limiting gap is present, the convergence
envelope $\phi(L)$ converts asymptotic information into a finite
history-length guarantee. The remainder of the paper derives these
quantities for specific classes of dynamics.

\begin{corollary}[Posterior and information consequences]
\label{cor:posterior-consequence}
For jointly Gaussian experiments and $\alpha\in(0,1]$, define
$P_{\mathcal{T},L}:=\operatorname{Cov}(X\mid Y_{\mathcal{T},L})$.
If
\begin{equation}
K_{\mathcal{T}}(L)
\succeq
\alpha K_{\mathcal{S}},
\label{eq:corollary-exchange-condition}
\end{equation}
then
\begin{equation}
P_{\mathcal{T},L}
\preceq
P^{1/2}
\left(
I+\alpha K_{\mathcal{S}}
\right)^{-1}
P^{1/2},
\label{eq:posterior-exchange-bound}
\end{equation}
and
\begin{equation}
I\left(X;Y_{\mathcal{T},L}\right)
\geq
\frac{1}{2}
\log\det
\left(
I+\alpha K_{\mathcal{S}}
\right).
\label{eq:mutual-information-exchange-bound}
\end{equation}

{\normalfont
The first inequality bounds the posterior covariance of the temporal
design: its remaining uncertainty cannot exceed the stated
prior-normalized bound. The second gives a lower bound on the total
mutual information supplied by the temporal history.
\par}

Under squared Euclidean loss, the posterior mean is the Bayes-optimal
estimator, and its minimum expected squared reconstruction error is equal
to the trace of the posterior covariance. Therefore, the temporal design
satisfies the conservative bound
\begin{equation}
\operatorname{tr}\left(P_{\mathcal{T},L}\right)
\leq
\frac{1}{\alpha}
\operatorname{tr}\left(P_{\mathcal{S}}\right).
\label{eq:trace-risk-bound}
\end{equation}
\end{corollary}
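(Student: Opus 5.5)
The plan is to deduce all three inequalities from the representation in Proposition~\ref{prop:posterior-mi}, together with two standard facts about the Loewner order. First, congruence preserves it: $A\preceq B$ implies $M^{\mathsf{T}}AM\preceq M^{\mathsf{T}}BM$. Second, matrix inversion reverses it on positive-definite matrices: $0\prec A\preceq B$ implies $B^{-1}\preceq A^{-1}$.

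For the posterior bound \eqref{eq:posterior-exchange-bound}, I will add $I$ to both sides of the hypothesis \eqref{eq:corollary-exchange-condition}. This gives $I+K_{\mathcal{T}}(L)\succeq I+\alpha K_{\mathcal{S}}\succeq I\succ0$, so both sides are positive definite. Inversion then gives
\[
(I+K_{\mathcal{T}}(L))^{-1}\preceq(I+\alpha K_{\mathcal{S}})^{-1}.
\]
Conjugating with the symmetric matrix $P^{1/2}$ and applying \eqref{eq:posterior-from-information} to the temporal design, so that $P_{\mathcal{T},L}=P^{1/2}(I+K_{\mathcal{T}}(L))^{-1}P^{1/2}$, yields \eqref{eq:posterior-exchange-bound} directly.

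For the mutual-information bound \eqref{eq:mutual-information-exchange-bound}, I will use that $\log\det$ is monotone on positive-definite matrices. If $A\succeq B\succ0$, then $B^{-1/2}AB^{-1/2}\succeq I$, so every eigenvalue of this matrix is at least one, hence $\det A\geq\det B$. Applying this to $I+K_{\mathcal{T}}(L)\succeq I+\alpha K_{\mathcal{S}}$ and combining with \eqref{eq:mutual-information-from-operator} gives the claim.

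The trace-risk bound \eqref{eq:trace-risk-bound} is the one step needing more than monotonicity, because the reference quantity is $\operatorname{tr}(P_{\mathcal{S}})$ rather than the intermediate bound in \eqref{eq:posterior-exchange-bound}. The key step is the inequality
\[
(I+\alpha K_{\mathcal{S}})^{-1}\preceq\frac{1}{\alpha}(I+K_{\mathcal{S}})^{-1}.
\]
Both sides are functions of the single symmetric matrix $K_{\mathcal{S}}$, so they are simultaneously diagonalizable. It therefore suffices to check the scalar inequality $1/(1+\alpha k)\leq 1/(\alpha(1+k))$ for each eigenvalue $k\geq0$ of $K_{\mathcal{S}}$. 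Clearing denominators, this is equivalent to $\alpha+\alpha k\leq 1+\alpha k$, that is, $\alpha\leq1$, which holds by assumption.

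Conjugating this inequality by $P^{1/2}$ and using \eqref{eq:posterior-from-information} for the spatial design, $P_{\mathcal{S}}=P^{1/2}(I+K_{\mathcal{S}})^{-1}P^{1/2}$, gives $P^{1/2}(I+\alpha K_{\mathcal{S}})^{-1}P^{1/2}\preceq\alpha^{-1}P_{\mathcal{S}}$. Chaining this with \eqref{eq:posterior-exchange-bound} gives $P_{\mathcal{T},L}\preceq\alpha^{-1}P_{\mathcal{S}}$. Since the trace is monotone in the Loewner order, this yields \eqref{eq:trace-risk-bound}.

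Finally, I will identify each trace with the minimum expected squared error, which is the Bayes risk of the posterior mean. This justifies the interpretation of \eqref{eq:trace-risk-bound} as a reconstruction-error bound. The only delicate point is the scalar comparison in the key step, which is exactly where the restriction $\alpha\leq1$ is used.
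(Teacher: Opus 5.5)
Your proposal is correct and follows essentially the same route as the paper: add $I$ to the hypothesis, invert, conjugate by $P^{1/2}$ via Proposition~\ref{prop:posterior-mi}, use monotonicity of $\log\det$, and for the trace bound conjugate $(I+\alpha K_{\mathcal{S}})^{-1}\preceq\alpha^{-1}(I+K_{\mathcal{S}})^{-1}$ by $P^{1/2}$ and take traces. Your eigenvalue-wise check of that last comparison, which is where $\alpha\leq1$ enters, fills in a step the paper states without justification.
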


\begin{proof}
From Eq.~\ref{eq:corollary-exchange-condition},
\[
I+K_{\mathcal{T}}(L)
\succeq
I+\alpha K_{\mathcal{S}}.
\]
Inversion reverses the Loewner order, so Proposition~\ref{prop:posterior-mi}
gives the posterior-covariance bound in
Eq.~\ref{eq:posterior-exchange-bound}. The mutual-information bound in
Eq.~\ref{eq:mutual-information-exchange-bound} follows from
Eq.~\ref{eq:mutual-information-from-operator} and the monotonicity of
$\log\det$ on the positive-definite cone.

The reconstruction-error bound follows from
\[
\left(
I+\alpha K_{\mathcal{S}}
\right)^{-1}
\preceq
\frac{1}{\alpha}
\left(
I+K_{\mathcal{S}}
\right)^{-1},
\qquad
0<\alpha\leq1.
\]
Premultiplying and postmultiplying by $P^{1/2}$, using
Proposition~\ref{prop:posterior-mi} for the spatial design, and taking
traces gives Eq.~\ref{eq:trace-risk-bound}.
\end{proof}

\section{Application to deterministic linear dynamics:
explicit exchange conditions and history-length bounds}
\label{sec:linear-dynamics}

We apply here the general information-exchange framework
developed in Sections~\ref{sec:information-operators}
and~\ref{sec:general-theorem} to deterministic linear dynamical systems.
The temporal information operator is written explicitly in terms of the
linear evolution and sensor matrices, allowing concrete exchange
conditions and history-length bounds to be derived for several important
classes of linear dynamics.

For deterministic linear dynamics, the information accumulated over a
finite measurement history can be written explicitly in terms of the
evolution matrix $A$ and the physical sensor matrix $C$. Consider a
linear system $X_{k+1}=AX_k$ for which the effective observation matrix
at time $k$ is $H_k=CA^k$, with independent measurement noise of
covariance $R\succ0$ at each observation time. Applying the
information-operator expression in
Eq.~\ref{eq:direct-design-information-operator}, the accumulated
information from the $L$ independent measurements is
\begin{equation}
K_L
=
P^{1/2}
\left[
\sum_{k=0}^{L-1}
(A^k)^{\mathsf{T}}
C^{\mathsf{T}}R^{-1}CA^k
\right]
P^{1/2},
\label{eq:linear-KL}
\end{equation}
where, for brevity in this section,
$K_L:=K_{\mathcal{T}}(L)$.

Equivalently, using the finite-horizon observability matrix introduced
in Eq.~\ref{eq:temporal-stack},
\[
\mathcal{O}_L
=
\begin{bmatrix}
C\\
CA\\
\vdots\\
CA^{L-1}
\end{bmatrix},
\]
the temporal information operator can be written as
\begin{equation}
K_L
=
P^{1/2}
\mathcal{O}_L^{\mathsf{T}}
\left(
I_L\otimes R^{-1}
\right)
\mathcal{O}_L
P^{1/2},
\label{eq:linear-KL-observability}
\end{equation}
where $I_L$ is the $L\times L$ identity matrix and $\otimes$ denotes the
Kronecker product.

\begin{proposition}[Observability and positive-definite temporal information]
\label{prop:observability-positive-information}
If $P\succ0$ and $R\succ0$, then $K_L\succ0$ if and only if the
finite-horizon observability matrix $\mathcal{O}_L$ has full column rank.
\end{proposition}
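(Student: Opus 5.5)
The plan is to express $K_L$ as a Gram matrix and reduce positive definiteness to a kernel computation. Using Eq.~\ref{eq:linear-KL-observability}, set
\[
B_L := \left(I_L\otimes R^{-1/2}\right)\mathcal{O}_L P^{1/2},
\]
which is the whitened sensing matrix of Eq.~\ref{eq:whitened-sensing-matrix} specialized to the stacked temporal design. Because $R\succ0$, the symmetric square root $R^{-1/2}$ exists, and $(I_L\otimes R^{-1/2})^2=I_L\otimes R^{-1}$. Hence $K_L=B_L^{\mathsf{T}}B_L$, in agreement with Eq.~\ref{eq:information-gram}.

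Next, for any $x\in\mathbb{R}^r$ we have
\[
x^{\mathsf{T}}K_Lx=\bigl\lVert B_Lx\bigr\rVert_2^2 .
\]
So $K_L\succeq0$ always holds, and $K_L\succ0$ holds exactly when $\ker(B_L)=\{0\}$.

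It then remains to identify $\ker(B_L)$ with the kernel of $\mathcal{O}_L$, transported through $P^{1/2}$. The factor $I_L\otimes R^{-1/2}$ is invertible, being block diagonal with invertible blocks, so $B_Lx=0$ if and only if $\mathcal{O}_LP^{1/2}x=0$. Since $P\succ0$, $P^{1/2}$ is invertible, and the substitution $y=P^{1/2}x$ is a bijection of $\mathbb{R}^r$. Therefore $\ker(B_L)=P^{-1/2}\ker(\mathcal{O}_L)$, which is trivial if and only if $\ker(\mathcal{O}_L)=\{0\}$, i.e.\ if and only if $\mathcal{O}_L$ has full column rank $r$. Running this chain of equivalences in both directions gives both implications simultaneously.

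There is no substantial obstacle. The only point requiring care is the invertibility of the two whitening factors, which is exactly where the hypotheses $P\succ0$ and $R\succ0$ are used; without them the kernel of $B_L$ could be strictly larger than $P^{-1/2}\ker(\mathcal{O}_L)$. As a remark, this recovers the observability rank condition of Eq.~\ref{eq:observability-rank} as the criterion for $K_L\succ0$.
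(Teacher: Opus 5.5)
Your proposal is correct and follows the paper's proof essentially verbatim. Both write $x^{\mathsf{T}}K_Lx=\lVert(I_L\otimes R^{-1/2})\mathcal{O}_LP^{1/2}x\rVert_2^2$ and use the invertibility of $P^{1/2}$ and $I_L\otimes R^{-1/2}$ to reduce positive definiteness of $K_L$ to triviality of $\ker(\mathcal{O}_L)$; your identification $\ker(B_L)=P^{-1/2}\ker(\mathcal{O}_L)$ simply makes that kernel step explicit.
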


\begin{proof}
From Eq.~\ref{eq:linear-KL-observability}, for every
$x\in\mathbb{R}^r$,
\[
x^{\mathsf{T}}K_Lx
=
\left\lVert
\left(I_L\otimes R^{-1/2}\right)
\mathcal{O}_L P^{1/2}x
\right\rVert_2^2.
\]
Because $P\succ0$, $P^{1/2}$ is invertible and
$P^{1/2}x\neq0$ whenever $x\neq0$. Likewise, $R\succ0$ implies that
$I_L\otimes R^{-1/2}$ is invertible. Therefore, the quadratic form is
positive for every nonzero $x$ if and only if $\mathcal{O}_L$ has
trivial null space, equivalently full column rank.
\end{proof}

Full column rank of $\mathcal{O}_L$ means that every initial-state
direction influences the finite measurement history. The temporal
information operator is therefore positive definite exactly when no nonzero state direction remains unobserved.

\subsection{Contractive dynamics}
\label{subsec:contractive}

Contractive dynamics describe systems for which the effect of the initial state decreases with time, as can occur in strictly dissipative diffusion or diffusion–reaction systems. When
\[
\lVert A\rVert_2\leq\rho<1,
\]
successive terms in the temporal information operator become
progressively smaller. Consequently, the accumulated information
approaches a finite limiting value rather than increasing without bound.
This subsection quantifies the remaining information beyond a history of
length $L$ and uses this tail estimate to derive a sufficient exchange
length.

Let
\begin{equation}
G
=
C^{\mathsf{T}}R^{-1}C.
\label{eq:sensor-information-G}
\end{equation}

\begin{theorem}[Explicit exchange length for contractive deterministic dynamics]
\label{thm:stable-tail}
Assume $\lVert A\rVert_2\leq\rho<1$. Then the limiting temporal
information operator
\begin{equation}
K_{\infty}
=
P^{1/2}
\left[
\sum_{k=0}^{\infty}
(A^k)^{\mathsf{T}}GA^k
\right]
P^{1/2}
\label{eq:contractive-K-infinity}
\end{equation}
exists, and the finite-history information deficit satisfies
\begin{equation}
\lVert K_{\infty}-K_L\rVert_2
\leq
\frac{
\lVert P\rVert_2
\lVert G\rVert_2
\rho^{2L}
}{
1-\rho^2
}.
\label{eq:stable-tail}
\end{equation}

Fix $\alpha\in(0,1]$. If
\begin{equation}
\gamma
=
\lambda_{\min}
\left(
K_{\infty}-\alpha K_{\mathcal{S}}
\right)
>0,
\label{eq:contractive-gap}
\end{equation}
then the sufficient history length
\begin{equation}
L
\geq
\left\lceil
\frac{
\log\left(
\frac{
\lVert P\rVert_2\lVert G\rVert_2
}{
\gamma(1-\rho^2)
}
\right)
}{
2\log(1/\rho)
}
\right\rceil_{+},
\label{eq:stable-L}
\end{equation}
guarantees
$K_L\succeq\alpha K_{\mathcal{S}}$.
Here,
$\lceil z\rceil_{+}=\max\{0,\lceil z\rceil\}$.
When
$K_{\infty}\not\succeq\alpha K_{\mathcal{S}}$,
the limiting temporal information remains below the required reference
level in at least one direction, and no finite $L$ satisfies the
exchange criterion.
\end{theorem}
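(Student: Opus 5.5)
The plan is to reduce everything to a geometric tail estimate and then invoke parts (iii) and (iv) of Theorem~\ref{thm:unified-exchange}. First I will establish existence of $K_\infty$. Each summand $(A^k)^{\mathsf{T}}GA^k$ is positive semidefinite, and submultiplicativity gives
\[
\lVert (A^k)^{\mathsf{T}}GA^k\rVert_2\leq \lVert A\rVert_2^{2k}\lVert G\rVert_2\leq \rho^{2k}\lVert G\rVert_2 .
\]
Since $\rho<1$, the series converges absolutely in operator norm. Conjugating by the fixed matrix $P^{1/2}$ preserves convergence, so $K_\infty$ exists as the operator-norm limit of the partial sums $K_L$ from Eq.~\ref{eq:linear-KL}. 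These partial sums are nondecreasing in the Loewner order, consistent with Eq.~\ref{eq:monotone-K}. Hence the standing hypotheses of Theorem~\ref{thm:unified-exchange} hold.

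Next I will prove the tail bound in Eq.~\ref{eq:stable-tail}. Write
\[
K_\infty-K_L=P^{1/2}\Bigl[\textstyle\sum_{k\geq L}(A^k)^{\mathsf{T}}GA^k\Bigr]P^{1/2}.
\]
Then I apply $\lVert P^{1/2}MP^{1/2}\rVert_2\leq\lVert P^{1/2}\rVert_2^2\lVert M\rVert_2=\lVert P\rVert_2\lVert M\rVert_2$ and the triangle inequality term by term. This gives $\lVert P\rVert_2\lVert G\rVert_2\sum_{k\geq L}\rho^{2k}=\lVert P\rVert_2\lVert G\rVert_2\rho^{2L}/(1-\rho^2)$. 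So I can take $\phi(L)$ to be this quantity. It is decreasing to zero, as part (iv) requires.

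Then I will apply Theorem~\ref{thm:unified-exchange}(iv) with $\gamma=\lambda_{\min}(K_\infty-\alpha K_{\mathcal{S}})>0$, so that $\Delta_\infty\succeq\gamma I$. It suffices to check that Eq.~\ref{eq:stable-L} implies $\phi(L)\leq\gamma$. Taking logarithms, the condition is
\[
2L\log(1/\rho)\geq\log\!\bigl(\lVert P\rVert_2\lVert G\rVert_2/(\gamma(1-\rho^2))\bigr).
\]
This is exactly the displayed threshold, because $\log(1/\rho)>0$. The only care needed is the positive-part ceiling. If the logarithm is negative, then $\phi(0)\leq\gamma$ already, so $L\geq0$ suffices. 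The degenerate case $G=0$ also needs a remark: then $\phi\equiv0$, and the bound should be read with the convention that any $L$ works. I will also handle the $\rho=0$ edge case separately, where $K_L=K_\infty$ for $L\geq1$.

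Finally, for the impossibility statement, suppose $K_\infty\not\succeq\alpha K_{\mathcal{S}}$. Then some $v$ satisfies Eq.~\ref{eq:no-go-direction}, and Theorem~\ref{thm:unified-exchange}(iii) gives $L_\alpha=\infty$. I expect no real obstacle. The main points of care are the norm inequality for the conjugation by $P^{1/2}$ and the bookkeeping of edge cases in the logarithmic inversion ($\rho=0$, $G=0$, a negative logarithm).
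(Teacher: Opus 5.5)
Your proposal is correct and follows essentially the same route as the paper: the identical term-by-term tail estimate $\lVert P\rVert_2\lVert G\rVert_2\sum_{k\geq L}\rho^{2k}$ is fed into Theorem~\ref{thm:unified-exchange}(iv) and inverted for $L$, and the impossibility clause comes from part~(iii). Your edge-case bookkeeping is harmless, and the $\rho=0$ remark is a genuine refinement the paper omits. The negative-logarithm and $G=0$ cases are actually vacuous, because $\gamma\leq\lambda_{\min}(K_\infty)\leq\lVert K_\infty-K_0\rVert_2\leq\lVert P\rVert_2\lVert G\rVert_2/(1-\rho^2)$ forces the logarithm's argument to be at least one, and $G=0$ would give $K_\infty=0$, hence $\gamma\leq 0$.
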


\begin{proof}
The information remaining beyond the first $L$ observations, referred to
as the tail of the information series, is
\[
K_{\infty}-K_L
=
P^{1/2}
\left[
\sum_{k=L}^{\infty}
(A^k)^{\mathsf{T}}GA^k
\right]
P^{1/2},
\]
whose terms are positive semidefinite. Hence,
\[
\begin{aligned}
\lVert K_{\infty}-K_L\rVert_2
&\leq
\lVert P\rVert_2
\sum_{k=L}^{\infty}
\left\lVert
(A^k)^{\mathsf{T}}GA^k
\right\rVert_2
\\
&\leq
\lVert P\rVert_2
\lVert G\rVert_2
\sum_{k=L}^{\infty}\rho^{2k}
\\
&=
\frac{
\lVert P\rVert_2
\lVert G\rVert_2
\rho^{2L}
}{
1-\rho^2
},
\end{aligned}
\]
which gives the information-deficit bound in
Eq.~\ref{eq:stable-tail}.
The definition of $\gamma$ in Eq.~\ref{eq:contractive-gap} gives
precisely the positive limiting gap required by
Theorem~\ref{thm:unified-exchange}(iv). Requiring the bound in
Eq.~\ref{eq:stable-tail} to be no larger than $\gamma$ and solving the
resulting scalar inequality for $L$ gives
Eq.~\ref{eq:stable-L}. The impossibility statement follows from
Theorem~\ref{thm:unified-exchange}(iii).
\end{proof}

The result has a direct physical interpretation. In a contractive
system, later measurements contain progressively less information about
the initial state because its influence decays as the system evolves.
The temporal information therefore approaches a finite ceiling
$K_{\infty}$. The dependence on $\rho^{2L}$ shows that a rapidly
contracting system, corresponding to smaller $\rho$, reaches its finite
information ceiling over a shorter history. However, rapid contraction
can also reduce the ceiling itself, because some state directions may
decay before they are sufficiently observed.

The resulting sufficient history-length bound uses only global operator
norms and is therefore conservative, but it is broadly applicable.
Every term in the bound is directly computable from the evolution model,
prior covariance, sensor noise, and chosen spatial reference.

\subsection{Spectrally separated unitary dynamics}
\label{subsec:unitary}

Here we consider dynamics that are unitary on the active
invariant subspace, so the magnitude of each modal component is
preserved over time while its phase evolves. Lossless oscillatory or
wave systems provide a familiar example when represented in
energy-normalized modal coordinates. For such systems, repeated measurements observe different combinations of the modes as their relative phases
change. The ability of the temporal history to distinguish different
modes depends on three factors: whether each mode is sensed, whether
distinct modal phases remain separated at the sampling interval, and
how strongly the corresponding sensor signatures overlap.

Let $A$ be unitary on an $r$-dimensional invariant subspace,
\begin{equation}
A
=
V\Lambda V^{*},
\qquad
\Lambda
=
\operatorname{diag}
\left(
e^{i\theta_1},\ldots,e^{i\theta_r}
\right),
\label{eq:unitary-modal-decomposition}
\end{equation}
where the columns $v_j$ of $V$ are the modal basis vectors and $V^{*}$
denotes the conjugate transpose of $V$. A state on this subspace can be
written as
\begin{equation}
X_k
=
\sum_{j=1}^{r}
a_j e^{ik\theta_j}v_j,
\label{eq:unitary-state}
\end{equation}
where $a_j$ is the initial coefficient of mode $v_j$. The corresponding
measurement is
\begin{equation}
Y_k
=
CX_k
=
\sum_{j=1}^{r}
a_j e^{ik\theta_j}Cv_j.
\label{eq:unitary-measurement}
\end{equation}
Thus, the relative contributions of the modal sensor responses $Cv_j$
change across observation times according to the phase factors
$e^{ik\theta_j}$.

Assume that the prior shares the same modal basis,
\begin{equation}
P
=
V\operatorname{diag}(p_1,\ldots,p_r)V^{*},
\qquad
p_j>0.
\label{eq:unitary-prior}
\end{equation}
Define the noise-whitened sensor response of mode $v_j$ by
\[
b_j
=
R^{-1/2}Cv_j,
\]
and let
\begin{equation}
q_j
=
p_j\lVert b_j\rVert_2^2,
\qquad
q_{\min}
=
\min_j q_j.
\label{eq:modal-coupling}
\end{equation}
Here, $q_j$ measures the prior- and noise-weighted sensor coupling of
mode $v_j$. When every active mode is sensed, so that $q_j>0$ for all
$j$, define the modal sensor coherence by
\begin{equation}
\rho_B
=
\max_{j\neq\ell}
\frac{
\left|b_j^{*}b_\ell\right|
}{
\lVert b_j\rVert_2
\lVert b_\ell\rVert_2
},
\label{eq:modal-coherence}
\end{equation}
and the minimum sampled phase separation by
\begin{equation}
s_{\min}
=
\min_{j\neq\ell}
\left|
\sin
\left(
\frac{\theta_j-\theta_\ell}{2}
\right)
\right|.
\label{eq:sampled-phase-separation}
\end{equation}
The quantity $\rho_B$ measures the normalized similarity of the
noise-whitened sensor responses of distinct modes. The condition
$s_{\min}>0$ excludes temporal aliasing between distinct modes; that is,
it prevents two different modes from having identical sampled phase
evolution.

\begin{theorem}[Spectral-separation history bound]
\label{thm:unitary-L}
If $q_j>0$ for every $j$ and $s_{\min}>0$, then
\begin{equation}
K_L
\succeq
q_{\min}
\left[
L
-
\frac{(r-1)\rho_B}{s_{\min}}
\right]I.
\label{eq:unitary-lower}
\end{equation}
Consequently, with
\[
\kappa_{\mathcal{S}}
=
\lambda_{\max}(K_{\mathcal{S}}),
\]
where $\lambda_{\max}(K_{\mathcal{S}})$ denotes the largest eigenvalue
of the spatial-reference information operator, the sufficient condition
\begin{equation}
L
\geq
\left\lceil
\frac{(r-1)\rho_B}{s_{\min}}
+
\frac{\alpha\kappa_{\mathcal{S}}}{q_{\min}}
\right\rceil
\label{eq:unitary-L}
\end{equation}
guarantees
$K_L\succeq\alpha K_{\mathcal{S}}$.
\end{theorem}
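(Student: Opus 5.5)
The approach is to diagonalize everything in the modal basis and reduce the operator inequality \eqref{eq:unitary-lower} to a scalar estimate on a Hermitian quadratic form. This form splits into a diagonal part that grows linearly in $L$ and an off-diagonal part controlled by bounded geometric sums of phases. I will treat $V$ as unitary on the $r$-dimensional active subspace, identified with $\mathbb{C}^r$. It suffices to prove $z^{*}K_Lz\ge q_{\min}[L-(r-1)\rho_B/s_{\min}]\lVert z\rVert_2^2$ for all complex $z$. The real quadratic form is the restriction of this Hermitian form, since $K_L$ is Hermitian by \eqref{eq:linear-KL}.

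\textbf{Step 1 (modal expansion).} Put $c=V^{*}z$, so that $\lVert c\rVert_2=\lVert z\rVert_2$. By \eqref{eq:unitary-prior}, $P^{1/2}z=\sum_j\sqrt{p_j}\,c_jv_j$, and by \eqref{eq:unitary-modal-decomposition}, $A^kv_j=e^{ik\theta_j}v_j$. Then \eqref{eq:linear-KL} gives
\[
z^{*}K_Lz=\sum_{k=0}^{L-1}\Bigl\lVert\sum_{j=1}^{r}c_j\sqrt{p_j}\,e^{ik\theta_j}b_j\Bigr\rVert_2^2
=\sum_{j,\ell}\overline{c_j}c_\ell\sqrt{p_jp_\ell}\,(b_j^{*}b_\ell)\sum_{k=0}^{L-1}e^{ik(\theta_\ell-\theta_j)}.
\]

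\textbf{Step 2 (diagonal versus off-diagonal).} The terms with $j=\ell$ sum to $L\sum_j|c_j|^2q_j$. For $j\ne\ell$, the geometric sum has modulus $|\sin(L(\theta_\ell-\theta_j)/2)|/|\sin((\theta_\ell-\theta_j)/2)|\le 1/s_{\min}$ by \eqref{eq:sampled-phase-separation}. Definition \eqref{eq:modal-coherence} gives $\sqrt{p_jp_\ell}\,|b_j^{*}b_\ell|\le\rho_B\sqrt{q_jq_\ell}$. Set $w_j=|c_j|\sqrt{q_j}$. Then
\[
z^{*}K_Lz\ \ge\ L\sum_j w_j^2-\frac{\rho_B}{s_{\min}}\sum_{j\ne\ell}w_jw_\ell\ \ge\ \Bigl[L-\frac{(r-1)\rho_B}{s_{\min}}\Bigr]\sum_j w_j^2 .
\]
The last step uses $\sum_{j\ne\ell}w_jw_\ell=(\sum_j w_j)^2-\sum_j w_j^2\le(r-1)\sum_j w_j^2$, which follows from Cauchy--Schwarz.

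\textbf{Step 3 (conclusion).} Consider the sign of the bracket. If the bracket is nonnegative, then $\sum_j w_j^2\ge q_{\min}\lVert c\rVert_2^2$ yields \eqref{eq:unitary-lower}. If it is negative, the right side of \eqref{eq:unitary-lower} is $\preceq0$, and the inequality holds trivially because $K_L\succeq0$. For the second claim, assume $L$ satisfies \eqref{eq:unitary-L}. Then the bracket is at least $\alpha\kappa_{\mathcal{S}}/q_{\min}$, so $K_L\succeq\alpha\kappa_{\mathcal{S}}I\succeq\alpha K_{\mathcal{S}}$, since $K_{\mathcal{S}}\preceq\lambda_{\max}(K_{\mathcal{S}})I$.

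The main point needing care is Step 2. The off-diagonal cross terms must be bounded uniformly in $L$ through the Dirichlet-kernel bound $1/s_{\min}$, and combining them with the coherence requires $\rho_B$ to be attached to the weights $\sqrt{q_jq_\ell}$ rather than to $q_{\min}$. This is why the sign split in Step 3 is needed before passing to $q_{\min}$. A secondary, routine point is the passage between the real quadratic form and the complex modal coordinates.
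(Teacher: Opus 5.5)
Your proof is correct and follows the paper's argument. You use the same modal expansion of $K_L$, the same Dirichlet-kernel bound $1/s_{\min}$ on the off-diagonal phase sums, the same coherence estimate $\sqrt{p_jp_\ell}\,|b_j^{*}b_\ell|\le\rho_B\sqrt{q_jq_\ell}$, and the same sign split before passing to $q_{\min}$. The one difference is cosmetic: your weighted variables $w_j=|c_j|\sqrt{q_j}$ with $\sum_{j\ne\ell}w_jw_\ell\le(r-1)\sum_j w_j^2$ stand in for the paper's Gershgorin bound on $D_L^{-1/2}K_LD_L^{-1/2}$, and they give the same estimate written as a quadratic form.
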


\begin{proof}
In modal coordinates, the $(j,\ell)$ entry of $K_L$ is
\[
\sqrt{p_jp_\ell}\,
b_j^{*}b_\ell
\sum_{k=0}^{L-1}
e^{ik(\theta_\ell-\theta_j)}.
\]
For $j=\ell$, the phase factor is equal to one, and therefore
\[
(K_L)_{jj}
=
L p_j\lVert b_j\rVert_2^2
=
Lq_j.
\]
Thus, the information accumulated along each individual mode grows
linearly with $L$.

For $j\neq\ell$, the finite geometric sum satisfies
\[
\left|
\sum_{k=0}^{L-1}
e^{ik(\theta_\ell-\theta_j)}
\right|
\leq
\frac{1}{
\left|
\sin\left((\theta_j-\theta_\ell)/2\right)
\right|
}.
\]
Consequently, the normalized magnitude of each off-diagonal entry
satisfies
\[
\frac{
\left|(K_L)_{j\ell}\right|
}{
\sqrt{(K_L)_{jj}(K_L)_{\ell\ell}}
}
\leq
\frac{
\rho_B
}{
L\left|
\sin\left((\theta_j-\theta_\ell)/2\right)
\right|
}
\leq
\frac{\rho_B}{Ls_{\min}}.
\]

Define
\[
D_L
=
\operatorname{diag}
(Lq_1,\ldots,Lq_r).
\]
Since $q_j>0$ for every $j$, $D_L$ is positive definite. Applying
Gershgorin's theorem to the normalized Hermitian matrix
$D_L^{-1/2}K_LD_L^{-1/2}$, whose diagonal entries are equal to one,
gives
\[
D_L^{-1/2}K_LD_L^{-1/2}
\succeq
\left[
1
-
\frac{(r-1)\rho_B}{Ls_{\min}}
\right]I.
\]
Premultiplying and postmultiplying by $D_L^{1/2}$ gives
\[
K_L
\succeq
\left[
1
-
\frac{(r-1)\rho_B}{Ls_{\min}}
\right]D_L.
\]

When the bracketed factor is nonnegative, using
$D_L\succeq Lq_{\min}I$ gives
Eq.~\ref{eq:unitary-lower}. If the bracketed factor is negative, the
same bound follows directly from $K_L\succeq0$, because the right-hand
side of Eq.~\ref{eq:unitary-lower} is then negative semidefinite.
Therefore, Eq.~\ref{eq:unitary-lower} holds in both cases.

Since
$K_{\mathcal{S}}\preceq\kappa_{\mathcal{S}}I$,
the condition in Eq.~\ref{eq:unitary-L} ensures that the lower bound in
Eq.~\ref{eq:unitary-lower} is at least
$\alpha\kappa_{\mathcal{S}}I$, and hence
$K_L\succeq\alpha K_{\mathcal{S}}$.
\end{proof}

Equation~\ref{eq:unitary-L} relates the sufficient history length to
three observational and dynamical quantities. The value $q_{\min}$
measures the coupling of the least well-observed active mode to the
sensors after prior and noise weighting. The phase-separation parameter
$s_{\min}$ measures how distinct the sampled phase evolutions of
different modes are. The coherence $\rho_B$ measures the normalized
similarity of their noise-whitened sensor responses. Stronger modal
coupling, greater phase separation, and lower coherence reduce the
sufficient history length, whereas temporal aliasing ($s_{\min}=0$)
leaves distinct modes indistinguishable through additional temporal
measurements alone.

\subsection{Modewise dissipative dynamics}
\label{subsec:dissipative}

Dissipative systems introduce an additional effect: repeated
measurements may approach a finite information limit because an initial
mode decays before it can continue to influence the sensors. Under
simultaneous modal diagonalization, the dissipative system separates
into independent modal problems, allowing determination of an exact
exchange length rather than a conservative sufficient bound.

\begin{assumption}[Simultaneous modal diagonalization]
\label{ass:modal}
On the target subspace, suppose that $A$, $P$,
$G=C^{\mathsf{T}}R^{-1}C$, and $K_{\mathcal{S}}$ are diagonal in the
same orthonormal basis $V$:
\begin{equation}
\begin{aligned}
A
&=
V\operatorname{diag}(\lambda_j)V^{*},
\\
P
&=
V\operatorname{diag}(p_j)V^{*},
\\
G
&=
V\operatorname{diag}(g_j)V^{*},
\\
K_{\mathcal{S}}
&=
V\operatorname{diag}(s_j)V^{*},
\end{aligned}
\label{eq:simultaneous-modal-diagonalization}
\end{equation}
where
$p_j>0$, $g_j\geq0$, $s_j\geq0$, and
$|\lambda_j|\leq1$.
\end{assumption}

Here, $\lambda_j$ is the eigenvalue of $A$ associated with mode $v_j$,
and $p_j$, $g_j$, and $s_j$ are the corresponding diagonal entries of
$P$, $G$, and $K_{\mathcal{S}}$, respectively. Define
\begin{equation}
q_j:=p_jg_j,
\qquad
r_j:=|\lambda_j|.
\label{eq:modewise-parameters}
\end{equation}
Thus, $q_j$ is the prior-weighted sensor information for mode $j$, while
$r_j$ is the fraction of its amplitude retained after one time step.

\begin{theorem}[Exact modewise exchange length for dissipative dynamics]
\label{thm:modewise}
Under Assumption~\ref{ass:modal},
\begin{equation}
K_L
=
V
\operatorname{diag}
\left(
q_j
\sum_{k=0}^{L-1}r_j^{2k}
\right)
V^{*}.
\label{eq:modewise-information}
\end{equation}

For a required fraction $\alpha\in(0,1]$, a finite exchange length
exists if and only if every mode with $s_j>0$ satisfies one of the
following conditions:
\begin{enumerate}[label=\arabic*)]
    \item
    $r_j=1$ and $q_j>0$;

    \item
    $r_j=0$ and $q_j\geq\alpha s_j$;

    \item
    $0<r_j<1$ and
    \begin{equation}
    \frac{q_j}{1-r_j^2}
    >
    \alpha s_j.
    \label{eq:strict-ceiling}
    \end{equation}
\end{enumerate}

The exact minimum exchange length is
\begin{equation}
L_{\alpha}
=
\max_j \ell_j(\alpha),
\label{eq:modewise-L}
\end{equation}
where, for $s_j>0$,
\begin{equation}
\ell_j(\alpha)
=
\begin{cases}
\left\lceil
\dfrac{\alpha s_j}{q_j}
\right\rceil,
&
r_j=1,
\\[2mm]
1,
&
r_j=0
\ \text{and}\
q_j\geq\alpha s_j,
\\[2mm]
\left\lceil
\dfrac{\log(1-a_j)}
{2\log r_j}
\right\rceil_{+},
&
0<r_j<1,\quad a_j<1,
\\[3mm]
\infty,
&
\text{otherwise},
\end{cases}
\label{eq:ellj}
\end{equation}
with
\begin{equation}
a_j
:=
\frac{
\alpha s_j(1-r_j^2)
}{
q_j
}
\label{eq:modewise-aj}
\end{equation}
whenever $q_j>0$. Set $a_j=+\infty$ when $q_j=0$, and set
$\ell_j(\alpha)=0$ when $s_j=0$.

For $0<r_j<1$, equality in the information-ceiling condition,
\begin{equation}
\frac{q_j}{1-r_j^2}
=
\alpha s_j,
\label{eq:ceiling-equality}
\end{equation}
is attained only in the limit $L\rightarrow\infty$ and therefore does
not produce a finite exchange length.
\end{theorem}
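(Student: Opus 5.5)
We will first derive the diagonal form in Eq.~\ref{eq:modewise-information} by substituting Assumption~\ref{ass:modal} into the explicit operator of Eq.~\ref{eq:linear-KL}. With $V$ unitary, $A^k=V\operatorname{diag}(\lambda_j^k)V^{*}$ and $(A^k)^{*}GA^k=V\operatorname{diag}(|\lambda_j|^{2k}g_j)V^{*}=V\operatorname{diag}(r_j^{2k}g_j)V^{*}$. Conjugating by $P^{1/2}=V\operatorname{diag}(p_j^{1/2})V^{*}$ multiplies the $j$th entry by $p_j$. Summing over $k=0,\dots,L-1$ then gives the diagonal entries $q_j S_j(L)$, where $S_j(L):=\sum_{k=0}^{L-1}r_j^{2k}$. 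For real $A$ we read transposes as conjugate transposes in this basis. This does not change the Loewner comparisons, because both sides are simultaneously unitarily diagonalized.

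Since $K_L$ and $K_{\mathcal{S}}$ are diagonal in the same basis, $K_L-\alpha K_{\mathcal{S}}=V\operatorname{diag}(q_jS_j(L)-\alpha s_j)V^{*}$. Thus $K_L\succeq\alpha K_{\mathcal{S}}$ holds if and only if $q_jS_j(L)\ge\alpha s_j$ for every $j$. By the definition of $L_\alpha$ in Eq.~\ref{eq:exchange-length}, and because each $S_j(L)$ is nondecreasing in $L$, we obtain $L_\alpha=\max_j \ell_j(\alpha)$, where
\[
\ell_j(\alpha)=\min\{L\ge0:\ q_jS_j(L)\ge\alpha s_j\}.
\]
The minimum of the empty set is taken to be $\infty$, and $\ell_j=0$ when $s_j=0$, since $S_j(0)=0$ then suffices. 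It therefore remains to solve a scalar inequality for each mode.

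We treat the modes with $s_j>0$ case by case.
\begin{enumerate}[label=\arabic*)]
\item If $r_j=1$, then $S_j(L)=L$. A finite solution exists if and only if $q_j>0$, and the smallest one is $\lceil\alpha s_j/q_j\rceil$.
\item If $r_j=0$, then $S_j(L)=1$ for all $L\ge1$, using the convention $0^0=1$. A solution exists if and only if $q_j\ge\alpha s_j$, and then $\ell_j=1$; otherwise $\ell_j=\infty$.
\item If $0<r_j<1$, then $S_j(L)=(1-r_j^{2L})/(1-r_j^2)$, which increases strictly to the ceiling $q_j/(1-r_j^2)$ without reaching it. When $q_j=0$ no solution exists, consistent with $a_j=+\infty$. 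When $q_j>0$ the inequality becomes $1-r_j^{2L}\ge a_j$, i.e.\ $r_j^{2L}\le 1-a_j$. This is solvable exactly when $a_j<1$, which is the strict ceiling condition Eq.~\ref{eq:strict-ceiling}. Taking logarithms and dividing by $2\log r_j<0$, which flips the inequality, gives $L\ge\log(1-a_j)/(2\log r_j)$, whose smallest admissible integer is the stated $\lceil\cdot\rceil_+$.
\end{enumerate}
When $a_j=1$, which is the ceiling equality Eq.~\ref{eq:ceiling-equality}, we have $r_j^{2L}>0=1-a_j$ for every finite $L$. Equality is then reached only as $L\to\infty$, which establishes the final claim. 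These cases are consistent with Theorem~\ref{thm:unified-exchange}(iii), since $K_\infty$ is diagonal with entries equal to the per-mode ceilings.

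The main obstacle is bookkeeping rather than depth. Two points need care. First, the $r_j=0$ case depends on the convention $0^0=1$, so that the $k=0$ term survives. Second, the $\lceil\cdot\rceil_+$ form must equal the true minimum even when $a_j\le0$; this cannot happen when $s_j>0$ and $q_j>0$, but it is worth checking. We must also verify that positivity of $L$ is handled correctly, noting that the $r_j=1$ formula gives $\ell_j\ge1$ because $\alpha s_j>0$. Finally, we will state that the existence criterion is the conjunction over modes of finiteness of $\ell_j$, which is exactly the listed trichotomy.
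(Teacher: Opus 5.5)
Your proposal is correct and follows essentially the same route as the paper: simultaneous diagonalization reduces $K_L\succeq\alpha K_{\mathcal{S}}$ to the scalar inequalities $q_j\sum_{k=0}^{L-1}r_j^{2k}\geq\alpha s_j$, which are solved separately for $r_j=1$, $r_j=0$, and $0<r_j<1$, and the maximum over modes gives $L_\alpha$. You make explicit what the paper leaves implicit (the diagonal form derived from Eq.~\ref{eq:linear-KL}, the monotonicity of each modal sum that justifies the maximum, and the separate treatment of $q_j=0$); your closing consistency remark about $K_\infty$ needs the caveat that $K_\infty$ does not exist when some $r_j=1$ with $q_j>0$, but the proof does not rely on that remark.
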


\begin{proof}
Simultaneous diagonalization eliminates cross-modal terms.
Consequently, the operator inequality
$K_L\succeq\alpha K_{\mathcal{S}}$
is equivalent to the independent scalar inequalities
\begin{equation}
q_j
\sum_{k=0}^{L-1}
r_j^{2k}
\geq
\alpha s_j
\qquad
\text{for every }j.
\label{eq:modewise-scalar-condition}
\end{equation}

When $r_j=1$, the mode does not decay and
$\sum_{k=0}^{L-1}r_j^{2k}=L$. The modal condition therefore becomes
$Lq_j\geq\alpha s_j$, which gives the first case of
Eq.~\ref{eq:ellj}.

When $r_j=0$, only the first observation contains information about the
initial modal amplitude, and
\[
\sum_{k=0}^{L-1}r_j^{2k}
=
1
\qquad
\text{for every }L\geq1.
\]
A finite exchange length therefore exists exactly when
$q_j\geq\alpha s_j$, giving the second case of
Eq.~\ref{eq:ellj}.

For $0<r_j<1$, the geometric series gives
\[
q_j
\sum_{k=0}^{L-1}r_j^{2k}
=
q_j
\frac{1-r_j^{2L}}{1-r_j^2}.
\]
Hence the modal condition is
\[
q_j
\frac{1-r_j^{2L}}{1-r_j^2}
\geq
\alpha s_j,
\]
or equivalently,
\[
r_j^{2L}
\leq
1-a_j.
\]
A finite solution exists exactly when $a_j<1$, which is equivalent to
the strict ceiling condition in Eq.~\ref{eq:strict-ceiling}. Taking
logarithms then gives the third case of Eq.~\ref{eq:ellj}. Since the
full operator inequality holds exactly when every modal inequality
holds, the largest modal history requirement determines the exact
exchange length in Eq.~\ref{eq:modewise-L}.
\end{proof}

The result highlights the dual role of dissipation. Mode-dependent decay
helps distinguish temporal signatures, while the same decay limits the
total information that can be accumulated about an initial modal
amplitude. For mode $j$, the information contributed by observation
$k$ scales as $q_jr_j^{2k}$, so the accumulated information approaches
the ceiling
$q_j/(1-r_j^2)$. A finite exchange length exists only when this ceiling
strictly exceeds the required level $\alpha s_j$; in that case,
Eq.~\ref{eq:ellj} gives the exact number of observations required.

\subsection{Cyclic-advection dynamics}
\label{subsec:cyclic-advection}

Cyclic advection provides the clearest exact example of space--time
exchange. Consider a discrete representation of advection on a periodic
domain in which the sampling interval and grid spacing are chosen so
that the state advances by one grid index per time step. Let $A$
cyclically shift an $N$-point state, and let
$C=e_1^{\mathsf{T}}$ measure one fixed grid location. As the state moves
past the sensor, successive observations measure different coordinates
of the initial state. For $L\leq N$, the rows $CA^k$ are distinct
coordinate vectors.

\begin{corollary}[One sensor over one cycle equals a full spatial array]
\label{cor:cyclic}
With equal independent noise variance for every scalar observation, one
fixed sensor observed over $N$ cyclic shifts is exactly
information-equivalent to all $N$ coordinate sensors observed
simultaneously, for every positive-definite prior.
\end{corollary}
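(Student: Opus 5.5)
The plan is to prove the corollary by showing that the two designs have the same whitened sensing matrix up to a permutation of rows. Theorem~\ref{thm:gram-characterization} then gives $K_{\mathcal{T}}=K_{\mathcal{S}}$, and Theorem~\ref{thm:bayes-equivalence} transfers this to all Bayesian quantities under the given prior.

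First, I will fix the cyclic shift $A$ by $(Ax)_i=x_{i-1}$, with indices taken modulo $N$. Then $e_1^{\mathsf{T}}A^k=e_{\pi(k)}^{\mathsf{T}}$ for a coordinate index $\pi(k)$ determined by the shift direction. Because $A$ is a permutation matrix generating the cyclic group of order $N$, the map $k\mapsto\pi(k)$ is a bijection from $\{0,\dots,N-1\}$ onto $\{1,\dots,N\}$. Hence the stacked matrix $\mathcal{O}_N$ in Eq.~\ref{eq:temporal-stack} equals $\Pi H_{\mathcal{S}}$, where $H_{\mathcal{S}}=I_N$ and $\Pi$ is a permutation matrix. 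The only detail to check is that no coordinate repeats within one cycle. This follows because $A^k\neq I$ for $0<k<N$, which is the observation already noted before the statement.

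Second, I will incorporate the noise. Let $\sigma^2$ be the common variance. The temporal noise covariance is $R_{\mathcal{T}}=\sigma^2 I_N$ (independent across times) and the spatial one is $R_{\mathcal{S}}=\sigma^2 I_N$. Then $B_{\mathcal{T}}=\sigma^{-1}\Pi P^{1/2}$ and $B_{\mathcal{S}}=\sigma^{-1}P^{1/2}$, so $B_{\mathcal{T}}=\Pi B_{\mathcal{S}}$. Since $\Pi$ is orthogonal, it is in particular a partial isometry that is isometric on $\operatorname{range}(B_{\mathcal{S}})$. Condition (iii) of Theorem~\ref{thm:gram-characterization} therefore holds, which yields $K_{\mathcal{T}}(N)=K_{\mathcal{S}}$. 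Equivalently, by Eq.~\ref{eq:linear-KL} directly,
\[
K_N=\sigma^{-2}P^{1/2}\Bigl[\sum_{k=0}^{N-1}e_{\pi(k)}e_{\pi(k)}^{\mathsf{T}}\Bigr]P^{1/2}=\sigma^{-2}P,
\]
which equals $K_{\mathcal{S}}$ by Eq.~\ref{eq:direct-design-information-operator}.

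Finally, the claim holds for every positive-definite prior because $P$ enters both operators only through the same congruence $P^{1/2}(\cdot)P^{1/2}$. No assumption on the structure of $P$ (for example, shift invariance) is used. I do not expect a genuine obstacle. The step needing the most care is the independence and equal-variance bookkeeping: it ensures that the stacked temporal noise covariance is exactly $\sigma^2 I_N$, and therefore that the whitening does not break the permutation relation.
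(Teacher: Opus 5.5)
Your proposal is correct and follows essentially the paper's own argument: $\mathcal{O}_N$ is a row permutation of $H_{\mathcal{S}}=I_N$, so with $R_{\mathcal{T}}=R_{\mathcal{S}}=\sigma^2 I_N$ the whitened Gram matrices coincide and Theorem~\ref{thm:gram-characterization} gives $K_{\mathcal{T}}(N)=K_{\mathcal{S}}=\sigma^{-2}P$. Your direct computation and the appeal to Theorem~\ref{thm:bayes-equivalence} are harmless extras, and your justification that the rows $e_1^{\mathsf{T}}A^k$ are distinct is valid here: it really needs $A^k$ to be fixed-point-free for $0<k<N$, which for powers of an $N$-cycle is equivalent to $A^k\neq I$.
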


\begin{proof}
The temporal observation matrix
\[
\mathcal{O}_N
=
\begin{bmatrix}
C\\
CA\\
\vdots\\
CA^{N-1}
\end{bmatrix}
\]
is a row permutation of the spatial observation matrix $I_N$. With equal
independent noise variance for every scalar observation, their prior-
and noise-whitened Gram matrices are therefore identical, so the
temporal and spatial information operators are equal. The result follows
from Theorem~\ref{thm:gram-characterization}.
\end{proof}

\section{Stochastic state estimation and finite memory}
\label{sec:stochastic}

The deterministic results in Section~\ref{sec:linear-dynamics} describe how information about a state accumulates through measurements linked by known dynamics. This section extends the framework to stochastic dynamics, where process noise introduces new uncertainty during the measurement history and progressively reduces the relevance of older observations. Stochastically forced linear transport or shear systems
provide representative examples of this setting. The information
contained in a finite temporal window is characterized here through the conditional covariance of the state given that window. The analysis determines the limiting information available from the complete observation history and uses the rate of Riccati convergence to derive a sufficient finite window length.

Consider the stationary version of the stochastic linear model
introduced in Eq.~\ref{eq:stochastic-linear-model}. Assume that the
state has a positive-definite stationary covariance $P$ satisfying
\begin{equation}
P
=
APA^{\mathsf{T}}+Q.
\label{eq:stationary-covariance}
\end{equation}

For the length-$L$ past observation window
\[
Y_{t-L+1:t}
:=
\left(
Y_{t-L+1},\ldots,Y_t
\right),
\]
define
\begin{equation}
P_L
:=
\operatorname{Cov}
\left(
X_t\mid Y_{t-L+1:t}
\right),
\qquad
K_L
:=
P^{1/2}P_L^{-1}P^{1/2}-I.
\label{eq:finite-window-information}
\end{equation}
Here, $P_L$ is the finite-window posterior covariance based only on the
retained $L$ observations. It is distinct from the standard filtering
covariance $P_{k\mid k}$ in Eq.~\ref{eq:filtering-covariance}, which is
based on the full observation history available to the filter up to
time $k$.

\begin{proposition}[Monotonicity and infinite-past information ceiling]
\label{prop:stochastic-monotone}
Under the preceding assumptions,
\begin{equation}
P_{L+1}
\preceq
P_L
\preceq
P,
\qquad
K_{L+1}
\succeq
K_L
\succeq
0.
\label{eq:stochastic-monotonicity}
\end{equation}
Under the standard stabilizability and detectability conditions,
$P_L$ converges to the infinite-past filtering covariance $P_\infty$,
and $K_L$ converges to the corresponding information operator
\begin{equation}
P_\infty
:=
\lim_{L\rightarrow\infty}P_L,
\qquad
K_\infty
:=
P^{1/2}P_\infty^{-1}P^{1/2}-I.
\label{eq:stochastic-infinite-past}
\end{equation}
When
$K_\infty\not\succeq\alpha K_{\mathcal{S}}$,
the infinite-past information remains below the required
spatial-reference level in at least one direction, so the exchange
length is infinite:
\[
L_\alpha=\infty.
\]
\end{proposition}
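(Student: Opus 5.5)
The argument has three parts: monotonicity from nested windows, convergence from Riccati theory, and the impossibility statement from Theorem~\ref{thm:unified-exchange}(iii).

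\textbf{Monotonicity.} The windows are nested: $Y_{t-L+1:t}$ is a subvector of $Y_{t-L:t}$. In the jointly Gaussian setting, conditioning on more observations cannot increase the conditional covariance. Concretely, the law of total covariance gives
\[
P_L=\operatorname{Cov}(X_t\mid Y_{t-L+1:t})=\mathbb{E}\bigl[\operatorname{Cov}(X_t\mid Y_{t-L:t})\mid Y_{t-L+1:t}\bigr]+\operatorname{Cov}\bigl(\mathbb{E}[X_t\mid Y_{t-L:t}]\mid Y_{t-L+1:t}\bigr).
\]
Gaussian conditional covariances are deterministic, so the first term equals $P_{L+1}$ and the second term is positive semidefinite. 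Hence $P_{L+1}\preceq P_L$. The same argument applied to the empty window, whose conditional covariance is the stationary covariance $P$, gives $P_L\preceq P$.

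\textbf{Passing to the information operators.} I will first show $P_L\succ0$. Since $R\succ0$, $X_t$ given the window retains the measurement noise uncertainty, and $P\succ0$ with $P_L=P-\operatorname{Cov}(\mathbb{E}[X_t\mid Y])$; I will argue positivity through the Kalman-form expression. Congruence by $P^{-1/2}$ then gives $\Pi_{L+1}\preceq\Pi_L\preceq I$, where $\Pi_L=P^{-1/2}P_LP^{-1/2}$. Matrix inversion is order-reversing on the positive-definite cone, so $\Pi_{L+1}^{-1}\succeq\Pi_L^{-1}\succeq I$. Subtracting $I$ yields $K_{L+1}\succeq K_L\succeq0$, since $K_L=\Pi_L^{-1}-I$.

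\textbf{Convergence.} A nonincreasing sequence in the Loewner order that is bounded below by $0$ converges entrywise, by polarization of the monotone scalar sequences $x^{\mathsf T}P_Lx$. So $P_\infty=\lim_L P_L$ exists regardless of any additional hypotheses. Stationarity implies that $P_L$ equals the Riccati filtering covariance $P_{k\mid k}$ after $L$ updates initialized at $P_{0\mid-1}=P$. Under stabilizability of $(A,Q^{1/2})$ and detectability of $(A,C)$, the results cited in \cite{HagerHorowitz1976,Bougerol1993} identify this limit with the unique stabilizing solution of the filtering Riccati equation, which is positive definite because $Q$ enters the prediction step and $R\succ0$. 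Matrix inversion is continuous on the positive-definite cone, so $K_L\to K_\infty$ in operator norm.

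\textbf{Infinite exchange length.} This setting now satisfies the hypotheses of Theorem~\ref{thm:unified-exchange}: $K_L$ is nondecreasing and converges in operator norm. If $K_\infty\not\succeq\alpha K_{\mathcal S}$, there is a direction $v$ with $v^{\mathsf T}K_\infty v<\alpha v^{\mathsf T}K_{\mathcal S}v$. Since $K_L\preceq K_\infty$ for every $L$, the same strict inequality holds for every finite $L$. Part (iii) of Theorem~\ref{thm:unified-exchange} then gives $L_\alpha=\infty$.

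\textbf{Main obstacle.} The delicate step is confirming $P_\infty\succ0$, so that $K_\infty$ is well defined and finite. I expect to obtain it from the prediction step: $P_{L}\succeq$ an expression involving $(A P_{L-1}A^{\mathsf T}+Q)$ through the update with $R\succ0$. If $Q$ is only positive semidefinite, I will instead rely on the hypothesis $P\succ0$ together with the stabilizing Riccati solution from the cited references.
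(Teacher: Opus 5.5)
Your proposal is correct and follows the paper's own three-step proof. Nested Gaussian conditioning gives $P_{L+1}\preceq P_L\preceq P$, prior whitening with order-reversing inversion gives $K_{L+1}\succeq K_L\succeq 0$, and the ceiling claim is exactly Theorem~\ref{thm:unified-exchange}(iii). What you add goes beyond the paper, which simply takes convergence and invertibility of $P_\infty$ from the cited Riccati results: you note that $\lim_L P_L$ exists from monotonicity alone, you identify $P_L$ with Riccati iterates started at $P$, and you discuss positivity of $P_\infty$. For $Q\succ0$ positivity holds uniformly in $L$ via $P_L\succeq(Q^{-1}+C^{\mathsf T}R^{-1}C)^{-1}$. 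For singular $Q$, stationarity with $P\succ0$ plus stabilizability forces $(A,Q^{1/2})$ to be controllable; the kernel of the limiting prediction covariance is $A^{\mathsf T}$-invariant and contained in $\ker Q$, so it must be trivial.
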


\begin{proof}
The length-$(L+1)$ window $Y_{t-L:t}$ contains the length-$L$ window
$Y_{t-L+1:t}$ together with one additional observation, and conditional
covariance decreases under nested Gaussian conditioning. The information
ordering follows by inversion after prior whitening. The ceiling
statement follows from Theorem~\ref{thm:unified-exchange}(iii).
\end{proof}

Under the stabilizability and detectability conditions stated above,
$\lVert P_L-P_\infty\rVert_2$ decays exponentially with the window
length $L$ \cite{HagerHorowitz1976,Bougerol1993}. This implies that the
influence of observations omitted from the finite window decreases
exponentially as $L$ increases. This exponential-forgetting property
justifies approximating the complete observation history by a finite
window and underlies related finite-memory methods for learning linear
dynamical systems \cite{Kozdoba2019}.

\begin{theorem}[Finite-history exchange from Riccati convergence]
\label{thm:stochastic-L}
Assume
\begin{equation}
\lVert P_L-P_\infty\rVert_2
\leq
c_P\chi^L,
\qquad
c_P>0,
\qquad
0<\chi<1,
\label{eq:riccati-envelope}
\end{equation}
where $c_P$ sets the magnitude of the convergence bound and $\chi$ is
the forgetting factor controlling its decay with $L$. Let
\begin{equation}
p_*
:=
\lambda_{\min}(P_\infty)
>0
\label{eq:stochastic-pstar}
\end{equation}
denote the smallest eigenvalue of the limiting covariance. Then
\begin{equation}
\lVert K_\infty-K_L\rVert_2
\leq
\frac{
\lVert P\rVert_2 c_P
}{
p_*^2
}
\chi^L.
\label{eq:K-convergence}
\end{equation}

Fix $\alpha\in(0,1]$. If
\begin{equation}
\gamma
=
\lambda_{\min}
\left(
K_\infty-\alpha K_{\mathcal{S}}
\right)
>0,
\label{eq:stochastic-gap}
\end{equation}
then
\begin{equation}
L
\geq
\left\lceil
\frac{
\log\left(
\frac{
\lVert P\rVert_2c_P
}{
\gamma p_*^2
}
\right)
}{
\log(1/\chi)
}
\right\rceil_{+}
\label{eq:stochastic-L}
\end{equation}
is sufficient to guarantee
$K_L\succeq\alpha K_{\mathcal{S}}$.
\end{theorem}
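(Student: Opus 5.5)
The argument has two parts. The first turns the covariance-convergence envelope in Eq.~\ref{eq:riccati-envelope} into an information-operator envelope of the form required in Theorem~\ref{thm:unified-exchange}(iv). The second applies that part of the theorem and solves for $L$.

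For the first part, I will write
\[
K_\infty-K_L
=
P^{1/2}\bigl(P_\infty^{-1}-P_L^{-1}\bigr)P^{1/2}
\]
and use the resolvent-type identity
\[
P_\infty^{-1}-P_L^{-1}
=
P_\infty^{-1}(P_L-P_\infty)P_L^{-1}.
\]
Taking operator norms gives
\[
\lVert K_\infty-K_L\rVert_2
\leq
\lVert P\rVert_2\,
\lVert P_\infty^{-1}\rVert_2\,
\lVert P_L^{-1}\rVert_2\,
\lVert P_L-P_\infty\rVert_2 .
\]
Here $\lVert P^{1/2}\rVert_2^2=\lVert P\rVert_2$, and $\lVert P_\infty^{-1}\rVert_2=1/p_*$.

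The only point needing care is $\lVert P_L^{-1}\rVert_2$. Proposition~\ref{prop:stochastic-monotone} gives $P_{L+1}\preceq P_L$ and $P_L\rightarrow P_\infty$. A nonincreasing Loewner sequence dominates its limit, so $P_L\succeq P_\infty\succeq p_*I$. Hence $\lambda_{\min}(P_L)\geq p_*$ and $\lVert P_L^{-1}\rVert_2\leq 1/p_*$. Combining this with Eq.~\ref{eq:riccati-envelope} yields Eq.~\ref{eq:K-convergence}, with envelope $\phi(L)=\lVert P\rVert_2c_P\chi^L/p_*^2$.

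For the second part, I check the hypotheses of Theorem~\ref{thm:unified-exchange}(iv). The sequence $K_L$ is positive semidefinite and nondecreasing by Proposition~\ref{prop:stochastic-monotone}. It converges in operator norm by the envelope just derived. The definition of $\gamma$ in Eq.~\ref{eq:stochastic-gap} is exactly the margin $\Delta_\infty\succeq\gamma I$. The envelope $\phi(L)$ decreases to zero because $0<\chi<1$.

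Theorem~\ref{thm:unified-exchange}(iv) then gives $K_L\succeq\alpha K_{\mathcal{S}}$ whenever $\phi(L)\leq\gamma$. This condition reads $\chi^L\leq \gamma p_*^2/(\lVert P\rVert_2 c_P)$. Taking logarithms and dividing by $\log(1/\chi)>0$ gives $L\geq \log\bigl(\lVert P\rVert_2c_P/(\gamma p_*^2)\bigr)/\log(1/\chi)$. The positive-part ceiling covers the case where the logarithm is negative, since then every $L\geq0$ already satisfies the inequality.

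The main obstacle is the uniform bound on $P_L^{-1}$. It depends on the monotone direction of the covariance sequence, that is, on the windows being nested and conditioning only reducing covariance. If that ordering were unavailable, I would instead fall back on a perturbation bound valid for $\lVert P_L-P_\infty\rVert_2\leq p_*/2$, which costs a constant factor.
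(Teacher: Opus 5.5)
Your proposal is correct and follows essentially the same argument as the paper. Both use the identity $P_\infty^{-1}-P_L^{-1}=P_\infty^{-1}(P_L-P_\infty)P_L^{-1}$ with the bounds $\lVert P_\infty^{-1}\rVert_2,\lVert P_L^{-1}\rVert_2\leq 1/p_*$ coming from $P_L\succeq P_\infty\succeq p_*I$, then apply Theorem~\ref{thm:unified-exchange}(iv) and solve the scalar inequality for $L$; you additionally justify $P_L\succeq P_\infty$ (a nonincreasing Loewner sequence dominates its limit), which the paper only asserts.
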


\begin{proof}
Since
$P_L\succeq P_\infty\succeq p_*I$,
\[
\begin{aligned}
K_\infty-K_L
&=
P^{1/2}
\left(
P_\infty^{-1}-P_L^{-1}
\right)
P^{1/2}
\\
&=
P^{1/2}
P_\infty^{-1}
\left(
P_L-P_\infty
\right)
P_L^{-1}
P^{1/2}.
\end{aligned}
\]
Because
$\lVert P_\infty^{-1}\rVert_2\leq1/p_*$
and
$\lVert P_L^{-1}\rVert_2\leq1/p_*$,
taking norms and applying
Eq.~\ref{eq:riccati-envelope} gives
Eq.~\ref{eq:K-convergence}.

The definition of $\gamma$ in Eq.~\ref{eq:stochastic-gap} gives the
positive limiting gap required by
Theorem~\ref{thm:unified-exchange}(iv). Requiring the bound in
Eq.~\ref{eq:K-convergence} to be no larger than $\gamma$ and solving
the resulting scalar inequality for $L$ gives
Eq.~\ref{eq:stochastic-L}.
\end{proof}

The finite-memory exchange length depends on the limiting margin
$\gamma$ and the forgetting factor $\chi$, together with the
covariance-convergence prefactor appearing in
Eq.~\ref{eq:stochastic-L}. A larger limiting margin $\gamma$ or a
smaller forgetting factor $\chi$ reduces the required temporal window.
The forgetting factor measures how rapidly the influence of observations
preceding the retained window decays in the current-state estimate. It
plays a role analogous to the contraction or modal-decay factors in the
deterministic results, but additionally accounts for the continual
introduction of process uncertainty.

\section{Nonlinear systems: global and local exchange}
\label{sec:nonlinear}

For nonlinear systems, space--time information exchange has two
complementary aspects. Global identifiability concerns whether an entire
admissible state set can be distinguished from the observations. Local
information exchange concerns how sensitively the observations respond
to small perturbations of the state in the presence of measurement
noise. The first property is described by injectivity of the nonlinear
observation map, whereas the second is quantified by its
Fisher-information matrix.

This section applies the delay-map and Fisher-information concepts
introduced in Section~\ref{sec:background} to the comparison between
temporal and simultaneous spatial sensing. Let
$\mathcal{M}\subset\mathbb{R}^{r}$ be a compact set of admissible states,
let
\[
\Phi_{\mathcal{T},L}:\mathcal{M}\rightarrow\mathbb{R}^{mL}
\]
denote the length-$L$ temporal delay map, and let
\[
s:\mathcal{M}\rightarrow\mathbb{R}^{p}
\]
denote the simultaneous spatial-reference map. Assume that both maps
are continuous.

\begin{theorem}[Global noiseless substitution and equivalence]
\label{thm:global-nonlinear}
If $\Phi_{\mathcal{T},L}$ is injective on $\mathcal{M}$, then there
exists a continuous map
\begin{equation}
\psi_{L}:
\Phi_{\mathcal{T},L}(\mathcal{M})
\rightarrow
s(\mathcal{M})
\label{eq:global-substitution-map}
\end{equation}
such that
\begin{equation}
s(x)
=
\psi_{L}\bigl(\Phi_{\mathcal{T},L}(x)\bigr)
\qquad
\text{for every }x\in\mathcal{M}.
\label{eq:global-substitution}
\end{equation}
Thus, the noiseless temporal history determines the simultaneous
spatial-reference measurement over the entire admissible state set.

If $s$ is also injective on $\mathcal{M}$, then both
$\Phi_{\mathcal{T},L}$ and $s$ are homeomorphisms onto their images.
Consequently, if $X$ is a random state supported on $\mathcal{M}$, then
for every finite-valued measurable quantization $\mathcal{Q}(X)$,
writing $\mathrm{H}$ for Shannon entropy,
\begin{equation}
\mathrm{H}\bigl(
\mathcal{Q}(X)\mid\Phi_{\mathcal{T},L}(X)
\bigr)
=
\mathrm{H}\bigl(
\mathcal{Q}(X)\mid s(X)
\bigr)
=
0,
\label{eq:global-entropy-equivalence}
\end{equation}
and hence
\begin{equation}
I\bigl(
\mathcal{Q}(X);\Phi_{\mathcal{T},L}(X)
\bigr)
=
I\bigl(
\mathcal{Q}(X);s(X)
\bigr)
=
\mathrm{H}\bigl(\mathcal{Q}(X)\bigr).
\label{eq:global-mi-equivalence}
\end{equation}
Therefore, when both maps are injective, the temporal and
spatial-reference observations contain the same complete
finite-resolution information about the admissible state.
\end{theorem}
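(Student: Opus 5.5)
The plan rests on the classical fact that a continuous bijection from a compact space onto a Hausdorff space is a homeomorphism. Since $\mathcal{M}$ is compact and $\Phi_{\mathcal{T},L}$ is continuous and injective, it is a bijection onto its image $\Phi_{\mathcal{T},L}(\mathcal{M})\subset\mathbb{R}^{mL}$, which is Hausdorff. To check continuity of the inverse, I would show that $\Phi_{\mathcal{T},L}$ is a closed map. A closed subset of $\mathcal{M}$ is compact, its continuous image is compact, and compact sets in a Hausdorff space are closed. Hence $\Phi_{\mathcal{T},L}^{-1}:\Phi_{\mathcal{T},L}(\mathcal{M})\to\mathcal{M}$ is continuous. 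I would then define
\[
\psi_L:=s\circ\Phi_{\mathcal{T},L}^{-1}.
\]
As a composition of continuous maps it is continuous, it maps into $s(\mathcal{M})$, and it satisfies $s(x)=\psi_L(\Phi_{\mathcal{T},L}(x))$ for every $x\in\mathcal{M}$. This gives Eqs.~\ref{eq:global-substitution-map} and~\ref{eq:global-substitution}.

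If $s$ is also injective, the same compactness argument applied to $s$ shows that $s$ is a homeomorphism onto $s(\mathcal{M})$. For the entropy statement, let $X$ be supported on $\mathcal{M}$ and let $\mathcal{Q}$ be a finite-valued measurable quantizer. I would write
\[
\mathcal{Q}(X)=\bigl(\mathcal{Q}\circ\Phi_{\mathcal{T},L}^{-1}\bigr)\bigl(\Phi_{\mathcal{T},L}(X)\bigr),
\]
where $\mathcal{Q}\circ\Phi_{\mathcal{T},L}^{-1}$ is Borel measurable because $\Phi_{\mathcal{T},L}^{-1}$ is continuous and the image is compact, hence Borel. So $\mathcal{Q}(X)$ is almost surely a measurable function of $\Phi_{\mathcal{T},L}(X)$. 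The conditional law of $\mathcal{Q}(X)$ given $\Phi_{\mathcal{T},L}(X)$ is then a point mass, so its conditional entropy vanishes. The identical argument with $s^{-1}$ in place of $\Phi_{\mathcal{T},L}^{-1}$ gives the second equality in Eq.~\ref{eq:global-entropy-equivalence}. Finally, the identity
\[
I(\mathcal{Q}(X);Z)=\mathrm{H}(\mathcal{Q}(X))-\mathrm{H}(\mathcal{Q}(X)\mid Z)
\]
holds for a discrete variable and an arbitrary observation $Z$. Applying it with $Z=\Phi_{\mathcal{T},L}(X)$ and with $Z=s(X)$ yields Eq.~\ref{eq:global-mi-equivalence}, and $\mathrm{H}(\mathcal{Q}(X))$ is finite because $\mathcal{Q}$ takes finitely many values.

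The topology is routine. The main obstacle is making the conditional-entropy step rigorous when the conditioning variable is continuous-valued. My first approach is to define $\mathrm{H}(\mathcal{Q}(X)\mid Z)$ through regular conditional distributions, as $\mathbb{E}[\mathrm{H}(P_{\mathcal{Q}(X)\mid Z})]$. Under that definition it suffices that $\mathcal{Q}(X)=g(Z)$ almost surely for a measurable $g$, because then each conditional distribution is degenerate. The measurability of $g=\mathcal{Q}\circ\Phi_{\mathcal{T},L}^{-1}$, extended arbitrarily and measurably off the compact image, is exactly what the homeomorphism property supplies.
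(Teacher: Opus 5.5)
Your proposal is correct and follows the paper's proof. Both arguments use the fact that a continuous injection from the compact set $\mathcal{M}$ into a Hausdorff space is a homeomorphism onto its image, define $\psi_L=s\circ\Phi_{\mathcal{T},L}^{-1}$, and obtain the entropy and mutual-information identities because each injective observation determines $X$, and hence $\mathcal{Q}(X)$. Your treatment of the conditional-entropy step (Borel measurability of $\mathcal{Q}\circ\Phi_{\mathcal{T},L}^{-1}$ and degenerate regular conditional distributions) only supplies detail that the paper leaves implicit.
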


\begin{proof}
Because $\mathcal{M}$ is compact and $\Phi_{\mathcal{T},L}$ is
continuous and injective, it is a homeomorphism from $\mathcal{M}$ onto
$\Phi_{\mathcal{T},L}(\mathcal{M})$. Hence,
\[
\psi_{L}
=
s\circ\Phi_{\mathcal{T},L}^{-1}
\]
is continuous and satisfies
\[
s
=
\psi_{L}\circ\Phi_{\mathcal{T},L}.
\]
If $s$ is also injective, the same argument shows that it is a
homeomorphism onto its image, so either observation determines $X$, and
therefore any finite quantization $\mathcal{Q}(X)$. The Shannon-entropy and mutual-information identities then follow.
\end{proof}

The preceding result concerns global noiseless reconstruction. The
effect of measurement noise is compared locally through tangent-space
Fisher-information operators: the temporal operator introduced in
Section~\ref{sec:background} and its spatial-reference analogue. These
are
\begin{equation}
\begin{aligned}
\mathfrak{g}_{\mathcal{T},L}(x)
&=
D\Phi_{\mathcal{T},L}(x)^{\mathsf{T}}
R_{\mathcal{T},L}^{-1}
D\Phi_{\mathcal{T},L}(x),\\
\mathfrak{g}_{\mathcal{S}}(x)
&=
Ds(x)^{\mathsf{T}}
R_{\mathcal{S}}^{-1}
Ds(x),
\end{aligned}
\label{eq:nonlinear-fisher-operators}
\end{equation}
with the derivatives understood as restricted to the tangent space when
$\mathcal{M}$ is a differentiable manifold.

For the following result, assume additionally that $\mathcal{M}$ is a
compact differentiable manifold and that the temporal and spatial
observation maps are continuously differentiable on a neighborhood of
$\mathcal{M}$.

\begin{theorem}[Uniform local nonlinear exchange]
\label{thm:nonlinear-exchange}
Fix $\alpha\in(0,1]$. Suppose that, on each tangent space
$T_x\mathcal{M}$,
\begin{equation}
\mathfrak{g}_{\mathcal{T},L}(x)
\preceq
\mathfrak{g}_{\mathcal{T},L+1}(x)
\longrightarrow
\mathfrak{g}_{\mathcal{T},\infty}(x),
\label{eq:nonlinear-fisher-convergence}
\end{equation}
and that the Fisher-information tail---the information remaining beyond
history length $L$---is uniformly bounded by
\begin{equation}
\sup_{x\in\mathcal{M}}
\left\|
\mathfrak{g}_{\mathcal{T},\infty}(x)
-
\mathfrak{g}_{\mathcal{T},L}(x)
\right\|_{2}
\leq
\phi(L),
\qquad
\phi(L)\downarrow 0.
\label{eq:nonlinear-fisher-tail}
\end{equation}
Assume further that the limiting temporal information exceeds the
required spatial-reference level by a uniform margin:
\begin{equation}
\mathfrak{g}_{\mathcal{T},\infty}(x)
-
\alpha\mathfrak{g}_{\mathcal{S}}(x)
\succeq
\gamma I_{T_x\mathcal{M}}
\qquad
\text{for every }x\in\mathcal{M},
\label{eq:nonlinear-uniform-gap}
\end{equation}
for some $\gamma>0$, where $I_{T_x\mathcal{M}}$ denotes the identity on
the tangent space. Then every $L$ satisfying
$\phi(L)\leq\gamma$ guarantees
\begin{equation}
\mathfrak{g}_{\mathcal{T},L}(x)
\succeq
\alpha\mathfrak{g}_{\mathcal{S}}(x)
\qquad
\text{for every }x\in\mathcal{M}.
\label{eq:nonlinear-local-exchange}
\end{equation}

When both Fisher-information operators are positive definite on
$T_x\mathcal{M}$, this further implies
\begin{equation}
\mathfrak{g}_{\mathcal{T},L}(x)^{-1}
\preceq
\frac{1}{\alpha}
\mathfrak{g}_{\mathcal{S}}(x)^{-1},
\label{eq:nonlinear-cr-bound}
\end{equation}
where the inverses are understood on $T_x\mathcal{M}$. Thus, under the
usual regularity and local-unbiasedness assumptions, the temporal
Cram\'er--Rao lower-bound matrix is no larger than the
spatial-reference bound scaled by $1/\alpha$.
\end{theorem}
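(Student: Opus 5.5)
The plan is to run the argument of Theorem~\ref{thm:unified-exchange}(iv) pointwise on each tangent space $T_x\mathcal{M}$. The uniformity in the hypotheses is what turns the pointwise conclusion into one valid for every $x$. Fix $x\in\mathcal{M}$ and regard all operators as symmetric operators on the finite-dimensional inner-product space $T_x\mathcal{M}$, with the inner product inherited from $\mathbb{R}^r$. Then decompose
\[
\mathfrak{g}_{\mathcal{T},L}(x)-\alpha\mathfrak{g}_{\mathcal{S}}(x)
=
\bigl[\mathfrak{g}_{\mathcal{T},\infty}(x)-\alpha\mathfrak{g}_{\mathcal{S}}(x)\bigr]
-
\bigl[\mathfrak{g}_{\mathcal{T},\infty}(x)-\mathfrak{g}_{\mathcal{T},L}(x)\bigr].
\]
By the gap condition in Eq.~\ref{eq:nonlinear-uniform-gap}, the first bracket is $\succeq\gamma I_{T_x\mathcal{M}}$. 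The second bracket is symmetric, and by monotonicity (Eq.~\ref{eq:nonlinear-fisher-convergence}) it is positive semidefinite. Its largest eigenvalue equals its spectral norm, which is at most $\phi(L)$ by Eq.~\ref{eq:nonlinear-fisher-tail}. So the second bracket is $\preceq\phi(L)I_{T_x\mathcal{M}}$. Hence the difference is $\succeq(\gamma-\phi(L))I_{T_x\mathcal{M}}\succeq0$ whenever $\phi(L)\leq\gamma$. Since $\phi(L)$ and $\gamma$ do not depend on $x$, this gives Eq.~\ref{eq:nonlinear-local-exchange} for every $x\in\mathcal{M}$ at once.

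A small technical point needs care. Restricting the norm to $T_x\mathcal{M}$ must be consistent with the bound in Eq.~\ref{eq:nonlinear-fisher-tail}. I will read the hypothesis as a bound on the restricted operator norm, so that the supremum of $|v^{\mathsf{T}}(\cdot)v|$ over unit $v\in T_x\mathcal{M}$ is controlled. That is exactly what the Loewner comparison on the tangent space needs. If the norm is taken on an ambient extension instead, restriction can only decrease it, so the argument still goes through. Monotonicity is not strictly needed, because the symmetric bound $-\phi(L)I\preceq E\preceq\phi(L)I$ holds for any symmetric $E$ with $\|E\|_2\leq\phi(L)$; I will note this for robustness.

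For the Cram\'er--Rao consequence, assume both $\mathfrak{g}_{\mathcal{T},L}(x)$ and $\mathfrak{g}_{\mathcal{S}}(x)$ are positive definite on $T_x\mathcal{M}$. Then $\mathfrak{g}_{\mathcal{T},L}(x)\succeq\alpha\mathfrak{g}_{\mathcal{S}}(x)\succ0$. Inversion is order-reversing on positive-definite operators, which is the same fact used in Corollary~\ref{cor:posterior-consequence}. Together with $(\alpha\mathfrak{g}_{\mathcal{S}})^{-1}=\alpha^{-1}\mathfrak{g}_{\mathcal{S}}^{-1}$, this gives Eq.~\ref{eq:nonlinear-cr-bound}. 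To justify order reversal I would use the standard congruence argument. Write $A\succeq B\succ0$, so that $B^{-1/2}AB^{-1/2}\succeq I$. Inverting this gives $B^{1/2}A^{-1}B^{1/2}\preceq I$, and conjugating by $B^{-1/2}$ gives $A^{-1}\preceq B^{-1}$. The interpretive statement about Cram\'er--Rao bounds then follows by identifying the inverse Fisher forms with the local lower-bound matrices under the stated regularity assumptions. No further proof is needed there.

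I do not expect a serious obstacle. The only step needing care is the bookkeeping of the tangent-space restriction, that is, making sure identities, norms and inverses all live on the same space $T_x\mathcal{M}$. This is why I state every operator inequality as a quadratic-form inequality over $v\in T_x\mathcal{M}$.
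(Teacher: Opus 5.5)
Your proposal is correct and follows the paper's own proof. The paper applies Theorem~\ref{thm:unified-exchange}(iv) pointwise on each $T_x\mathcal{M}$, uses the $x$-independence of $\phi(L)$ and $\gamma$ to get the conclusion for every $x$ simultaneously, and obtains the Cram\'er--Rao comparison from order-reversal of inversion. You do the same, writing out the (iv) decomposition and the congruence proof of order reversal explicitly rather than citing them.
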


\begin{proof}
For each $x\in\mathcal{M}$, apply Theorem~\ref{thm:unified-exchange}(iv)
on the tangent space $T_x\mathcal{M}$, with
$\mathfrak{g}_{\mathcal{T},L}(x)$ and
$\mathfrak{g}_{\mathcal{S}}(x)$ as the temporal and spatial information
operators. The uniform bound $\phi(L)\leq\gamma$ ensures that the
resulting inequality holds simultaneously for every
$x\in\mathcal{M}$. When the operators are positive definite, the stated
Cram\'er--Rao comparison follows because matrix inversion reverses the
Loewner order.
\end{proof}

Theorem~\ref{thm:global-nonlinear} determines whether the temporal
history can reproduce the spatial reference throughout the admissible
state set in the noiseless setting. Theorem~\ref{thm:nonlinear-exchange}
gives a finite-history condition ensuring that its local sensitivity to
state perturbations is uniformly comparable to that of the spatial
reference. For turbulent PDEs, these conditions are most naturally
evaluated on a compact regime-restricted manifold or in reduced tangent
coordinates, since the active state geometry may vary between regimes.

\paragraph{Contractive coordinate-observation specialization.}
Consider coordinate observations
$h_m(x)=C_mx$, where $C_m$ selects $m$ state coordinates, with
independent Gaussian measurement noise of variance
$\sigma_{\mathcal{T}}^{2}$ at each observation time. Since
\begin{equation}
D\bigl(h_m\circ F^k\bigr)(x)
=
C_mD\bigl(F^k\bigr)(x),
\label{eq:contractive-coordinate-jacobian}
\end{equation}
substitution into the delay-map Fisher-information definition in
Eq.~\ref{eq:temporal-fisher-information} gives
\begin{equation}
\mathfrak{g}_{\mathcal{T},L}^{(m)}(x)
=
\frac{1}{\sigma_{\mathcal{T}}^{2}}
\sum_{k=0}^{L-1}
D\bigl(F^k\bigr)(x)^{\mathsf{T}}
C_m^{\mathsf{T}}C_m
D\bigl(F^k\bigr)(x),
\label{eq:contractive-coordinate-fisher}v
\end{equation}
where $F^k$ denotes the $k$-fold iterate of $F$, so
$D(F^k)(x)$ is its derivative at $x$. Consequently, the Fisher information remaining beyond history
length $L$ is
\begin{equation}
\mathfrak{g}_{\mathcal{T},\infty}^{(m)}(x)
-
\mathfrak{g}_{\mathcal{T},L}^{(m)}(x)
=
\frac{1}{\sigma_{\mathcal{T}}^{2}}
\sum_{k=L}^{\infty}
D\bigl(F^k\bigr)(x)^{\mathsf{T}}
C_m^{\mathsf{T}}C_m
D\bigl(F^k\bigr)(x).
\label{eq:contractive-coordinate-fisher-tail}
\end{equation}

If $\mathcal{M}$ is forward invariant and
\begin{equation}
\sup_{x\in\mathcal{M}}
\left\|DF(x)\right\|_2
\leq q<1,
\label{eq:nonlinear-contraction}
\end{equation}
then the chain rule gives
$\|D(F^k)(x)\|_2\leq q^k$. Summing the resulting geometric series
yields
\begin{equation}
\sup_{x\in\mathcal{M}}
\left\|
\mathfrak{g}_{\mathcal{T},\infty}^{(m)}(x)
-
\mathfrak{g}_{\mathcal{T},L}^{(m)}(x)
\right\|_2
\leq
\frac{
\|C_m\|_2^2 q^{2L}
}{
\sigma_{\mathcal{T}}^2(1-q^2)
}
=: \phi_m(L).
\label{eq:contractive-coordinate-tail-bound}
\end{equation}
Thus, contractivity provides an explicit uniform Fisher-information-tail
envelope with which to apply
Theorem~\ref{thm:nonlinear-exchange}.

\section{Empirical verification of the exchange theorems}
\label{sec:verification}

With the general exchange conditions and their deterministic, stochastic,
and nonlinear specializations established, we now examine their
predictions in four numerical experiments for which the relevant
information or tangent operators can be computed directly. The two
deterministic linear cases (Cases 1 and 2) test exact directional
exchange, spectral separation, finite information ceilings, and
contractive tail bounds. The stochastic case (Case 3) tests the
directional ceiling condition, finite-memory posterior saturation, and
the bound derived from Riccati convergence. The nonlinear case (Case 4)
tests global delay-map identifiability and local Fisher-information
exchange.

Throughout these tests, we carry forward the direction-resolved
comparison defined in Eq.~\ref{eq:directional-exchange}. Agreement in total mutual information or
mean-square error can conceal a poorly informed state direction, and thus each case examines an appropriate directional quantity, such as the minimum generalized eigenvalue defining $\alpha_L$, the minimum eigenvalue of an operator margin, or the worst-case local Fisher-information factor. Additional operator-level diagnostics for all four test cases are presented in the
\hyperref[sec:supplementary-material]{Supplementary Material}.

All linear experiments use 50 independent repetitions, with each
reported finite-sample estimate based on 60,000 evaluation draws. In the
deterministic cases, an additional 60,000 state--measurement pairs are
used to fit the estimator. Each pair starts from an independently drawn
state; its measurement history is generated by propagating that state
for $L$ observation times and adding independent measurement noise. As a result, the samples represent different initial states and noise realizations, not successive points from one trajectory.

In the stochastic case, each evaluation draw is an independent
current-state Bayes residual from the exact finite-window posterior
distribution, which includes process and measurement noise. No training
set is required because the exact Bayes estimator is used. This
experiment hence provides an exact finite-sample Monte Carlo
verification of the posterior-error, information-ceiling, and
finite-memory predictions instead of an end-to-end estimator-training test.

The nonlinear case combines numerical and empirical tests. The Jacobian,
Fisher-information, and tail criteria are evaluated on 180 calibration
states and 120 independent held-out states drawn from the prescribed
state domain. The empirical collision test uses 10 independent
ensembles of 420 states to identify distinct states with similar
measurement histories. The empirical local-reconstruction test uses 18
held-out anchor states, with 50 independent measurement-noise
realizations per anchor and noise level. These tests address local
Jacobian rank, global distinguishability, directional information, and
reconstruction uncertainty separately.

Across the experiments described above, theoretical quantities are calculated
directly from the specified dynamical and measurement models. The
construction of the estimators and empirical information quantities is
described in Appendix~\ref{app:empirical-procedures}.

\subsection{Case 1: Multimodal oscillatory system}
\label{subsec:case1}

The first deterministic case uses a six-dimensional state comprising
three oscillatory mode pairs:
\begin{equation}
\begin{aligned}
x_{k+1} &= Ax_k,
\qquad
A=QBQ^{\mathsf{T}},\\
B &=
\operatorname{blkdiag}
\bigl(
R(\theta_1),R(\theta_2),R(\theta_3)
\bigr),\\
R(\theta)
&=
\begin{bmatrix}
\cos\theta & -\sin\theta\\
\sin\theta & \cos\theta
\end{bmatrix},
\qquad
(\theta_1,\theta_2,\theta_3)
=
(0.32,1.18,2.16).
\end{aligned}
\label{eq:case1-dynamics}
\end{equation}

The angles are modal rotations per observation interval. The
$6\times6$ orthonormal type-II discrete cosine transform (DCT-II)
matrix $Q$ mixes the modal coordinates into the physical coordinates,
so each coordinate sensor measures a superposition of the three mode
pairs. The modal prior variances are $1.25$, $0.95$, and $0.75$,
repeated within each pair. Temporal and simultaneous spatial
measurements have noise standard deviation $0.50$ per channel. The
spatial reference observes all six coordinates once. The temporal
designs use the one-based coordinate sets $\{1\}$, $\{1,4\}$,
$\{1,3,5\}$, and $\{1,2,4,5\}$ for $m=1,2,3,4$, respectively, with
$m$ denoting the number of sensors. The four configurations do not form
a fully nested sequence. The exchange target is
$\alpha_{\star}=0.90$.

\begin{figure}[!t]
    \centering
    \includegraphics[width=\textwidth]{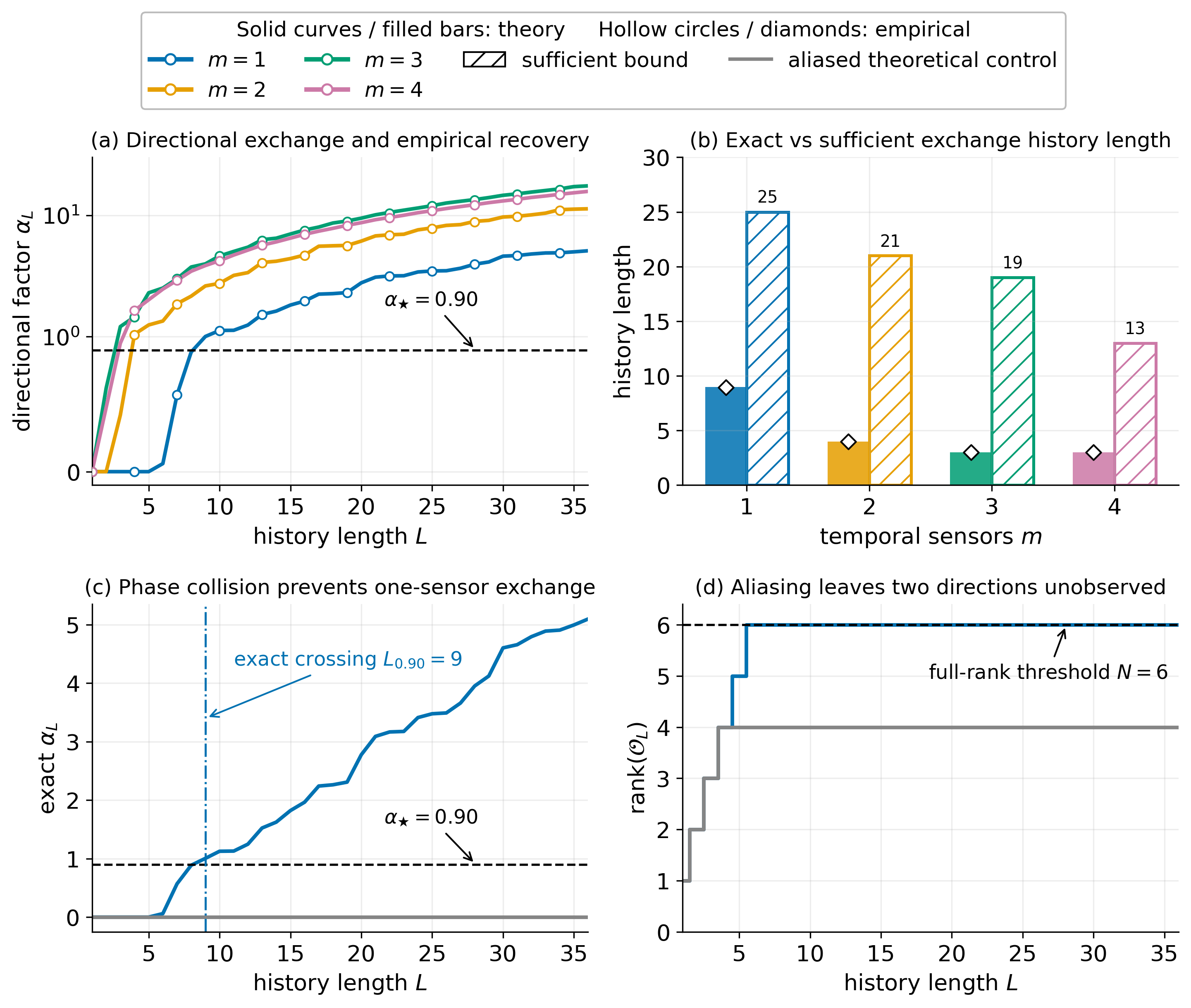}
    \caption{Case 1 results.
    (a) Exact directional substitution factors and empirical means
    $\pm1$ standard deviation across 50 repetitions; the uncertainty
    bands are too small to be visible.
    (b) Exact and empirical exchange lengths compared with the
    conservative sufficient bounds from
    Theorem~\ref{thm:unitary-L}.
    (c,d) Exact one-sensor comparisons for separated and aliased modal
    phases. Repeating a modal angle produces temporal aliasing, so the
    substitution factor remains zero and the observability rank
    saturates below the state dimension.}
    \label{fig:case1-multimodal}
\end{figure}

Figure~\ref{fig:case1-multimodal}(a) shows the theoretical substitution
factors and their empirical recovery, while
Figure~\ref{fig:case1-multimodal}(b) compares the exchange history
lengths. The exact exchange history lengths $L_{0.90}(m)$---the first
$L$ satisfying
$K_{\mathcal{T}}(L)\succeq0.90K_{\mathcal{S}}$---are
$9$, $4$, $3$, and $3$ for $m=1,2,3,4$. The corresponding empirical
first-crossing means are $8.94$, $4.00$, $3.00$, and $3.00$. Only the
one-sensor result varies across repetitions, with crossings at $L=8$ or
$9$. This agreement shows that the operator-defined exchange lengths are
recovered independently from fitted estimators.
Theorem~\ref{thm:unitary-L} gives the sufficient history lengths
$25$, $21$, $19$, and $13$. These larger values reflect the
conservative nature of the theorem: they are sufficient guarantees,
rather than point predictions of the exchange length.

At the corresponding exact exchange history lengths, the empirical
mutual information $\widehat{I}(X;Y_{\mathcal{T},L})$ ranges from
$5.421$ to $6.336$ nats across $m=1$ to $m=4$, exceeding the
Corollary~\ref{cor:posterior-consequence} lower bound of $4.499$ nats.
The empirical normalized reconstruction errors range from $0.126$ to
$0.162$, below the operator-derived upper bound of $0.219$. Thus, once
$K_{\mathcal{T}}(L)\succeq0.90K_{\mathcal{S}}$,
the temporal histories also satisfy the total-information and
reconstruction-error consequences predicted by
Corollary~\ref{cor:posterior-consequence}.

Figures~\ref{fig:case1-multimodal}(c) and
\ref{fig:case1-multimodal}(d) show a deliberately aliased comparison
for the one-sensor design. The second modal angle is set equal to the
first,
\begin{equation}
(\theta_1,\theta_2,\theta_3)
=
(0.32,0.32,2.16),
\label{eq:case1-aliased-angles}
\end{equation}
so two mode pairs have identical sampled temporal evolution. A single
sensor cannot distinguish their contributions; this loss of
distinguishability is temporal aliasing. In this case the minimum
sampled phase separation $s_{\min}$ is zero, and
Theorem~\ref{thm:unitary-L} provides no finite sufficient history.
Figure~\ref{fig:case1-multimodal}(c) shows that $\alpha_L$ remains zero
throughout the plotted range $L\leq36$, while
Figure~\ref{fig:case1-multimodal}(d) shows that the observability rank
saturates at $4/6$. By
Proposition~\ref{prop:observability-positive-information}, the temporal
information operator is thus singular. Since the spatial reference informs
all six directions, no positive exchange factor can be achieved: a
two-dimensional state subspace remains unobservable regardless of
history length. This comparison confirms that the successful exchange
in Figures~\ref{fig:case1-multimodal}(a) and
\ref{fig:case1-multimodal}(b) results from temporal separation of the
modes, not simply from using a longer input history.

\subsection{Case 2: Heterogeneous advection--diffusion--reaction transport}
\label{subsec:case2}

The second deterministic case uses a 64-dimensional state
$x_k\in\mathbb{R}^{64}$, representing concentration perturbations on an
$8\times8$ grid:
\begin{equation}
x_{k+1}=Ax_k,
\qquad
A=cRDA_{\mathrm{adv}},
\label{eq:case2-dynamics}
\end{equation}
where $A_{\mathrm{adv}}$, $D$, and $R$ represent advection, diffusion,
and reaction, respectively, and $c$ is chosen so that
$\|A\|_2=0.97<1$. The advection follows an open serpentine path from the
lower-left inlet to the upper-left outlet. At path position $q$, the
fraction
\begin{equation}
w_q
=
1-0.003
\left[
1+0.3\sin\left(\frac{2\pi q}{64}\right)
\right]
\label{eq:case2-advection-weight}
\end{equation}
moves to position $q+1$, while $1-w_q$ moves to $q+2$; material
transported beyond the final cells leaves the domain. Diffusion is
applied through $D=\exp(10^{-4}L_h)$, where $L_h$ is the no-flux
five-point grid Laplacian. The reaction operator is
$R=\operatorname{diag}(e^{-r_i})$, where $r_i=r(x_i,y_i)$ and the
spatially varying rate is
\begin{equation}
r(x,y)
=
0.002
+
0.001\sin(2\pi x)\cos(2\pi y).
\label{eq:case2-reaction-rate}
\end{equation}

Unlike the oscillatory case in Section~\ref{subsec:case1}, the sensors directly observe physical grid
cells, and information about an initial perturbation must be transported
to a sensor before being attenuated by reaction and outflow.

The propagator has spectral radius $\rho(A)=0.588$. Its non-normality is
quantified by the relative commutator measure
\begin{equation}
\frac{
\|A^{\mathsf{T}}A-AA^{\mathsf{T}}\|_F
}{
\|A\|_F^2
}
=
0.0224.
\label{eq:case2-nonnormality}
\end{equation}
The difference between $\rho(A)=0.588$ and $\|A\|_2=0.97$ is important
for interpreting the conservative norm-based bound below. Initial states
are drawn from a zero-mean Gaussian distribution with covariance
\begin{equation}
P_{ij}
=
\exp
\left(
-\frac{\|\boldsymbol{\xi}_i-\boldsymbol{\xi}_j\|_2}{0.24}
\right)
+
0.05\delta_{ij},
\label{eq:case2-prior}
\end{equation}
where $\boldsymbol{\xi}_i$ and $\boldsymbol{\xi}_j$ are the normalized
grid-coordinate vectors. Temporal sensor noise has standard deviation
$0.25$ per channel, whereas the simultaneous 64-coordinate spatial
reference has standard deviation $0.80$. The temporal designs use
$m=1,2,3,4$ cells distributed along the transport path. Their one-based
grid coordinates are
\begin{equation}
\begin{aligned}
m=1:\quad
&\{(1,8)\},\\
m=2:\quad
&\{(1,4),(1,8)\},\\
m=3:\quad
&\{(5,3),(7,6),(1,8)\},\\
m=4:\quad
&\{(1,2),(1,4),(1,6),(1,8)\}.
\end{aligned}
\label{eq:case2-sensor-locations}
\end{equation}
The target is again $\alpha_{\star}=0.90$.

\begin{figure}[!t]
    \centering
    \includegraphics[width=\textwidth]{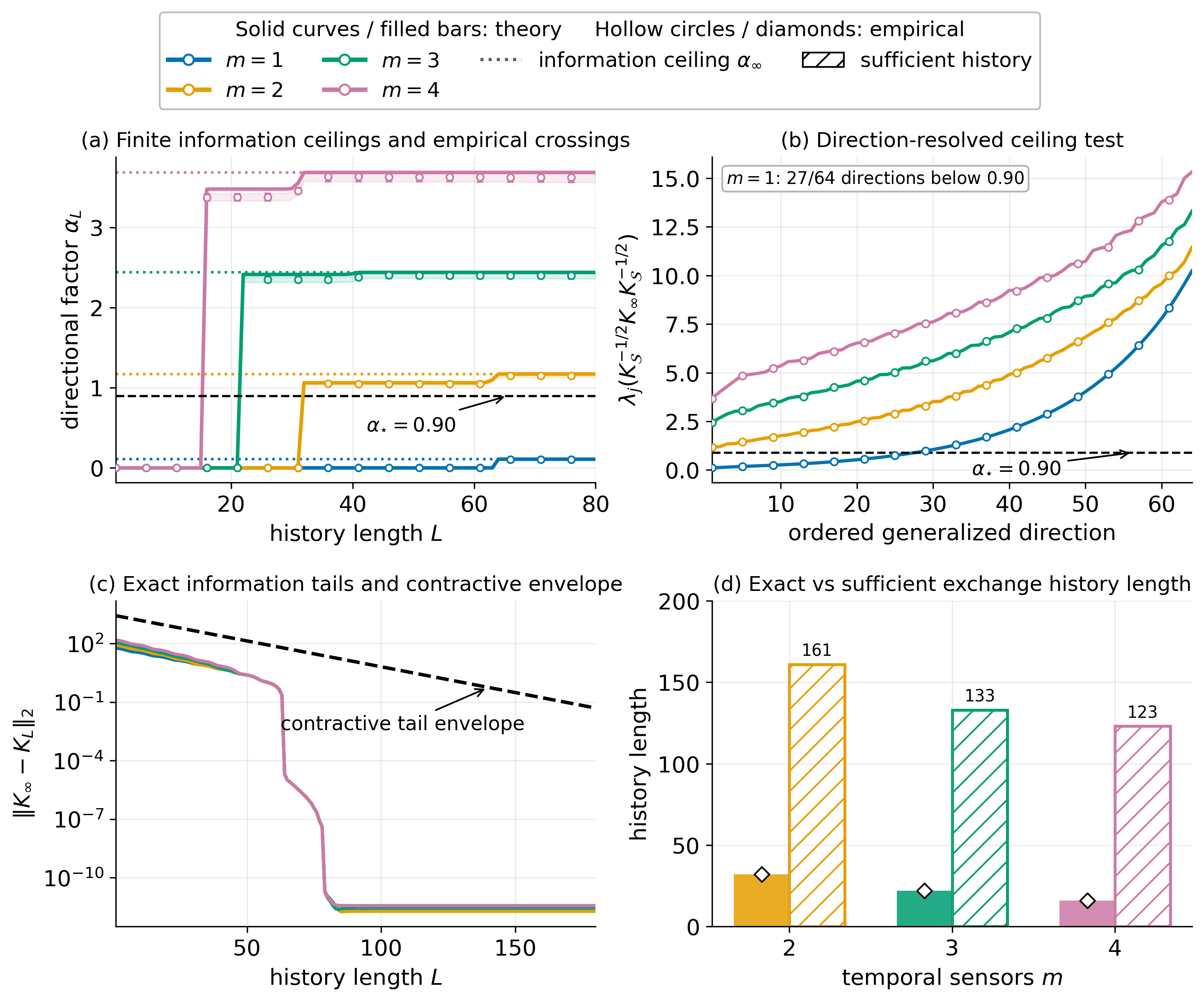}
    \caption{Case 2 results.
    (a) Exact finite-history substitution factors, theoretical
    information ceilings, and empirical means $\pm1$ standard deviation.
    (b) Theoretical generalized ceiling spectra and empirical $L=80$ spectra, including the 27 directions below the target for the one-sensor design.
    (c) Exact information deficits and the contractive upper envelope
    from Theorem~\ref{thm:stable-tail}.
    (d) Exact and empirical exchange lengths compared with the
    conservative sufficient bounds from
    Theorem~\ref{thm:stable-tail}.}
    \label{fig:case2-transport}
\end{figure}

Figure~\ref{fig:case2-transport}(a) shows the finite-history substitution
factors and their limiting ceilings. For $m=1$, the observability matrix
first reaches full rank at $L=64$, but the information ceiling is only
$\alpha_{\infty}=0.11<\alpha_{\star}$. The empirical value does not
reach the target at any history length. Although every state direction
eventually affects the measurement history, as required by classical
observability, the weakest direction never receives enough information
to match $90\%$ of the spatial reference. This verifies the
finite-ceiling impossibility condition of
Theorem~\ref{thm:unified-exchange}(iii) and shows that full observability
alone is insufficient for quantitative information exchange.

Figure~\ref{fig:case2-transport}(b) resolves this failure direction by
direction. For the one-sensor design, 27 of the 64 generalized ceiling
eigenvalues remain below $0.90$, even though its limiting total mutual
information exceeds that of the spatial reference. The scalar
information total is therefore misleading: strong information in some
directions compensates numerically for severe deficits in others. The
generalized spectrum exposes precisely the failure identified by the
paper's positive-semidefinite exchange criterion.

For $m=2,3,4$, the ceilings are $1.171$, $2.438$, and $3.685$, which
are larger than $\alpha_{\star}$; thus
Theorem~\ref{thm:unified-exchange} classifies the target as attainable.
The exact exchange history lengths are $32$, $22$, and $16$,
respectively, and all 50 empirical repetitions cross at those same
lengths. Distributing sensors along the path reduces the maximum
distance that an initial perturbation must travel before being observed,
producing progressively shorter exchange histories.

Figure~\ref{fig:case2-transport}(c) tests the contractive tail inequality
in Theorem~\ref{thm:stable-tail}. For every sensor configuration and
evaluated history length, the exact deficit
$\|K_{\infty}-K_L\|_2$ remains below the theoretical envelope. The
maximum ratios of the exact deficit to the bound are $0.022$, $0.031$,
$0.042$, and $0.057$ for $m=1,2,3,4$.
Figure~\ref{fig:case2-transport}(d) compares the exact and empirical
crossings with the sufficient history lengths from
Theorem~\ref{thm:stable-tail}. The certified lengths are $161$, $133$,
and $123$ for $m=2,3,4$, substantially larger than the exact values
$32$, $22$, and $16$. This conservatism is expected because the
certificate uses the global contraction $\|A\|_2=0.97$, whereas the
eventual decay is much faster and is governed by $\rho(A)=0.59$. No
sufficient history exists for $m=1$ because its information ceiling
lies below the target.

\subsection{Case 3: Stochastic non-normal shear flow}
\label{subsec:case3}

The stochastic state
$x_k=[u_k^{\mathsf{T}},v_k^{\mathsf{T}}]^{\mathsf{T}}
\in\mathbb{R}^{96}$
contains the streamwise streak velocity $u$ and cross-stream roll
velocity $v$ on an $8\times6$ grid:
\begin{equation}
x_{k+1}
=
Ax_k+w_k,
\qquad
w_k\sim\mathcal{N}(0,0.004I_{96}),
\label{eq:case3-stochastic-model}
\end{equation}
where $A=\exp(\Delta t F)$, $\Delta t=0.18$, and
\begin{equation}
F
=
\begin{bmatrix}
\mathcal{L} & -sI\\
0 & \mathcal{L}
\end{bmatrix},
\qquad
\mathcal{L}
=
-\operatorname{diag}(U)D_x+\nu\Delta-\mu I.
\label{eq:case3-generator}
\end{equation}
Here $U(y)=y$ is the plane-Couette profile, $D_x$ is the streamwise
derivative, and $\Delta$ is the two-dimensional diffusion operator. The
parameters are $\nu=0.015$, $\mu=0.12$, and $s=2.20$. The off-diagonal
block represents the lift-up mechanism by which cross-stream rolls
generate streamwise streaks. It also makes the dynamics non-normal:
$A$ is asymptotically stable, but its non-orthogonal modes can produce
substantial transient amplification. The stationary prior covariance
satisfies
\begin{equation}
P
=
APA^{\mathsf{T}}
+
0.004I_{96}.
\label{eq:case3-stationary-covariance}
\end{equation}
Each point sensor measures the local streak velocity, with independent
measurement noise of standard deviation $0.20$. The nested temporal
arrays contain $m=2,4,6,8,$ or $12$ sensors, and the simultaneous
spatial reference observes all 12 locations at the current time. The
target is $\alpha_{\star}=0.90$. Although $\rho(A)=0.97<1$, the
one-step norm is $\|A\|_2=1.18$, and the maximum energy gain
$\|A^k\|_2^2$ reaches $27.05$ at $k=35$. The test therefore examines
the stochastic finite-memory theory under strong transient
amplification.

For each sensor count $m$ and history length $L$, the observations are
stacked as $Y_{t-L+1:t}^{(m)}$. Their joint covariance with $X_t$ is
constructed from the stationary lag relation
\begin{equation}
\operatorname{Cov}(X_t,X_{t-d})
=
A^dP,
\label{eq:case3-lag-covariance}
\end{equation}
with measurement-noise covariance $R_m$ added to the diagonal
observation blocks. Writing
\begin{equation}
\begin{aligned}
\Sigma_{XY,L}^{(m)}
&=
\operatorname{Cov}
\bigl(X_t,Y_{t-L+1:t}^{(m)}\bigr),\\
\Sigma_{YY,L}^{(m)}
&=
\operatorname{Cov}
\bigl(Y_{t-L+1:t}^{(m)}\bigr),
\end{aligned}
\label{eq:case3-window-covariances}
\end{equation}
Gaussian conditioning gives
\begin{equation}
\begin{aligned}
P_L^{(m)}
&=
P
-
\Sigma_{XY,L}^{(m)}
\bigl(\Sigma_{YY,L}^{(m)}\bigr)^{-1}
\bigl(\Sigma_{XY,L}^{(m)}\bigr)^{\mathsf{T}},\\
K_L^{(m)}
&=
P^{1/2}
\bigl(P_L^{(m)}\bigr)^{-1}
P^{1/2}
-
I.
\end{aligned}
\label{eq:case3-window-information}
\end{equation}
These matrices generate the exact finite-window results in
Figure~\ref{fig:case3-stochastic-shear}.

\begin{figure}[!t]
    \centering
    \includegraphics[width=\textwidth]{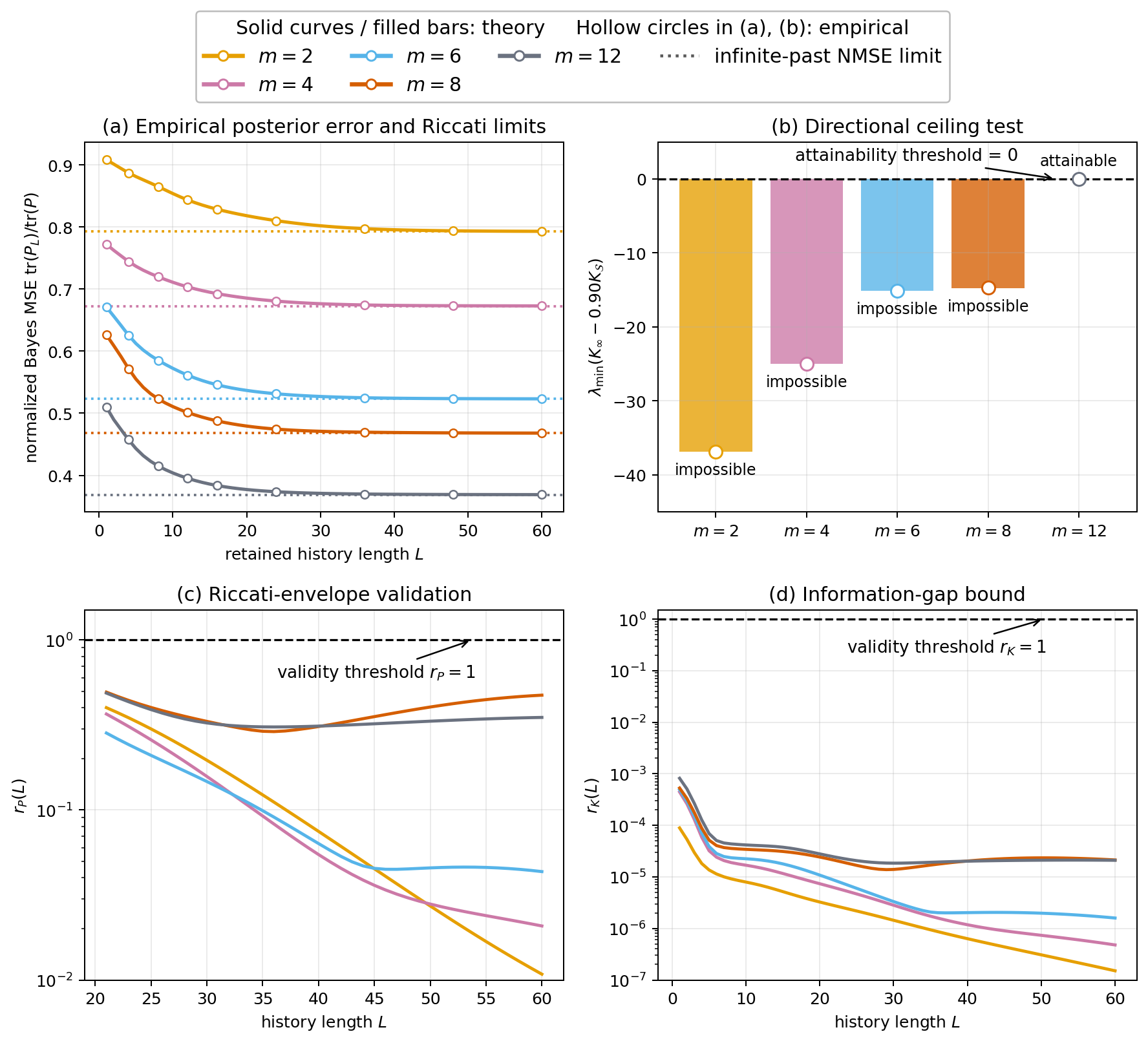}
    \caption{Case 3 results.
    (a) Exact normalized posterior errors, empirical Monte Carlo means
    $\pm1$ standard deviation, and infinite-past Riccati limits.
    (b) Exact infinite-past directional margins and empirical $L=60$ estimates; zero is the attainability threshold.
    (c) Held-out covariance-envelope ratios.
    (d) Exact information gaps divided by the corresponding
    Theorem~\ref{thm:stochastic-L} bounds. Panels (c) and (d) contain
    exact model-based quantities rather than Monte Carlo estimates.}
    \label{fig:case3-stochastic-shear}
\end{figure}

Figure~\ref{fig:case3-stochastic-shear}(a) shows close agreement between
the exact finite-window posterior errors and the Monte Carlo estimates
for every sensor count. The limiting normalized errors are $0.79$,
$0.67$, $0.52$, $0.47$, and $0.37$ for $m=2,4,6,8,$ and $12$,
respectively. These nonzero limits confirm the finite-information
saturation predicted by
Proposition~\ref{prop:stochastic-monotone}: process noise continually
introduces uncertainty and reduces the relevance of remote observations.
A separate matrix audit finds no violations, above numerical tolerance,
of the monotonicity and limiting-order relations stated in the
proposition. The verification therefore applies to the full posterior
and information matrices, not only to their traces.

Figure~\ref{fig:case3-stochastic-shear}(b) tests the $90\%$
spatial-reference target using the weakest-direction margin from
Proposition~\ref{prop:stochastic-monotone}. The exact margins are
$-36.87$, $-25.01$, $-15.14$, $-14.74$, and $0.00$, with $10$, $8$,
$6$, $4$, and $0$ failing directions, respectively. Thus, the sparse
arrays reduce posterior error but cannot reach the required
spatial-reference information in every direction, regardless of history
length, except for the $m=12$ design, which is attainable because its
current measurement already contains the complete reference array; its
nonnegative margin is zero to three-decimal precision.

Figures~\ref{fig:case3-stochastic-shear}(c) and
\ref{fig:case3-stochastic-shear}(d) test the covariance and information
bounds associated with Theorem~\ref{thm:stochastic-L}. The geometric
covariance envelope is calibrated over $L=5,\ldots,20$, using a $2\%$
safety factor, and validated independently over $L=21,\ldots,60$. The
maximum held-out covariance-envelope ratio is $0.493$, and the maximum
information-gap/bound ratio is $8.18\times10^{-4}$. Both remain below
the validity threshold of one for every sensor configuration,
confirming the premise and consequence of
Theorem~\ref{thm:stochastic-L}.

For an information gap satisfying
\begin{equation}
\left\|
K_{\infty}^{(m)}
-
K_L^{(m)}
\right\|_2
\leq
0.05
\left\|
K_{\infty}^{(m)}
\right\|_2,
\label{eq:case3-finite-memory-criterion}
\end{equation}
the exact finite-memory length is $L=11$ for $m=2$ and $L=5$ for
$m\geq4$. These values measure convergence to each design's own
infinite-history information ceiling $K_{\infty}^{(m)}$, not exchange
with $K_{\mathcal{S}}$.

\subsection{Case 4: Nonlinear saturated coupled-map lattice}
\label{subsec:case4}

The nonlinear state $x_k\in\mathbb{R}^{24}$ evolves according to
\begin{equation}
x_{k+1}
=
F(x_k)
=
\rho U\tanh(\eta W x_k),
\label{eq:case4-dynamics}
\end{equation}
where $U$ and $W$ are fixed orthogonal mixing matrices and the
hyperbolic tangent acts componentwise. The parameters $\rho=0.68$ and
$\eta=1.35$ give the global derivative bound
\begin{equation}
\|DF(x)\|_2
\leq
q
:=
\rho\eta
=
0.918
<
1.
\label{eq:case4-contraction}
\end{equation}
Hence the admissible ball
$\mathcal{M}=\{x\in\mathbb{R}^{24}:\|x\|_2\leq1.5\}$ is forward
invariant, and the influence of remote observations decays
geometrically. The nonlinear saturation makes the delay Jacobian state
dependent, while the two mixing matrices couple all state directions.

The spatial reference observes all 24 state coordinates simultaneously
with noise standard deviation $\sigma_{\mathcal{S}}=0.50$. The temporal
designs measure nested coordinate sets with
$\sigma_{\mathcal{T}}=0.12$: $m=2$ uses coordinates $(1,13)$;
$m=4$ adds $(7,19)$; $m=6$ adds $(4,10)$; and $m=8$ adds
$(16,22)$. The required local substitution level is
$\alpha_{\star}=0.80$.

\begin{figure}[!t]
    \centering
    \includegraphics[width=\textwidth]{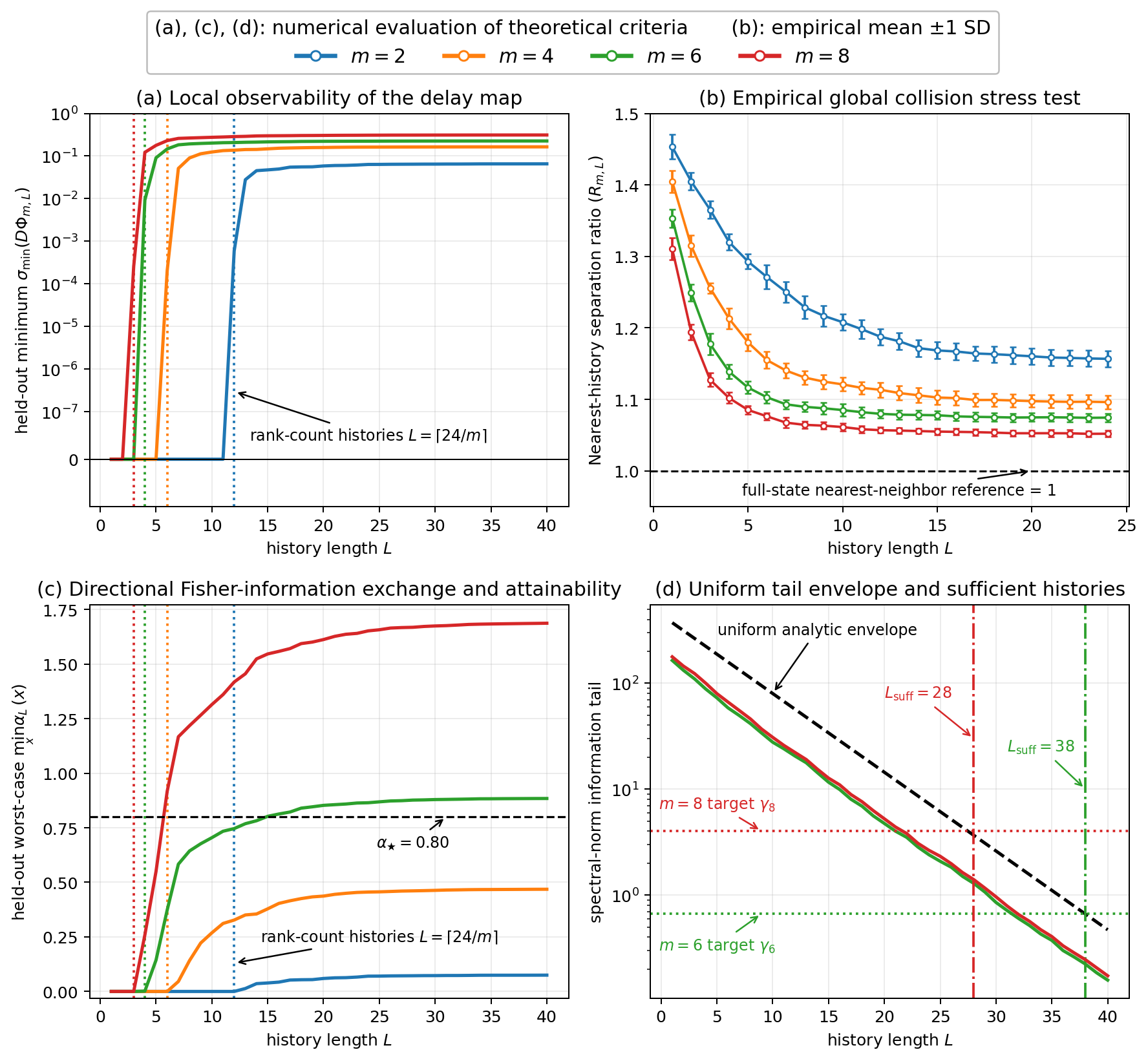}
    \caption{Case 4 results.
    (a) Minimum singular value of the exact delay Jacobian over
    independent held-out states; dotted vertical lines show the
    necessary rank-count histories.
    (b) Empirical nearest-history state separation relative to the
    full-state nearest-neighbor reference, reported as mean
    $\pm1$ standard deviation across independent state clouds.
    (c) Minimum held-out directional substitution factors and the
    target level.
    (d) Exact sampled information tails, sampled limiting margins,
    and the uniform analytic tail envelope used to obtain sufficient
    histories.}
    \label{fig:case4-nonlinear-exchange}
\end{figure}

Theorem~\ref{thm:global-nonlinear} requires the delay map to be globally
injective. Full column rank of its Jacobian establishes only local
observability, so Figure~\ref{fig:case4-nonlinear-exchange} examines
these properties separately. In
Figure~\ref{fig:case4-nonlinear-exchange}(a), the minimum
delay-Jacobian singular value becomes positive at the rank-count limits
$L=12$, $6$, $4$, and $3$ for $m=2$, $4$, $6$, and $8$,
respectively. Thus the necessary count $mL\geq24$ is attained without
additional rank loss on the held-out states.

This local rank result does not exclude a global collision, in which two
distinct states generate identical delay histories,
\begin{equation}
x_i\neq x_j,
\qquad
\Phi_{m,L}(x_i)
=
\Phi_{m,L}(x_j),
\label{eq:case4-global-collision}
\end{equation}
or a near-collision, for which their delay histories are very similar.
Figure~\ref{fig:case4-nonlinear-exchange}(b) therefore performs an
independent near-collision stress test. For each independent set of 420
sampled states, let $j_{\Phi}(i)$ denote the distinct state whose delay
history is closest to that of $x_i$, and let $j_x(i)$ denote the
distinct state closest to $x_i$ in the original state space. The
nearest-history separation ratio is defined as
\begin{equation}
R_{m,L}
=
\frac{
\operatorname{median}_i
\|x_i-x_{j_{\Phi}(i)}\|_2
}{
\operatorname{median}_i
\|x_i-x_{j_x(i)}\|_2
}.
\label{eq:case4-separation-ratio}
\end{equation}
The numerator is the median state-space separation between states
selected as nearest neighbors from their delay histories. The
denominator is the median nearest-neighbor separation obtainable when
the full state vectors are directly available. Because $j_x(i)$
minimizes the state-space distance, $R_{m,L}\geq1$. A value close to
one indicates that delay histories distinguish the sampled states
almost as well as direct access to the full states, whereas a larger
value indicates that similar histories can correspond to more widely
separated states.

Figure~\ref{fig:case4-nonlinear-exchange}(b) reports the mean and one
standard deviation of $R_{m,L}$ across 10 independently sampled sets of
420 states. At $L=24$, the mean ratios are $1.157$, $1.096$, $1.074$,
and $1.052$ for $m=2$, $4$, $6$, and $8$, respectively. The observed
decrease toward one as $m$ and $L$ increase, together with the positive
minimum singular values of the delay Jacobian, provides evidence that
delay histories effectively distinguish typical states within the
sampled sets. This finite-sample result supports, but does not prove,
global injectivity over the entire continuous state domain.

Figure~\ref{fig:case4-nonlinear-exchange}(c) tests the directional
Fisher-information condition in
Theorem~\ref{thm:nonlinear-exchange}. The minimum sampled
infinite-history substitution factors are $0.063$, $0.439$, $0.967$,
and $1.803$. The $m=2$ and $m=4$ designs therefore fail the $0.80$
target at explicitly sampled states, despite satisfying the rank count.
This demonstrates that local observability is necessary but
insufficient for quantitative information exchange. For $m=6$ and
$m=8$, the held-out worst-case curves first cross the target at $L=15$
and $L=6$, respectively.

Figure~\ref{fig:case4-nonlinear-exchange}(d) compares the largest
Fisher-information tails over the calibration states with the uniform
contraction envelope $\phi_m(L)$ used to apply
Theorem~\ref{thm:nonlinear-exchange}. Combining this envelope with the
sampled limiting margins gives sufficient-length estimates of $L=38$
for $m=6$ and $L=28$ for $m=8$. These exceed the direct held-out
crossings because the analytic envelope does not use the detailed
trajectory or sensor geometry. Although the tail envelope is uniform
over $\mathcal{M}$, the limiting margins are established only on the
calibration states. The reported lengths are therefore
sample-conditioned numerical estimates, not proofs of uniform exchange
over the entire admissible ball.

\begin{figure}[!t]
    \centering
    \includegraphics[width=\textwidth]{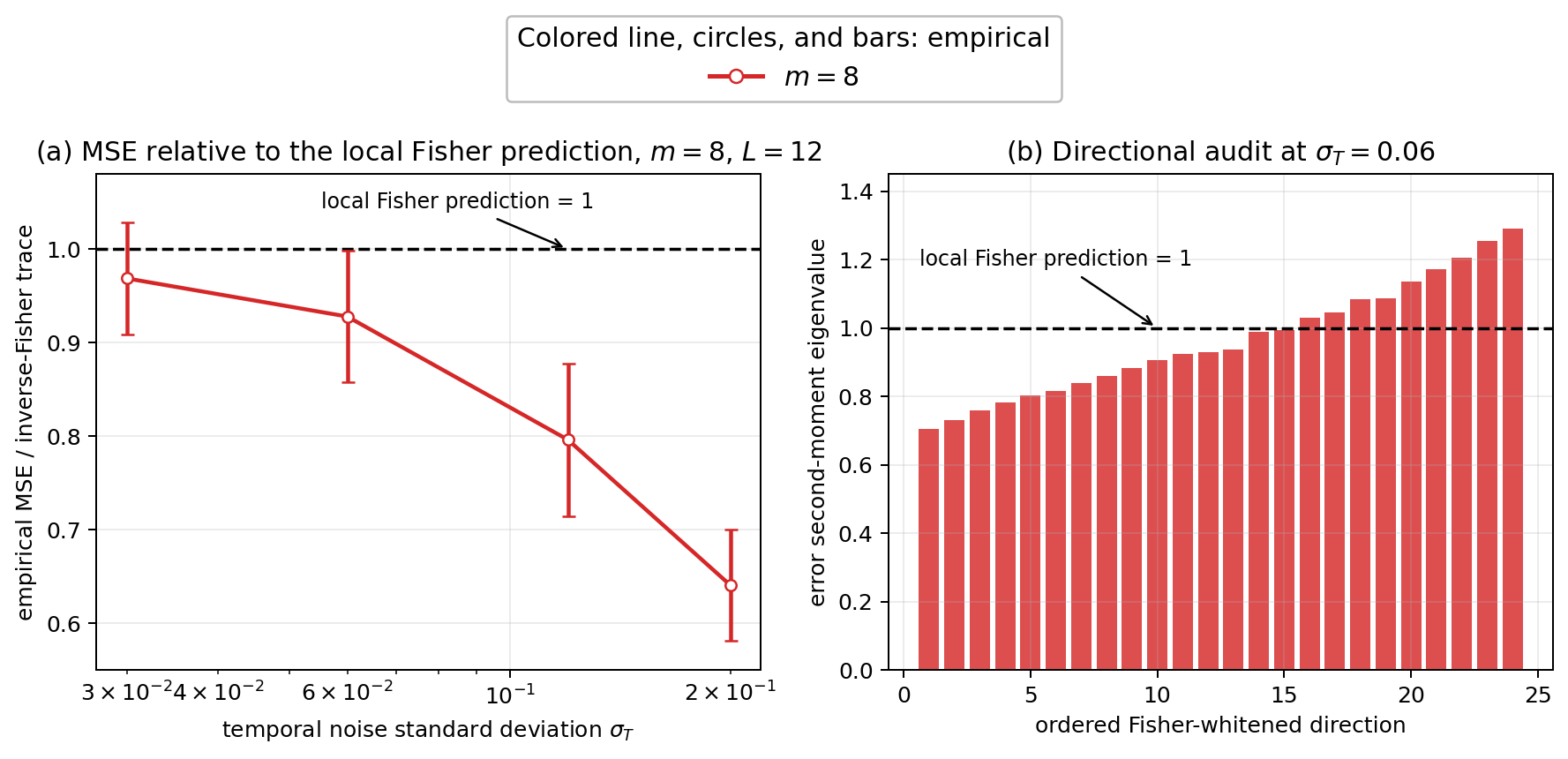}
    \caption{Independent local Fisher verification for the eight-sensor
    design at history length 12.
    (a) Empirical reconstruction MSE relative to the inverse-Fisher
    trace over four noise levels.
    (b) Eigenvalues of the pooled Fisher-whitened empirical error
    second-moment matrix at temporal-noise standard deviation $0.06$.}
    \label{fig:case4-fisher-validation}
\end{figure}

Figure~\ref{fig:case4-fisher-validation} tests a statistical consequence
of the local Fisher theory using independent noisy histories and a
Gauss--Newton estimator. The test uses $m=8$, $L=12$, and 18 additional
held-out anchor states that are not used in the preceding calculations.
The exchange calculations in Figure~\ref{fig:case4-nonlinear-exchange}
use $\sigma_{\mathcal{T}}=0.12$ as specified above; for this separate
local-reconstruction test, $\sigma_{\mathcal{T}}$ is varied over four
noise levels. At $\sigma_{\mathcal{T}}=0.03$ and $0.06$, the empirical
MSE divided by the inverse-Fisher trace is approximately $0.97$ and
$0.93$, respectively; the one-standard-deviation intervals reach or
closely approach the small-noise prediction of one. At larger noise
levels, projection onto $\mathcal{M}$, nonlinear estimator bias, and
departures from the local convergence basin move the reconstruction
outside the approximately unbiased small-noise regime. These deviations
delimit the regime in which the local Fisher approximation is accurate
rather than contradicting the theory.

The directional test in
Figure~\ref{fig:case4-fisher-validation}(b) provides a stronger check
than the scalar MSE ratio. At $\sigma_{\mathcal{T}}=0.06$, the
eigenvalues of the Fisher-whitened empirical error second-moment matrix
range from $0.70$ to $1.29$ around the asymptotic theoretical value of
one.

Across the four cases, the supplementary comparisons verify agreement between the empirical and theoretical information operators and posterior covariances, identify the physical structure of the weakest-informed directions, and test
the nonlinear Fisher prediction at the matrix level. These checks complement the main results across this Section by showing that the observed agreement extends across individual state directions.

\section{Conclusion}
\label{sec:conclusion}

This work establishes a quantitative, direction-resolved framework for
determining when a sparse temporal measurement history can substitute for a
specified instantaneous spatial sensor array. Its principal contribution is to
formulate space--time information exchange through the operator inequality
$K_{\mathcal{T}}(L)\succeq\alpha K_{\mathcal{S}}$ of
Eq.~\ref{eq:directional-exchange}, which requires the temporal measurement
history to retain at least a prescribed fraction $\alpha$ of the
spatial-reference information in every state direction.
Theorem~\ref{thm:unified-exchange} formalizes this comparison by bringing prior variability, sensor geometry,
measurement and process noise, dynamical space--time coupling, and temporal
sampling into a single criterion. The framework thus complements
observability, dynamical sampling, Bayesian experimental design, and
finite-memory filtering by connecting these ideas directly to posterior
uncertainty and the temporal context required for full-state reconstruction.

The central conclusion from this work is that space--time information interchangeability is a
directional and reference-dependent property. A temporal history may match a
spatial array in total information or average reconstruction error while
remaining substantially less informative in a particular state direction.
This distinction separates three questions that are often conflated: whether
the measurements distinguish the relevant states, whether the limiting
temporal information is sufficient relative to the spatial reference, and how
much history is required when that information level is attainable. Increasing
the history length cannot overcome an inadequate information ceiling.
Attainability should therefore be assessed before seeking a finite exchange
length.

The specialized results show how the governing dynamics determine this
exchange. Modal mixing, spectral separation, and transport can expose state
components that are not measured directly, whereas temporal aliasing can leave
distinct components indistinguishable. Diffusion, reaction, dissipation, and
outflow can weaken transported information before it reaches a sensor. The
dynamics thus play a dual role: they can reveal otherwise inaccessible spatial
information while also limiting how much of it survives. In stochastic
systems, continual process noise produces a finite information ceiling and
causes the influence of remote observations to decay. In nonlinear systems,
global delay-map distinguishability and local Fisher information provide
complementary requirements: distinct states must generate distinct histories,
and nearby states must remain sufficiently separated under measurement noise.

The numerical experiments support these conclusions at both scalar and
operator levels. The deterministic experiments recover the predicted
directional exchange behavior and demonstrate failure under temporal aliasing
or an insufficient limiting information level. The transport case links the
weakest-informed directions to physical regions whose influence is attenuated
before reaching the sensors. The stochastic experiment confirms posterior
saturation and finite-memory convergence while distinguishing convergence to a
temporal design's own ceiling from successful exchange with the spatial
reference. The nonlinear experiment shows that sufficiently informative delay
histories attain full Jacobian rank and separate the sampled states, while
their directional Fisher information approaches its limiting value and
accurately predicts reconstruction uncertainty in the small-noise regime. The
matrix-resolved comparisons further demonstrate agreement between the
theoretical and empirical results in their directional and componentwise
structure.

We would like to emphasize that the conclusions remain subject to the assumptions of each setting. The
linear-Gaussian comparisons are exact, whereas the nonlinear Fisher results are
local and most reliable in the small-noise regime. Sampled collision tests
provide evidence of global distinguishability but do not prove injectivity
over a continuous state domain, and sampled nonlinear margins should be
interpreted as numerical certificates over the evaluated states. With these notes in mind, the framework provides a practical basis for designing
temporal sensing systems in plasma diagnostics, fluid and climate monitoring,
structural sensing, and other high-dimensional dynamical applications.
Temporal history can reduce spatial sensor coverage only when the dynamics
expose the required state directions with sufficient strength before noise,
dissipation, and unresolved forcing erase their information. The theory
developed here makes this principle quantitatively testable.

A practical consequence is that measurement sufficiency can be separated from
state-estimator performance. The information operators characterize what the
temporal measurements allow one to infer under the specified dynamics, sensor
configuration, noise statistics, and prior state distribution. If they show
that the information available from the temporal sensor array is insufficient,
unsuccessful reconstruction reflects a fundamental limitation of the
measurement information rather than the reconstruction algorithm. If the
measurements are theoretically sufficient but the achieved state-reconstruction
error exceeds the corresponding posterior-error limit, the shortfall reflects
the state-estimator model's inability to extract all available information.
Low average reconstruction error, however, does not establish that every
dynamically relevant direction is adequately constrained. For SHRED and
related machine-learning-based decoder architectures, the framework can
hence guide sensor placement and history-length selection, diagnose whether
reconstruction failure is information- or estimator-limited, and provide an
information-limited reference against which decoder error can be assessed.

\section*{Data and code availability}

All data used in this study are generated synthetically from the dynamical and measurement models specified in the manuscript. The codes used to generate the data and required to reproduce the numerical experiments will be deposited in a public repository upon publication.

\appendix

\section{Empirical estimation and verification procedures}
\label{app:empirical-procedures}

For each deterministic linear sensing design
$a\in\{\mathcal{S},\mathcal{T}\}$, representing either a simultaneous
spatial array or a temporal history of length $L$, samples
$\{X^{(i)},Y_a^{(i)}\}_{i=1}^{N}$ are generated from the specified
dynamical and measurement models. The samples are divided into
independent training and test sets. A state-estimator model
\begin{equation}
f_a:Y_a\mapsto\widehat{X}_a
\label{eq:empirical-estimator}
\end{equation}
is fitted separately for the temporal and reference spatial sensor
arrangements using the training measurement--state pairs. For each
design, the estimator parameters are determined by minimizing the
empirical mean-square reconstruction error,
\begin{equation}
\widehat{\theta}_a
=
\arg\min_{\theta}
\frac{1}{N_{\mathrm{train}}}
\sum_{i=1}^{N_{\mathrm{train}}}
\left\|
X^{(i)}
-
f_{\theta}\bigl(Y_a^{(i)}\bigr)
\right\|_2^2.
\label{eq:empirical-estimator-training}
\end{equation}

For each test sample, define the reconstruction error
\begin{equation}
E_a^{(i)}
=
X^{(i)}
-
f_a\bigl(Y_a^{(i)}\bigr).
\label{eq:empirical-reconstruction-error}
\end{equation}
The empirical reconstruction-error matrix is
\begin{equation}
\widehat{M}_a
=
\frac{1}{N_{\mathrm{test}}}
\sum_{i=1}^{N_{\mathrm{test}}}
E_a^{(i)}E_a^{(i)\mathsf{T}}.
\label{eq:empirical-error-matrix}
\end{equation}
Strictly, $\widehat{M}_a$ is an error second-moment matrix rather than a
centered covariance matrix. This definition includes any systematic
state-estimator bias. When the fitted estimator approximates the
conditional mean,
\begin{equation}
f_a(Y_a)
\approx
\mathbb{E}[X\mid Y_a],
\label{eq:empirical-conditional-mean}
\end{equation}
the empirical error matrix estimates the Bayes posterior-error matrix.
In the jointly Gaussian setting,
\begin{equation}
\widehat{M}_a
\approx
P_a,
\label{eq:empirical-posterior-approximation}
\end{equation}
where $P_a$ is the theoretical posterior covariance defined in
Eq.~\ref{eq:design-posterior-covariance}.

Let $\widehat{P}_a$ denote the covariance used for whitening. In the two
deterministic cases, this is the held-out empirical estimate of the prior
covariance $P$. In the stochastic case, the exact stationary covariance
$P$ defined in Eq.~\ref{eq:case3-stationary-covariance} is used instead.
The raw empirical information operator is
\begin{equation}
\widehat{K}^{\,\mathrm{raw}}_a
=
\widehat{P}_a^{1/2}
\widehat{M}_a^{-1}
\widehat{P}_a^{1/2}
-
I.
\label{eq:empirical-information-operator-raw}
\end{equation}
For the numerical comparisons,
$\widehat{K}^{\,\mathrm{raw}}_a$ is symmetrized and projected onto the
positive-semidefinite cone. Eigenvalues not exceeding
$10^{-10}\max\{1,\|\widehat{K}^{\,\mathrm{raw}}_a\|_2\}$ are set to
zero. The resulting matrix is denoted by $\widehat{K}_a$.

For the jointly Gaussian experiments, the empirical mutual-information
estimate used in the numerical comparisons is
\begin{equation}
\widehat{I}(X;Y_a)
=
\frac{1}{2}
\log\det\bigl(I+\widehat{K}_a\bigr).
\label{eq:empirical-mutual-information}
\end{equation}
Before the numerical positive-semidefinite projection, the corresponding
covariance identity is
\[
\frac{1}{2}
\log
\frac{\det\widehat{P}_a}{\det\widehat{M}_a}.
\]
This scalar quantity measures total uncertainty reduction, while the
information operator retains its directional distribution. As shown in
Proposition~\ref{prop:scalar-mutual-information}, mutual information
alone is insufficient for directional exchange.

The empirical temporal substitution factor is calculated from the
estimated temporal and spatial operators. When
$\widehat{K}_{\mathcal{S}}\succ0$,
\begin{equation}
\widehat{\alpha}_L
=
\lambda_{\min}
\left(
\widehat{K}_{\mathcal{S}}^{-1/2}
\widehat{K}_{\mathcal{T}}(L)
\widehat{K}_{\mathcal{S}}^{-1/2}
\right).
\label{eq:empirical-substitution-factor}
\end{equation}
If $\widehat{K}_{\mathcal{S}}$ is singular, the inverse-square-root
formula is not used. Instead,
\begin{equation}
\widehat{\alpha}_L
=
\sup
\left\{
\alpha\geq0:
\widehat{K}_{\mathcal{T}}(L)
-
\alpha\widehat{K}_{\mathcal{S}}
\succeq0
\right\}.
\label{eq:empirical-substitution-factor-singular}
\end{equation}
For a prescribed target $\alpha_{\star}$, this is tested directly
through
\begin{equation}
\lambda_{\min}
\left(
\widehat{K}_{\mathcal{T}}(L)
-
\alpha_{\star}\widehat{K}_{\mathcal{S}}
\right)
\geq0.
\label{eq:empirical-margin-test}
\end{equation}
The singular-reference comparison is performed in the full state space,
rather than only on
$\operatorname{range}(\widehat{K}_{\mathcal{S}})$, so that mixed
directions involving its null space are retained. This full-space
margin test is used in Case 3.

For a prescribed exchange level $\alpha_{\star}$, the empirical
exchange length is
\begin{equation}
\widehat{L}_{\alpha_{\star}}
=
\min
\left\{
L:
\widehat{\alpha}_L\geq\alpha_{\star}
\right\}.
\label{eq:empirical-exchange-length}
\end{equation}

When the theoretical temporal operator has a finite information ceiling
$K_{\mathcal{T},\infty}$, its empirical counterpart is approximated by
increasing $L$ until $\widehat{K}_{\mathcal{T}}(L)$ and
$\widehat{I}(X;Y_{\mathcal{T},L})$ reach a statistically stable plateau.
The resulting estimates are denoted by
$\widehat{K}_{\mathcal{T},\infty}$ and
$\widehat{I}_{\mathcal{T},\infty}$.

The empirical quantities
$\widehat{M}_a$, $\widehat{K}_a$, $\widehat{\alpha}_L$, and
$\widehat{L}_{\alpha_{\star}}$ are compared with the corresponding
theoretical posterior covariances, information operators, substitution
factors, and exchange lengths,
$P_a$, $K_a$, $\alpha_L$, and $L_{\alpha_{\star}}$, respectively,
which are introduced and developed in
Sections~\ref{sec:information-operators}--\ref{sec:stochastic}.
Agreement is assessed over several independent Monte Carlo repetitions
and reported with sampling uncertainty. Directional quantities are also
reported because agreement in trace, mutual information, or total
mean-square error does not guarantee agreement in the least-informed
state direction.

\paragraph{Estimator choice and interpretation.}
In the two deterministic cases, a separate multivariate ridge regression
(linear regression with a small regularization term that improves
numerical stability) is fitted for every sensor configuration and
history length. The ridge coefficient is $10^{-8}$ times the mean
diagonal entry of the centered training-measurement Gram matrix. For
each temporal sensor configuration, the coefficient obtained for the
maximum history is used for all shorter history prefixes.

Case 1 generates the training and test samples explicitly. In Case 2,
the equivalent Gaussian sample statistics are generated directly to
avoid materializing the much larger state--history sample arrays. For a
jointly Gaussian vector, the sample mean and centered scatter matrix
have independent normal and Wishart distributions, respectively.
Drawing these sufficient statistics and fitting the ridge estimator from
them is algebraically equivalent to fitting and evaluating it on the
corresponding explicit Gaussian samples.

In the stochastic case, the finite-window Gaussian posterior is
calculated directly. Its finite-sample variability is generated from the Wishart distribution,
which gives the exact sampling distribution of the zero-mean Gaussian
residual second-moment matrix used here.

In the nonlinear experiment, the Jacobian-rank, collision,
directional-information, and tail tests do not require an estimator. The
separate local reconstruction test uses a Gauss--Newton estimator, an
iterative nonlinear least-squares method that finds the state whose
predicted measurement history most closely matches the observed
history. Although this estimator is not generally identical to the
nonlinear conditional mean, it provides a direct test of the local
Fisher prediction in the small-noise regime. The reconstruction result
is interpreted as a local statistical check, while the principal
nonlinear verification is provided by the Jacobian, collision, and
information-operator tests.

\paragraph{Nonlinear verification route.}
At each calibration or held-out state, the temporal delay map and its
Jacobian are evaluated directly. The smallest Jacobian singular value
measures local rank; nearest-history searches over independent state
clouds test sampled global distinguishability; and the temporal and
spatial Fisher operators are formed from the corresponding Jacobians
and noise covariances to calculate directional substitution factors and
information tails. The separate Gauss--Newton reconstructions produce
empirical error second-moment matrices for comparison with the
inverse-Fisher prediction in the small-noise regime.

\clearpage
\phantomsection
\label{sec:supplementary-material}
\section*{Supplementary Material: Matrix and directional comparisons}

\setcounter{figure}{0}
\renewcommand{\thefigure}{S\arabic{figure}}
\setcounter{equation}{0}
\renewcommand{\theequation}{S\arabic{equation}}

This Supplement reports full-matrix and weakest-direction diagnostics
supporting the comparisons in Section~\ref{sec:verification}. Where
information-operator residuals are shown, the plotted matrix is
\begin{equation}
\Delta K_{\mathcal{T}}(L)
=
\frac{
\widehat{K}_{\mathcal{T}}(L)
-
K_{\mathcal{T}}(L)
}{
\left\|K_{\mathcal{T}}(L)\right\|_{F}
},
\label{eq:supp-information-residual}
\end{equation}
and the reported relative error is
\begin{equation}
e_K
=
\frac{
\left\|
\widehat{K}_{\mathcal{T}}(L)
-
K_{\mathcal{T}}(L)
\right\|_{F}
}{
\left\|K_{\mathcal{T}}(L)\right\|_{F}
}.
\label{eq:supp-information-relative-error}
\end{equation}
The corresponding posterior-error-matrix relative error is
\begin{equation}
e_M
=
\frac{
\left\|
\widehat{M}_{\mathcal{T}}(L)
-
P_{\mathcal{T}}(L)
\right\|_{F}
}{
\left\|P_{\mathcal{T}}(L)\right\|_{F}
},
\label{eq:supp-posterior-relative-error}
\end{equation}
where $\widehat{M}_{\mathcal{T}}(L)$ is the empirical error
second-moment matrix and $P_{\mathcal{T}}(L)$ is the corresponding
theoretical posterior covariance, consistent with the definitions in
Appendix~\ref{app:empirical-procedures}.

\subsection*{Case 1: Multimodal information-operator comparison}

The residuals in Figure~\ref{fig:supp-case1-residuals} show no dominant
unresolved coordinate direction. Together with the small errors in
$\alpha_L$, this confirms that the empirical exchange crossings reproduce
the full information-operator comparison. The corresponding
posterior-matrix relative errors $e_M$ are $0.137\%$, $0.140\%$,
$0.166\%$, and $0.142\%$ for $m=1,2,3,$ and $4$, respectively. The
exchange crossings are therefore not artifacts of compensating
directional errors concealed by agreement in mutual information or
normalized MSE.

\begin{figure}[!htbp]
    \centering
    \includegraphics[width=\textwidth]
    {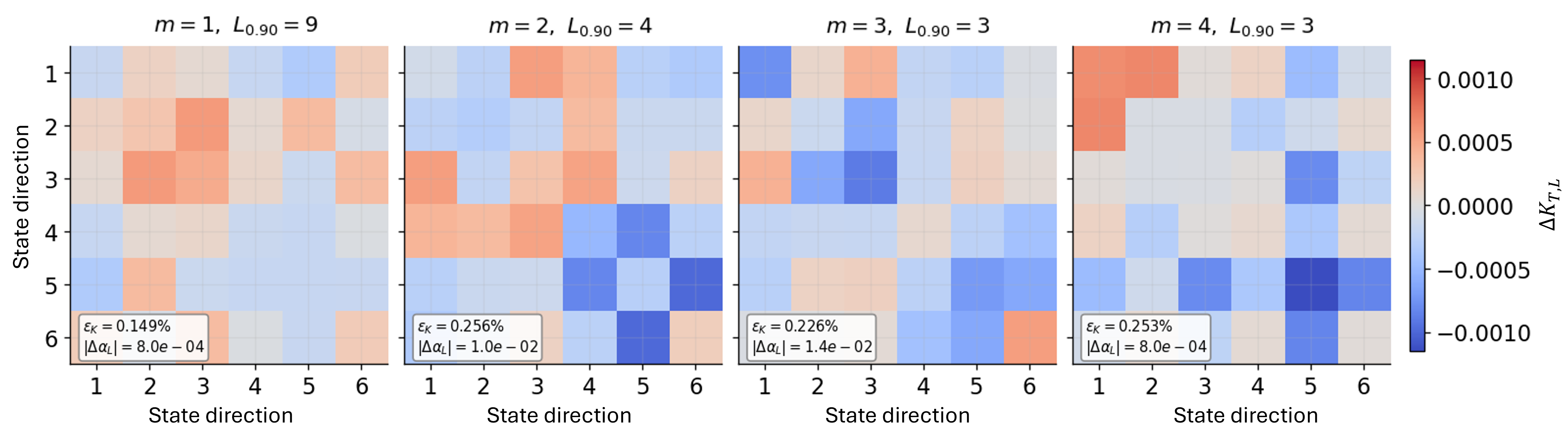}
    \caption{Normalized temporal information-operator residuals at the
    exact exchange length for each temporal sensor count in Case 1.
    The inset reports the relative operator error and the absolute
    substitution-factor error; the corresponding posterior-matrix
    errors are reported in the text.}
    \label{fig:supp-case1-residuals}
\end{figure}

\subsection*{Case 2: Transport weakest directions and full-matrix comparison}

For each sensor count $m$, let $v_{\min}^{(m)}$ denote the generalized
eigenvector associated with the smallest ceiling substitution factor
$\alpha_{\infty}^{(m)}$:
\begin{equation}
K_{\infty}^{(m)}v_{\min}^{(m)}
=
\alpha_{\infty}^{(m)}
K_{\mathcal{S}}v_{\min}^{(m)}.
\label{eq:supp-case2-weakest-direction}
\end{equation}
This vector represents the state-perturbation direction for which the
limiting temporal measurements provide the least information relative
to the spatial reference. Figure~\ref{fig:supp-case2-weakest-directions}
shows the spatial distribution of the normalized weakest-direction
amplitude. These patterns give a physical interpretation of the
directional information ceiling. Perturbations originating far from a
sensor must travel farther before being measured and are weakened by
diffusion, reaction, and outflow during transport. These perturbations
form the least-informed state directions. Adding sensors reduces the
distance from each part of the domain to an observation point, confines
the weakest directions to shorter intervals between sensors, and
improves the limiting information in those directions.

\begin{figure}[!htbp]
    \centering
    \includegraphics[width=\textwidth]
    {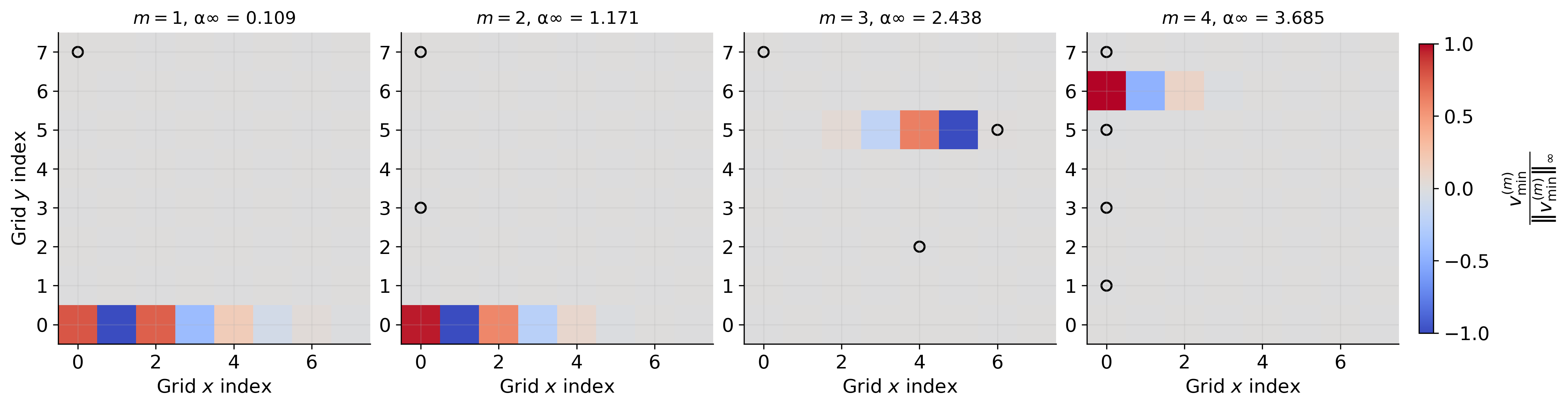}
    \caption{Physical-space representations of the generalized
    eigenvectors associated with the smallest limiting substitution
    factor in Case 2. Hollow circles mark coordinates observed by each
    temporal sensor array; strong red or blue values identify where the
    weakest direction is concentrated.}
    \label{fig:supp-case2-weakest-directions}
\end{figure}

The small full-matrix residuals shown in
Figure~\ref{fig:supp-case2-residuals} confirm that both the one-sensor
impossibility result and the exchange crossings for $m\geq2$ arise from
the predicted directional information geometry rather than estimation
error. Figures~\ref{fig:supp-case2-weakest-directions} and
\ref{fig:supp-case2-residuals} thus support the weakest-direction and
limiting-ceiling criteria underlying the transport results.

\begin{figure}[!htbp]
    \centering
    \includegraphics[width=\textwidth]
    {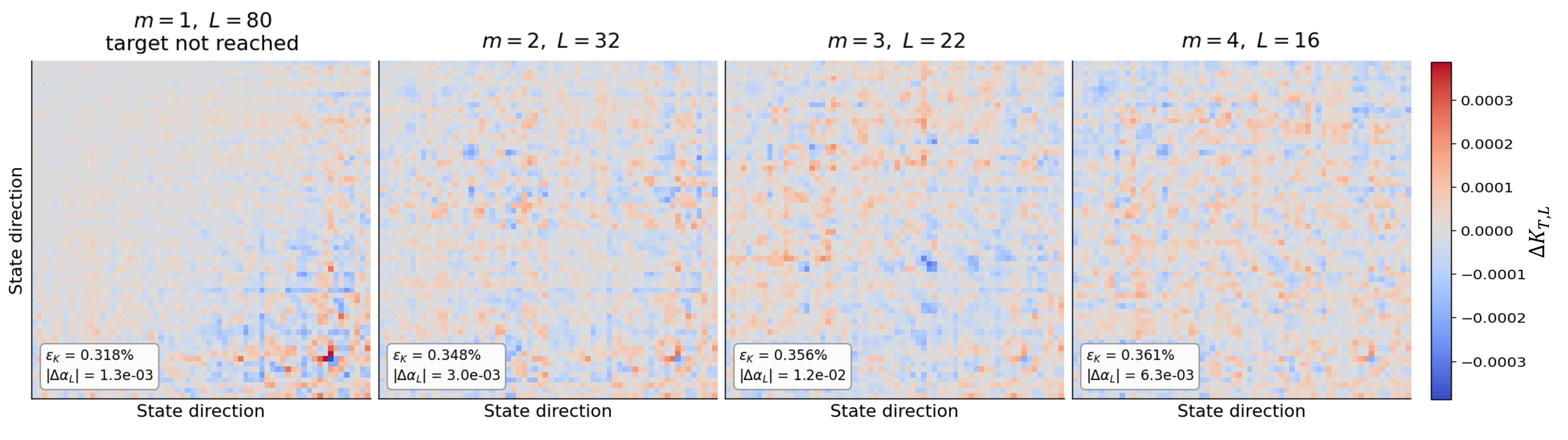}
    \caption{Normalized temporal information-operator residuals for
    Case 2. The one-sensor design is evaluated at the maximum history
    shown because it never reaches the target; the other designs are
    evaluated at their exact exchange lengths. Insets report relative
    operator and substitution-factor errors.}
    \label{fig:supp-case2-residuals}
\end{figure}

\subsection*{Case 3: Stochastic full-operator comparison}

The full-matrix comparison in Figure~\ref{fig:supp-case3-operator}
confirms that the Monte Carlo calculation reproduces the
direction-resolved finite-window information and posterior covariance,
rather than only their normalized traces. It supports the full-operator
calculations underlying the stochastic information ceiling and
finite-memory results.

\begin{figure}[!htbp]
    \centering
    \includegraphics[width=\textwidth]
    {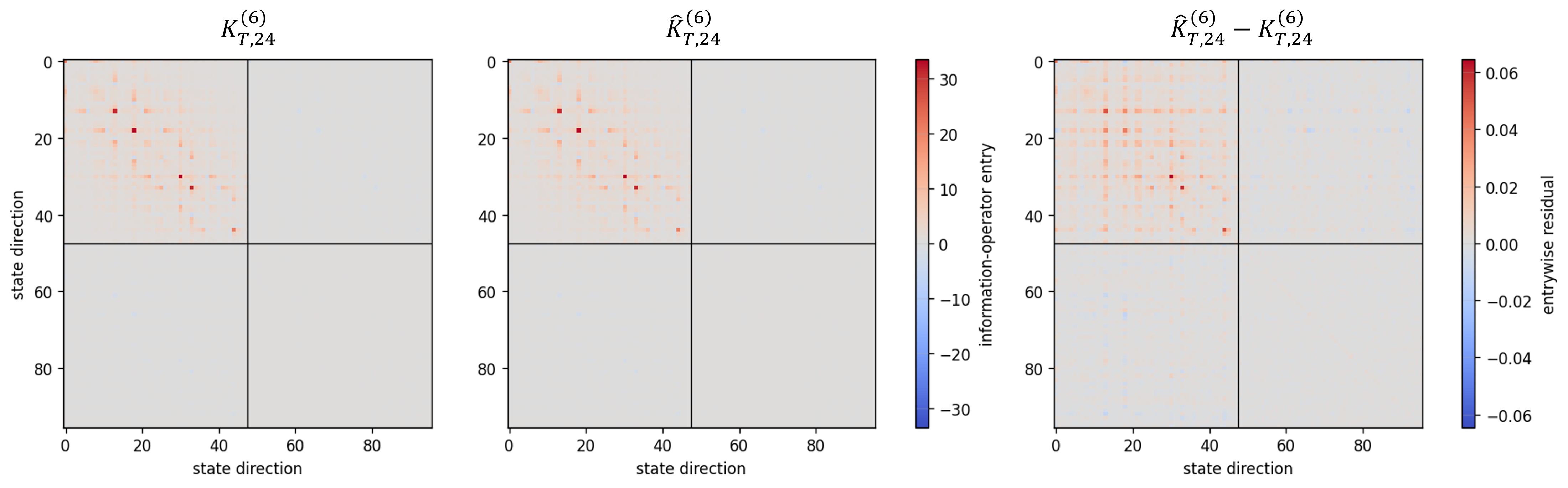}
    \caption{Theoretical information operator, empirical mean, and
    entrywise residual for the stochastic six-sensor design at history
    length $24$. Horizontal and vertical lines separate the
    streamwise-streak and cross-stream-roll components, revealing the
    four physical information blocks.}
    \label{fig:supp-case3-operator}
\end{figure}

\subsection*{Case 4: Nonlinear Fisher-whitened error comparison}

The $16.8\%$ difference in Figure~\ref{fig:supp-case4-fisher} closely
matches the expected $16.7\%$ finite-sample scale, while the residual
shows no obvious structured off-diagonal pattern. This is consistent
with the small-noise error prediction of the local Fisher theory and
provides no evidence of a systematic directional discrepancy. This
test does not, however, verify the uniform nonlinear exchange condition
of Theorem~\ref{thm:nonlinear-exchange}, which requires a positive
limiting information margin at every state in the admissible set
$\mathcal{M}$.

\begin{figure}[!htbp]
    \centering
    \includegraphics[width=\textwidth]
    {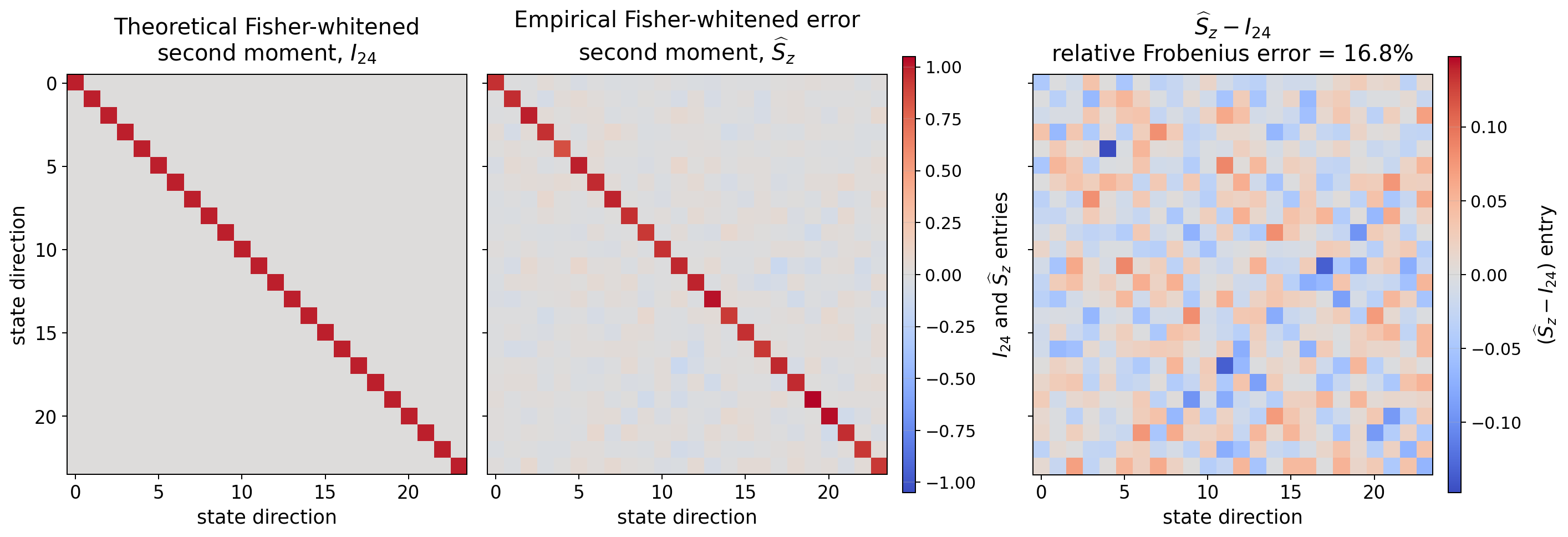}
    \caption{Theoretical local Fisher prediction, empirical
    Fisher-whitened error second-moment matrix, and their residual for
    the nonlinear eight-sensor design at history length $12$ and
    temporal-noise standard deviation $0.06$. The observed relative
    Frobenius difference is consistent with the expected finite-sample
    scale.}
    \label{fig:supp-case4-fisher}
\end{figure}
\FloatBarrier

\bibliographystyle{unsrtnat}
\bibliography{references}

\end{document}